\newcommand{\techRep}{true} 
\newcommand{\iftechrep}{\ifthenelse{\equal{\techRep}{true}}}
\newtheorem{theorem}{Theorem}
\newtheorem{lemma}[theorem]{Lemma}
\newtheorem{corollary}[theorem]{Corollary}
\newtheorem{definition}[theorem]{Definition}
\newtheorem{proposition}[theorem]{Proposition}
\newtheorem{remark}[theorem]{Remark}
\tikzset{LMC style/.style={>=stealth',every edge/.append style={thick},every state/.style={minimum size=20,inner sep=0}}}
\newcommand{\ind}{\mbox{}\hspace{4mm}}
\newcommand{\indd}{\ind \ind}
\newcommand{\inddd}{\indd \ind}
\newcommand{\co}{\mathit{co}}%
\newcommand{\con}{\mathit{con}}
\newcommand{\F}{\mathcal{F}}
\newcommand{\fs}[1]{f^{(#1)}}%
\newcommand{\li}{\mathit{li}}%
\newcommand{\M}{\mathcal{M}}
\newcommand{\mi}{\mathit{min}}
\newcommand{\N}{\mathbb{N}}
\renewcommand{\Pr}{\textup{Pr}}
\newcommand{\Q}{\mathbb{Q}}
\newcommand{\R}{\mathbb{R}}
\newcommand{\Run}{\mathit{Run}}
\newcommand{\supp}{\mathit{supp}}
\newenvironment{qtheorem}[1]{%
{\par\medskip\noindent\bf Theorem #1.}
\begin{itshape}%
}{%
\end{itshape}%
\smallskip
}
\newenvironment{qlemma}[1]{%
{\par\medskip\noindent\bf Lemma #1.}%
\begin{itshape}%
}{%
\end{itshape}%
\smallskip
}
\newenvironment{qproposition}[1]{%
{\par\medskip\noindent\bf Proposition #1.}
\begin{itshape}%
}{%
\end{itshape}%
\smallskip
}
\begin{document}

\setlength{\pdfpageheight}{\paperheight}
\setlength{\pdfpagewidth}{\paperwidth}

 \conferenceinfo{CSL-LICS 2014}{July 14--18, 2014, Vienna, Austria}
 \copyrightyear{2014}
 \copyrightdata{978-1-4503-2886-9}
 \doi{2603088.2603099} 

\title{On the Total Variation Distance of Labelled Markov Chains}
\subtitle{}

\authorinfo{Taolue Chen}
           {Middlesex University London}
           {t.l.chen@mdx.ac.uk}
\authorinfo{Stefan Kiefer}
           {University of Oxford}
           {stekie@cs.ox.ac.uk}

\maketitle

\begin{abstract}
 Labelled Markov chains (LMCs) are widely used in probabilistic verification, speech recognition, computational biology, and many other fields.
Checking two LMCs for equivalence is a classical problem subject to extensive studies,
while the total variation distance provides a natural measure for the ``inequivalence'' of two LMCs:
it is the maximum difference between probabilities that the LMCs assign to the same event.

In this paper we develop a theory of the total variation distance between two LMCs, with emphasis on the algorithmic aspects: (1) we provide a polynomial-time algorithm for determining whether two LMCs have distance 1, i.e., whether they can almost always be distinguished;
(2) we provide an algorithm for approximating the distance with arbitrary precision; and
(3) we show that the threshold problem, i.e., whether the distance exceeds a given threshold, is NP-hard and hard for the square-root-sum problem. We also make a connection between the total variation distance and \emph{Bernoulli convolutions}.
\end{abstract}

\category{G.3}{Probability and Statistics}{}
\category{D.2.4}{Software/Program Verification}{}

\terms
Theory

\keywords
Labelled Markov Chains, Total Variation Distance

\section{Introduction}

A (discrete-time, finite-state) \emph{labelled Markov chain (LMC)} has a finite set~$Q$ of states
 and for each state a probability distribution over its outgoing transitions.
Each outgoing transition is labelled with a letter from a given finite alphabet~$\Sigma$, and leads to a target state.
Figure~\ref{fig-intro-irrational} depicts two LMCs.
\begin{figure}[!h]
\begin{center}
\scalebox{0.95}{
\begin{tikzpicture}[scale=2.5,LMC style]
\node[state] (q1) at (-1,0) {$q_1$};
\node[state] (q2) at (+1,0) {$q_2$};
\node[state] (r1) at ( 0,0) {$r_1$};
\node[state] (r2) at (+2,0) {$r_2$};
\path[->] (q1) edge node[above] {$\frac14 c$} (r1);
\path[->] (q2) edge node[above] {$\frac14 c$} (r2);
\path[->] (q1) edge [loop,out=110,in=70,looseness=20] node[above] {$\frac12 a$} (q1);
\path[->] (q1) edge [loop,out=250,in=290,looseness=20] node[below] {$\frac14 b$} (q1);
\path[->] (q2) edge [loop,out=110,in=70,looseness=20] node[above] {$\frac14 a$} (q2);
\path[->] (q2) edge [loop,out=250,in=290,looseness=20] node[below] {$\frac12 b$} (q2);
\path[->] (r1) edge [loop,out=110,in=70,looseness=20] node[above] {$1 c$} (r1);
\path[->] (r2) edge [loop,out=110,in=70,looseness=20] node[above] {$1 c$} (r2);
\end{tikzpicture}
}
\end{center}
\vspace{-2mm}
\caption{Two LMCs.}
\label{fig-intro-irrational}
\end{figure}
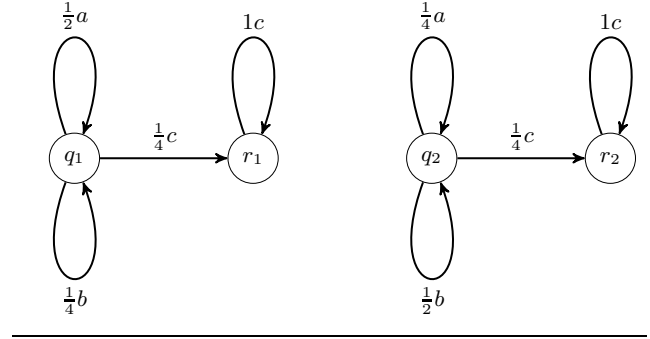
The semantics is as follows:
The chain starts in a given initial state (or in a random state according to a given initial distribution),
 picks a random transition according to the state's distribution over the outgoing transitions,
 outputs the letter of the transition,
 moves to the target state, and repeats.
In such a way, the chain produces a random infinite sequence of letters, i.e., a random infinite word.
We regard this infinite word as ``observable'' to the environment,
 whereas the infinite sequence of states remains ``internal'' to the chain.
Formally, an LMC defines a probability space whose samples are infinite words (also called \emph{runs} later) over~$\Sigma$. In \cite{GlabbeekSS95}, it is classified as a generative model.
LMCs appear as \emph{hidden Markov models} in speech recognition and in several areas of computational biology,
 cf.~\cite{LyngsoP02}.
LMCs, sometimes in the form of \emph{probabilistic automata} \cite{Rabin63},
 are also fundamental for modelling probabilistic systems.

Checking whether two LMCs (or, similarly, two probabilistic automata) are \emph{(language) equivalent}
 is a classical problem, going back to the seminal work of Sch\"{u}tzenberger~\cite{Schutzenberger} and Paz~\cite{Paz71}.
More recently, this problem was revisited, as various verification problems on probabilistic systems can be reduced to it (see, e.g., \cite{13KMOWW-LMCS}).
As a consequence, efficient polynomial-time algorithms and tools for equivalence checking
 have been developed \cite{CortesMRdistance,DoyenHenzingerRaskin,Cav11,13KMOWW-LMCS}.
If two systems are found to be \emph{not} equivalent, the question arises on \emph{how different} they are.
The \emph{distance} of two LMCs provides a measure for their difference,
with the extreme cases being distance~$0$ for equivalence and distance~$1$ for (almost-sure) distinguishability. 
The \emph{total variation distance}, which is a standard distance measure \cite{GibbsSuMetrics} between two probability distributions, yields a natural measure of the distance of two LMCs.
Given two probability distributions $\pi_1$ and~$\pi_2$ over the same \emph{countable} set $\Omega$, 
the total variation distance 
 is defined as
\begin{equation}
 d(\pi_1, \pi_2) := \max_{E \subseteq \Omega} |\pi_1(E)-\pi_2(E)| \,. \label{eq-intro-max}
\end{equation}
In words, $d(\pi_1, \pi_2)$ is the largest possible difference between probabilities that $\pi_1$ and~$\pi_2$ assign to the same event.
Furthermore, we have $d(\pi_1, \pi_2) = \pi_1(E) - \pi_2(E)$ for
\begin{equation}
 E = \{ r \in \Omega \mid \pi_1(r) \ge \pi_2(r) \} \;, \label{eq-intro-maximiser}
\end{equation}
so the event~$E$ is a maximizer in~\eqref{eq-intro-max}.
The total variation distance is---up to a factor of~$2$---equal
 to the \emph{$L_1$-norm} of the difference between $\pi_1$ and~$\pi_2$:
\[
 2 d(\pi_1, \pi_2) = \| \pi_1 - \pi_2 \|_1 := \sum_{x \in \Omega} | \pi_1(x) - \pi_2(x) |\,.
\]

When applying the total variance distance to LMCs, it should be emphasized that the sample space  $\Omega = \Sigma^\omega$
 (i.e.,  the set of infinite words over~$\Sigma$) is \emph{uncountable}.
Hence the maximum in the definition of (total variation) distance
 needs to be replaced by the supremum. Concretely, assume two LMCs $\M_1, \M_2$ with initial state distributions,
the LMCs assign each (measurable) event $E \subseteq \Sigma^\omega$ a probability $\pi_1(E)$ and $\pi_2(E)$, respectively.
So the (total variation) distance between $\M_1, \M_2$ is defined as
\begin{equation*}
 d(\pi_1, \pi_2) := \sup_{E \subseteq \Sigma^\omega} |\pi_1(E)-\pi_2(E)| \,.
\end{equation*}
It is not clear a priori if a maximizer event exists.
We will show later in this paper that it does exist.
In particular this means that $d(\pi_1, \pi_2) = 1$ holds if and only if there is an event~$E$ with $\pi_1(E) = 1$ and $\pi_2(E) = 0$.

While being an intriguing theoretical question, the study of the distance between LMCs also has practical implications.
For instance, 
in the verification of anonymity properties~\cite{Cav11,13KMOWW-LMCS}
 the following scenario is common: Two users are modelled as LMCs and leave a trace (i.e., emit a run).
An evil agent knows the two users, and sees a single trace. The agent wants to find out which of the two users has emitted the trace.
Clearly language equivalence (distance~$0$) of LMCs implies anonymity of the users.
If the distance is nonzero, one may ask if the agent can identify the users 
almost surely.
If the distance 
is~$1$, the agent succeeds with probability~$1$, 
because the agent can define an event~$E$ that occurs in the first LMC with probability~$1$, and in the second one with probability~$0$;
all the agent has to do is to check whether the given run belongs to~$E$. Conversely, if the distance is less than~$1$, the agent cannot almost-surely distinguish the users.
From this point of view, a distance less than~$1$ is a minimum requirement for some form of user anonymity, which could perhaps be called \emph{deniability}.

Another example is probabilistic model checking where 
computing the probability of certain events~$E$ is of central interest.
If the distance between some given LMCs is small (and known or bounded above),
 computing the probability of~$E$ in one of those chains may be enough
  for obtaining good bounds on the probability of~$E$ in the other chains.
This may lead to savings in the overall model-checking time.

\medskip

\noindent\textbf{Main Contributions.}
In this paper we develop a theory for the total variation distance between two LMCs. We pay special attention to the algorithmic and computational aspects of the problem. %
We make the following contributions:

\begin{enumerate}
\item[(1)] We demonstrate some basic properties of the total variation distance between two LMCs:
(a) the supremum in the definition can be ``achieved'', and we exhibit a maximizing event, although we show that
 the maximizing event is \emph{not} $\omega$-regular in general;
(b) the distance of two LMCs can be irrational even if all probabilities appearing in their description are rational.

\item[(2)] We study the \emph{qualitative} variant of the distance problem, i.e., to decide whether two LMCs have distance 1 or 0. The distance-0 problem amounts to the language equivalence problem for probabilistic automata, for which a polynomial-time algorithm exists. We provide a polynomial-time algorithm for the distance-$1$ problem.

\item[(3)] We study the \emph{quantitative} variant of the distance problem. In light of~(1), at best one can hope to \emph{approximate} the distance rather than to really \emph{compute} it (at least in the classical complexity theory framework). To this end, we provide an algorithm for approximating the distance with arbitrary precision. We also link the problem to \emph{Bernoulli convolutions} by providing an LMC where the distance of two states of this LMC is related to \emph{Bernoulli convolutions}, thus indicating the intricacy of the distance.

\item[(4)] We study the threshold problem, i.e., to decide whether the distance exceeds a given threshold. While leaving decidability of the problem open, we show that the problem is both NP-hard and hard for the square-root-sum problem.
\end{enumerate}

\noindent\textbf{Structure of the Paper.}
In Section~\ref{sec-prelim} we provide technical preliminaries.
In Section~\ref{sec-examples} we give two examples for LMCs and their distances.
In Section~\ref{sec-approx} we discuss two sequences that converge to the distance from below and from above,
 yielding an approximation algorithm.
In Section~\ref{sec-maximizing} we show that an event with maximum difference in probabilities always exists,
 and we exhibit such a ``witness'' event.
In Section~\ref{sec-irrational} we show that the distance can be irrational,
 and we give lower complexity bounds for the threshold problem. 
In particular, in Section~\ref{sec-function} we exhibit an LMC where the distance depends on the probabilities in the LMC
 in intricate ways, as witnessed by a connection to \emph{Bernoulli convolutions}. 
 In Section~\ref{sec-distance-1} we develop a polynomial-time algorithm for deciding whether two LMCs have distance~$1$.
In Section~\ref{sec-related-work} we discuss related work.
Finally, in Section~\ref{sec-conclusions} we offer some conclusions and highlight open problems.
Missing proofs can be found in \iftechrep{an appendix.}{in the full version of the paper~\cite{14CK-lics-report}.}

\section{Preliminaries} \label{sec-prelim}

We write $\N$ for the set of nonnegative integers.

Let $Q$ be a finite set.
By default we view \emph{vectors}, i.e., elements of $\R^Q$, as row vectors.
For a vector $\mu \in [0,1]^Q$ we write $|\mu| := \sum_{q \in Q} \mu(q)$ for its $L_1$-norm.
A vector $\mu \in [0,1]^Q$ is a \emph{distribution} (resp.\ \emph{subdistribution}) \emph{over~$Q$} if $|\mu| = 1$ (resp.\ $|\mu| \le 1$).
For $q \in Q$ we write $\delta_q$ for the (\emph{Dirac}) distribution over~$Q$ with $\delta_q(q) = 1$ and $\delta_q(r) = 0$ for $r \in Q \setminus\{q\}$.
For a subdistribution~$\mu$ we write $\supp(\mu) = \{q \in Q \mid \mu(q) > 0\}$ for its support.
Given two vectors $\mu_1, \mu_2 \in [0,1]^Q$ we write $\mu_1 \le \mu_2$ to say that $\mu_1(q) \le \mu_2(q)$ holds for all $q \in Q$.
We view elements of $\R^{Q \times Q}$ as \emph{matrices}.
A matrix $M \in [0,1]^{Q \times Q}$ is called \emph{stochastic} if each row sums up to one, i.e.,
 for all $q \in Q$ we have $\sum_{r \in Q} M(q,r) = 1$.

\begin{definition}
A \emph{labelled (discrete-time, finite-state) Markov chain (LMC)} is a tuple $\M = (Q, \Sigma, M)$ where
\begin{itemize}
 \item $Q$ is a finite set of states,
 \item $\Sigma$ is a finite alphabet of labels,
 and
 \item $M: \Sigma \to [0,1]^{Q \times Q}$ specifies the transitions,
  so that $\sum_{a \in \Sigma} M(a)$ is a stochastic matrix.
\end{itemize}
\end{definition}
Intuitively, if the LMC is in state~$q$, then with probability~$M(a)(q,q')$ it emits~$a$ and moves to state~$q'$.
For the complexity results of this paper, we assume that all the numbers in the matrices $M(a)$ for $a \in \Sigma$ are rationals given as fractions of integers represented in binary.
We extend $M$ to the mapping $M: \Sigma^* \to [0,1]^{Q \times Q}$ with $M(a_1 \cdots a_k) = M(a_1) \cdots M(a_k)$ for $a_1, \ldots, a_k \in \Sigma$.
Intuitively, if the LMC is in state~$q$ then with probability~$M(w)(q,q')$ it emits the word~$w$ and moves (in $|w|$ steps) to state~$q'$.

Fix an LMC~$\M = (Q, \Sigma, M)$ for the rest of this section.
A run of $\M$ is an infinite sequence $a_1 a_2 \cdots$ with $a_i \in \Sigma$ for all $i \in \N$. We write $\Sigma^\omega$ for the set of \emph{runs}.
For a run $r = a_1 a_2 \cdots$ and $i \in \N$ we write $r_i := a_1 a_2 \cdots a_i$.
For a set $W \subseteq \Sigma^*$ of finite words, we define $W \Sigma^\omega := \{w u \mid w \in W, \ u \in \Sigma^\omega\} \subseteq \Sigma^\omega$; i.e.,
 the set of runs that have a prefix in~$W$.
For $w \in \Sigma^*$ we define $\Run(w) := \{w\} \Sigma^\omega$;
 i.e, $\Run(w)$ is the set of runs starting with~$w$.

To an (initial) distribution $\pi$ over~$Q$ we associate the probability space $(\Sigma^\omega,\F,\Pr_\pi)$,
 where $\F$ is the $\sigma$-field generated by all \emph{basic cylinders} $\Run(w)$ with $w \in \Sigma^*$,
 and $\Pr_\pi: \F \to [0,1]$ is the unique probability measure such that
  $\Pr_\pi(\Run(w)) = |\pi M(w)|$.
We generalize the definition of~$\Pr_\pi$ to subdistributions~$\pi$ in the obvious way, yielding sub-probability measures.
An \emph{event} is a measurable set $E \subseteq \Sigma^\omega$.
In this paper we consider only measurable subsets of~$\Sigma^\omega$,
 and when we write $E \subseteq \Sigma^\omega$, the set~$E$ is meant to be measurable. An event is \emph{$\omega$-regular}, if it is equal to a language accepted by a nondeterministic B\"uchi automaton.
When confusion is unlikely, we may identify the \mbox{(sub-)}distribution~$\pi$ with the induced \mbox{(sub-)}probability measure~$\Pr_\pi$;
 i.e., for events $E \subseteq \Sigma^\omega$ we may write $\pi(E)$ for~$\Pr_\pi(E)$.
For a distribution~$\pi$ and a word $w \in \Sigma^*$, we write $\pi^w$ as a shorthand for $\pi M(w)$;
 intuitively this is the state subdistribution after emitting~$w$.
We have $\Pr_\pi(\Run(w)) = |\pi^w|$.

We reserve $\pi,\rho$ (and $\pi_1, \pi_2, \ldots$) for distributions over~$Q$,
 often viewing $\pi_1, \pi_2$ as given initial distributions.
Similarly, we reserve $\mu, \nu$ for subdistributions over~$Q$.
(But note that $\pi^w$ for $w \in \Sigma^*$ is a subdistribution in general.)

Given two initial distributions $\pi_1, \pi_2$,
 we 
 define the \emph{(total variation) distance} between $\pi_1$ and~$\pi_2$ by%
\[
 d(\pi_1, \pi_2) := \sup_{E \subseteq \Sigma^\omega} |\pi_1(E) - \pi_2(E)| \,.
\]
Recall that $E \subseteq \Sigma^\omega$ implicitly means that $E$ is measurable.
As $\pi_1(E) - \pi_2(E) = - (\pi_1(\Sigma^\omega \setminus E) - \pi_2(\Sigma^\omega \setminus E))$,
 we have in fact $d(\pi_1, \pi_2) = \sup_{E \subseteq \Sigma^\omega} (\pi_1(E) - \pi_2(E))$.

 \begin{remark} \label{rem:remark}
 One could analogously define the total variation distance between two LMCs $\M_1 = (Q_1, \Sigma, M_1)$ and $\M_2 = (Q_2, \Sigma, M_2)$
with initial distributions $\pi_1$ and~$\pi_2$ over $Q_1$ and~$Q_2$, respectively.
Our definition is without loss of generality, as one can take the LMC $\M = (Q, \Sigma, M)$ where $Q$ is the disjoint union of $Q_1$ and~$Q_2$,
and $M$ is defined using $M_1$ and~$M_2$ in the straightforward manner.
 \end{remark}

We write $\mu_1 \equiv \mu_2$ to denote that $\mu_1$ and~$\mu_2$ are \emph{(language) equivalent}, i.e.,
 that $|\mu_1^w| = |\mu_2^w|$ holds for all $w \in \Sigma^*$.
The following proposition states in particular that equivalence can be decided in polynomial time,
 and that equivalence and the distance being zero are equivalent.
\newcommand{\stmtpropequivalence}{
 \mbox{}
\begin{itemize}
 \item[(a)] We have $\pi_1 \equiv \pi_2$ if and only if $d(\pi_1, \pi_2) = 0$.
 \item[(b)] One can compute in polynomial time a set $\mathcal{B} \subseteq \Q^{2 |Q|}$ of column vectors, with $|\mathcal{B}| \le 2 |Q|$,
  such that for all subdistributions $\mu_1, \mu_2$ we have $\mu_1 \equiv \mu_2$
 if and only if $(\mu_1 \ \mu_2) \cdot b = 0$ holds for all $b \in \mathcal{B}$.
Here, $(\mu_1 \ \mu_2) \in [0,1]^{2 |Q|}$ is the row vector obtained by gluing $\mu_1, \mu_2$ together.
(Note that $(\mu_1 \ \mu_2) \cdot b$ is a scalar.)
 \item[(c)] We have $\mu_1 \equiv \mu_2$ if and only if $|\mu_1^w| = |\mu_2^w|$ holds for all $w \in \Sigma^*$ with $|w| = 2 |Q|$.
 \item[(d)] It is decidable in polynomial time whether $\mu_1 \equiv \mu_2$ holds.
       Hence it is also decidable in polynomial time whether $d(\pi_1, \pi_2) = 0$ holds.
\end{itemize}
}
\begin{proposition} \label{prop-equivalence}
\stmtpropequivalence
\end{proposition}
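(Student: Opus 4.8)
The plan is to derive everything from the classical Schützenberger/Paz theory of minimization of weighted automata over a field, applied to the "acceptance weight" functional $w \mapsto |\mu^w|$. First I would set up the linear-algebraic picture: fix the doubled state space of size $2|Q|$ and, for each $a \in \Sigma$, the block-diagonal matrix $N(a) := \mathrm{diag}(M_1(a), M_2(a))$ together with the all-ones column vector $\mathbf{1} \in \Q^{2|Q|}$. Then for a glued row vector $(\mu_1\ \mu_2)$ and a word $w = a_1\cdots a_k$ we have $|\mu_1^w| - |\mu_2^w| = (\mu_1\ -\mu_2)\, N(w)\,\mathbf{1}$, where $-\mu_2$ means we negate the second block; thus $\mu_1 \equiv \mu_2$ iff the row vector $(\mu_1\ {-}\mu_2)$ lies in the left-kernel of every vector $N(w)\mathbf{1}$, $w\in\Sigma^*$. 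Let $V \subseteq \Q^{2|Q|}$ be the vector space spanned by $\{N(w)\mathbf{1} : w \in \Sigma^*\}$. This is the classical "forward reachable" space of the weighted automaton.

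For part~(b): $V$ can be computed in polynomial time by a standard worklist/saturation procedure — start with $\mathbf{1}$, repeatedly close under applying each $N(a)$, stopping when the dimension stabilizes (at most $2|Q|$ iterations suffice, since each productive step increases the dimension). Having a basis of $V$, take $\mathcal B$ to be a spanning set of $V$ with $|\mathcal B| \le 2|Q|$ (e.g. a basis), but with the sign convention folded in: for a candidate pair $(\mu_1,\mu_2)$ one tests $(\mu_1\ \mu_2)\cdot b = 0$ for all $b\in\mathcal B$ where the $b$'s are the basis vectors of $V$ with the second block negated — equivalently one can just keep $\mathcal B$ as a basis of $V$ and test against $(\mu_1\ {-}\mu_2)$; either phrasing matches the statement after absorbing the sign into $\mathcal B$. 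Then $\mu_1 \equiv \mu_2$ iff $(\mu_1\ {-}\mu_2) \perp V$ iff $(\mu_1\ {-}\mu_2)\cdot b = 0$ for all $b$ in the basis. Part~(c) is the quantitative refinement: the saturation reaches a fixed point within $2|Q|$ rounds, so $V$ is already spanned by $\{N(w)\mathbf{1} : |w| \le 2|Q|\}$; hence if $|\mu_1^w| = |\mu_2^w|$ for all $w$ with $|w| = 2|Q|$ — and, by downward reasoning or by a padding argument with a fixed letter, for all shorter lengths too — then $(\mu_1\ {-}\mu_2)\perp V$, giving equivalence. One has to be slightly careful that the hypothesis "$|w| = 2|Q|$ exactly" implies the shorter-length equalities; this follows because $N(w)\mathbf 1$ for $|w| < 2|Q|$ is a linear combination of the $N(w')\mathbf 1$ with $|w'| \le 2|Q|$, and the latter already stabilized, so all such vectors lie in $\mathrm{span}\{N(w)\mathbf1 : |w| = 2|Q|\}$ when, say, every state has an outgoing transition under every prefix — or more cleanly one argues directly that the chain of spaces $V_k = \mathrm{span}\{N(w)\mathbf1 : |w|\le k\}$ is non-decreasing and stabilizes by $k = 2|Q|$, so $V_{2|Q|} = V_{2|Q|-1} = \cdots$ and membership testing at the top length suffices. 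Part~(a) is then immediate in one direction ($\pi_1\equiv\pi_2 \Rightarrow \pi_1(\mathrm{Run}(w)) = \pi_2(\mathrm{Run}(w))$ for all $w$, and since basic cylinders generate $\F$ and the two measures agree on the generating $\pi$-system, they agree on all of $\F$, so $d(\pi_1,\pi_2)=0$); conversely $d(\pi_1,\pi_2) = 0$ forces $\pi_1(\mathrm{Run}(w)) = \pi_2(\mathrm{Run}(w))$, i.e. $|\pi_1^w| = |\pi_2^w|$, for every $w$, which is exactly $\pi_1\equiv\pi_2$. Part~(d) follows from (b): compute $\mathcal B$ in polynomial time and check the finitely many linear equations; the final sentence of (d) then follows by combining with (a).

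The main obstacle — really the only non-routine point — is the uniform bound $2|Q|$ in parts~(b) and~(c), i.e. showing the saturation terminates after at most $2|Q|$ rounds and that the resulting $\mathcal B$ has size at most $2|Q|$. This is the content of the classical Fliess/Schützenberger dimension bound: the ascending chain $V_0 \subseteq V_1 \subseteq \cdots$ of subspaces of $\Q^{2|Q|}$ is strictly increasing until it stabilizes, and once $V_{k+1} = V_k$ it is invariant under all $N(a)$ so $V_{k+j} = V_k$ for all $j$; since $\dim V_k \le 2|Q|$, stabilization happens by $k = 2|Q|$ (indeed by $k = 2|Q|-1$, but the stated bound is ample). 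Everything else is bookkeeping: the measure-uniqueness argument for (a) is the standard $\pi$-$\lambda$ / Dynkin argument on cylinders, and the polynomial-time claims are just Gaussian elimination over $\Q$ with the usual care that bit-sizes of the computed basis vectors stay polynomial.
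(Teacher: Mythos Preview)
Your plan is essentially the paper's own approach: set up the block-diagonal matrices, consider the space $V$ generated by the vectors $N(w)\eta$ (with the sign folded into $\eta$ rather than into the test vector, but this is cosmetic), compute a basis by saturation, and read off~(b) and~(d).  Part~(a) is treated more carefully than in the paper, which simply says ``immediate from the definitions''; your $\pi$--$\lambda$ remark is the honest justification of that phrase.

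There is, however, a genuine gap in your treatment of part~(c).  The statement concerns words of length \emph{exactly} $2|Q|$, and neither of the two arguments you sketch for passing from ``equality at length $2|Q|$'' to ``equality at all shorter lengths'' is valid as written.  In particular, the ``more clean'' argument is wrong: stabilisation of the chain $V_k = \mathrm{span}\{N(w)\mathbf 1 : |w|\le k\}$ tells you that $V_{2|Q|} = V$, but it does \emph{not} tell you that the vectors with $|w| = 2|Q|$ alone span~$V$; the spaces $V_k$ are defined with ``$\le k$'', not ``$= k$'', and equality $V_{2|Q|} = V_{2|Q|-1}$ says nothing about the exact-length span.  The correct argument --- which is what the paper does --- uses stochasticity of $\sum_{a}M(a)$: for any $u$ with $|u|<2|Q|$ one has $|\mu_i^u| = \sum_{|v| = 2|Q|-|u|} |\mu_i^{uv}|$, so equality at the top length forces equality below.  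Equivalently, in your vector language, $N(u)\mathbf 1 = \sum_{|v|=k} N(uv)\mathbf 1$ because $\big(\sum_a N(a)\big)\mathbf 1 = \mathbf 1$.  Your phrase ``downward reasoning'' may have been gesturing at this, but the subsequent elaboration obscures it; make the use of stochasticity explicit.
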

Proposition~\ref{prop-equivalence}~(a) is immediate from the definitions.
Parts (b)--(d) follow from a linear-algebra argument described, e.g., in \cite{Schutzenberger,Paz71,DoyenHenzingerRaskin}.
We sketch this argument \iftechrep{in Appendix~\ref{app-prelim}}{in~\cite{14CK-lics-report}}.

\section{Examples} \label{sec-examples}

We illustrate some phenomena of the distance by two examples. The main observations are that the distance of two LMCs can be irrational (Example~1), and in general,  they must be differentiated by events which are \emph{not} $\omega$-regular (Example~1), even if their distance is 1 (Example~2).

\subsection{Example 1} \label{example1}
Consider the LMCs from Figure~\ref{fig-intro-irrational} on page~\pageref{fig-intro-irrational}.
As discussed in Remark~\ref{rem:remark}, we can equivalently view them as a single LMC.
To illustrate the definitions we study the distance between states $q_1$ and~$q_2$,
 or more precisely, between the Dirac distributions $\delta_{q_1}$ and~$\delta_{q_2}$.
Note that we have $\delta_{r_1} \equiv \delta_{r_2}$, as both $r_1$ and~$r_2$ keep emitting the letter~$c$.
On the other hand we have $\delta_{q_1} \not\equiv \delta_{q_2}$ and so $d(\delta_{q_1}, \delta_{q_2}) > 0$.
With probability~$1$, one of the states $r_1, r_2$ will eventually be reached.
So events are characterized by the words over $a,b$ emitted before the infinite $c$-sequence.
More formally, for any event $E \subseteq \Sigma^\omega = \{a,b,c\}^\omega$
 one can define $W_E := \{w \in \{a,b\}^* \mid w c^\omega \in E \}$ so that we have
 \[
  \delta_{q_i}(E) = \sum_{w \in W_E} \delta_{q_i}(\{w\} \{c\}^\omega) \quad \text{for $i \in \{1,2\}$.}
 \]
It is easy to see that $\delta_{q_1}(\{a\} \{c\}^\omega) = \frac18$ and $\delta_{q_2}(\{a\} \{c\}^\omega) = \frac{1}{16}$.
Consider any event~$E$ with $W_E$ defined as above.
If $a \in W_E$, then $\delta_{q_2}(E) \ge \delta_{q_2}(\{a\} \{c\}^\omega) = \frac{1}{16}$.
If $a \not\in W_E$, then $\delta_{q_1}(E) \le 1 - \delta_{q_1}(\{a\} \{c\}^\omega) = \frac78$.
So for any~$E$ we have $\delta_{q_1}(E) - \delta_{q_2}(E) \le \max\{1 - \frac{1}{16}, \frac78 \} = \frac{15}{16}$.
By symmetry we also have $\delta_{q_2}(E) - \delta_{q_1}(E) \le \frac{15}{16}$.
As $E$ was arbitrary, we have thus shown $d(\delta_{q_1}, \delta_{q_2}) \le \frac{15}{16} < 1$.
We will show in Proposition~\ref{prop-irrational-1} that we have in fact $d(\delta_{q_1}, \delta_{q_2}) = \sqrt{2}/{4}$,
 so distances may be irrational.
The proof of Proposition~\ref{prop-irrational-1} shows that
 $d(\delta_{q_1}, \delta_{q_2}) = \delta_{q_1}(E) - \delta_{q_2}(E)$ holds for the event
\[
 E := \{w c c c \ldots \mid w \in \{a,b\}^*, \ \#_a(w) \ge \#_b(w) \}\;,
\]
where $\#_a(w)$ and~$\#_b(w)$ denote the number of occurrences of $a$  and $b$ in~$w$ respectively.
This may be intuitive as $q_1$ is more likely to emit $a$-letters than $b$-letters, whereas for~$q_2$ it is the opposite.
We remark that this event~$E$ is not \emph{$\omega$-regular}, i.e.,
 it cannot be recognized by a B\"uchi automaton. As a matter of fact, any $\omega$-regular event can only differentiate the two LMCs by a rational number, as the probability of any $\omega$-regular event must be rational.


\subsection{Example 2}
Consider the LMC in Figure~\ref{fig-surprising}.
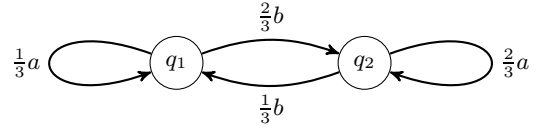
\begin{figure}
\begin{center}
\scalebox{1}{
\begin{tikzpicture}[scale=2.5,LMC style]
\node[state] (q1) at (0,-1) {$q_1$};
\node[state] (q2) at (1,-1) {$q_2$};
\path[->] (q1) edge [bend left=20] node[above] {$\frac23 b$} (q2);
\path[->] (q2) edge [bend left=20] node[below] {$\frac13 b$} (q1);
\path[->] (q1) edge [loop,out=160,in=200,looseness=20] node[left] {$\frac13 a$} (q1);
\path[->] (q2) edge [loop,out=20,in=-20,looseness=20] node[right] {$\frac23 a$} (q2);
\end{tikzpicture}
}
\end{center}
\vspace{-2mm}
\caption{A $2$-state LMC. The two states have distance~$1$.}
\label{fig-surprising}
\end{figure}
Both states $q_1, q_2$ can initiate \emph{any} run $r \in \Sigma^\omega$.
Note also that we have $\delta_{q_1}(\{r\}) = \delta_{q_2}(\{r\}) = 0$ for any single run $r \in \Sigma^\omega$.
Nevertheless it follows from Theorem~\ref{thm-limits-coincide} that we have $d(\delta_{q_1}, \delta_{q_2}) = 1$.
Moreover, Theorem~\ref{thm-maximising-event} will provide an event~$E$ with $\delta_{q_1}(E) = 1$ and $\delta_{q_2}(E) = 0$.
Intuitively, such an event could be based on the observation that if $q_1$ is the initial state,
 it is more likely after an even number of emitted $b$-letters to emit another~$b$,
whereas if $q_2$ is the initial state,
 it is more likely after an even number of emitted $b$-letters to emit an $a$-letter.
By the law of large numbers, this difference almost surely ``shows'' in the long run.

In the following we sketch a proof for the fact that
 no $\omega$-regular event~$E$ satisfies both $\delta_{q_1}(E) = 1$ and $\delta_{q_2}(E) = 0$.
In fact, we even show that for any $\omega$-regular~$E$ with $\delta_{q_1}(E) = 1$ we also have $\delta_{q_2}(E) = 1$.
(We omit precise automata-theoretic definitions here, as this argument will play no further role in this paper.)
Let $E$ be any $\omega$-regular event.
Let $R$ be a deterministic Rabin automaton for~$E$, with initial state~$r_0$.
Let $\M_R$ denote the LMC obtained by taking the cross-product of $R$ and the chain from Figure~\ref{fig-surprising}.
Let $\delta_{q_1}(E) = 1$.
Then all bottom SCCs of~$\M_R$ reachable from $(r_0,q_1)$ are accepting.
As the qualitative transition structure (i.e., distinguishing only zero and nonzero transition probabilities)
 is completely symmetric for $q_1$ and~$q_2$,
 it follows that all bottom SCCs of~$\M_R$ reachable from $(r_0,q_2)$ are accepting as well.
Hence we have $\delta_{q_2}(E) = 1$.

\section{An Approximation Algorithm} \label{sec-approx}

In this section we define two computable sequences that converge to the distance from below and from above, respectively.
This yields an algorithm for approximating the distance with arbitrary precision.

From now on until the end of Section~\ref{sec-maximizing} we fix an LMC $\M = (Q, \Sigma, M)$ and (initial) distributions $\pi_1,\pi_2$.
For $w \in \Sigma^*$ we define
\begin{align*}
 \mi(w)   & := \min\{|\pi_1^w|, |\pi_2^w|\} \quad \text{and} \quad  \\
 \con(w)   & := \max\{|\mu_1| \mid \mu_1 \le \pi_1^w \ \land \ \exists \mu_2 \le \pi_2^w : \mu_1 \equiv \mu_2\} \,.
\end{align*}
For $k \in \N$, we also define
$\mi(k)  := \sum_{w \in \Sigma^k} \mi(w)$ and
$\con(k) := \sum_{w \in \Sigma^k} \con(w) $.
The following proposition lists basic properties of those quantities.
\newcommand{\stmtpropbasic}{
Let $w \in \Sigma^*$ and $k \in \N$.
\begin{itemize}
 \item[(a)] We have $1 \ge \mi(w) \ge \con(w) \ge 0$.
  Hence $\mi(k) \ge \con(k)$.
 \item[(b)]
   We have $\mi(w) \ge \sum_{a \in \Sigma} \mi(w a) $ and
   $\con(w)  \le \sum_{a \in \Sigma} \con(w a)$.
  Hence we have
$\mi(k)   \ge \mi(k+1) $ and $\con(k)  \le \con(k+1)$
 \item[(c)]
  The limits
   $\mi(\infty)   := \lim_{i \to \infty} \mi(i)$ and $\con(\infty)  := \lim_{i \to \infty} \con(i)$
  exist,  and we have $\mi(\infty) \ge \con(\infty)$.
\end{itemize}
}
\begin{proposition} \label{prop-basic}
\stmtpropbasic
\end{proposition}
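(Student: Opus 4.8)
The plan is to prove the three items in order, since (b) relies on (a) and (c) follows immediately from (b). All three are essentially structural observations about the quantities $\mi$ and $\con$, so the work is in setting up the right decompositions rather than in any clever estimate.

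For part~(a), the inequalities $\mi(w)\le 1$ and $\con(w)\ge 0$ are trivial (a subdistribution has $L_1$-norm at most~$1$, and $\mu_1=\mathbf{0}$ is a valid witness with $\mu_1\equiv\mathbf{0}\le\pi_2^w$). The only real claim is $\mi(w)\ge\con(w)$. Here I would argue: if $\mu_1\le\pi_1^w$ and $\mu_2\le\pi_2^w$ with $\mu_1\equiv\mu_2$, then taking $w'=\varepsilon$ (the empty word) in the definition of equivalence gives $|\mu_1|=|\mu_1^\varepsilon|=|\mu_2^\varepsilon|=|\mu_2|$; and $|\mu_1|\le|\pi_1^w|$, $|\mu_2|\le|\pi_2^w|$, so $|\mu_1|=|\mu_2|\le\min\{|\pi_1^w|,|\pi_2^w|\}=\mi(w)$. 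Taking the supremum over all such witnesses gives $\con(w)\le\mi(w)$. Summing over $w\in\Sigma^k$ yields $\mi(k)\ge\con(k)$.

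For part~(b), I would use the ``one-step unfolding'' $\pi_i^w = \sum_{a\in\Sigma}\pi_i^{wa}$ (more precisely $\pi_i^{wa}=\pi_i^w M(a)$ and $\sum_{a\in\Sigma}M(a)$ is stochastic, so the subdistributions $\pi_i^{wa}$ sum to $\pi_i^w$). For the $\mi$ inequality: $\mi(w)=\min\{|\pi_1^w|,|\pi_2^w|\} = \min\{\sum_a|\pi_1^{wa}|,\sum_a|\pi_2^{wa}|\}\ge\sum_a\min\{|\pi_1^{wa}|,|\pi_2^{wa}|\}=\sum_a\mi(wa)$, using superadditivity of $\min$ over sums of nonnegative reals. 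For the $\con$ inequality I go the other way: given optimal witnesses $\mu_1^{(a)}\le\pi_1^{wa}$, $\mu_2^{(a)}\le\pi_2^{wa}$ with $\mu_1^{(a)}\equiv\mu_2^{(a)}$ and $|\mu_1^{(a)}|=\con(wa)$ for each $a$, set $\mu_1:=\sum_a\mu_1^{(a)}$ and $\mu_2:=\sum_a\mu_2^{(a)}$. Then $\mu_1\le\sum_a\pi_1^{wa}=\pi_1^w$, similarly $\mu_2\le\pi_2^w$, and $\mu_1\equiv\mu_2$ because equivalence is preserved under sums (for any $v$, $|\mu_1^v|=\sum_a|(\mu_1^{(a)})^v| = \sum_a|(\mu_2^{(a)})^v| = |\mu_2^v|$). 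Hence $\con(w)\ge|\mu_1|=\sum_a\con(wa)$. The ``hence'' parts about $\mi(k)$ and $\con(k)$ follow by summing these per-word inequalities over all $w\in\Sigma^k$ and noting $\Sigma^{k+1}=\{wa\mid w\in\Sigma^k,\ a\in\Sigma\}$.

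For part~(c): by~(b), the sequence $(\mi(k))_k$ is nonincreasing and by~(a) bounded below by~$0$, so it converges; $(\con(k))_k$ is nondecreasing and bounded above by~$1$ (since $\con(k)\le\mi(k)\le\mi(0)=\min\{|\pi_1|,|\pi_2|\}=1$), so it converges; and $\mi(k)\ge\con(k)$ for all~$k$ passes to the limit to give $\mi(\infty)\ge\con(\infty)$. The one point requiring a little care — and the closest thing to an obstacle — is the claim that equivalence is preserved under (finite) sums of subdistributions, i.e.\ that $\mu_1\equiv\mu_1'$ and $\mu_2\equiv\mu_2'$ imply $\mu_1+\mu_2\equiv\mu_1'+\mu_2'$; but this is immediate from linearity, since $|(\mu_1+\mu_2)^v| = |\mu_1^v| + |\mu_2^v|$ for every $v\in\Sigma^*$ (all entries of $(\mu_i)^v$ being nonnegative, the $L_1$-norm is just the sum of entries and hence additive). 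With that lemma in hand, the whole proposition is routine.
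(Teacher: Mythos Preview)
Your argument for $\con$ in part~(b) has a genuine gap. The identity you call the ``one-step unfolding'', $\pi_i^w = \sum_{a\in\Sigma}\pi_i^{wa}$, is \emph{false} as a vector equation: we have $\sum_a \pi_i^{wa} = \pi_i^w \sum_a M(a)$, and while $\sum_a M(a)$ is stochastic (so $|\pi_i^w| = \sum_a |\pi_i^{wa}|$, which is all your $\mi$ argument needs), there is no reason for $\pi_i^w \sum_a M(a)$ to equal $\pi_i^w$ componentwise. So from $\mu_1 \le \sum_a \pi_1^{wa}$ you cannot conclude $\mu_1 \le \pi_1^w$, and your summed witness $\mu_1 = \sum_a \mu_1^{(a)}$ is not in general admissible for $\con(w)$.

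More fundamentally, even if the vector identity held, your argument would establish $\con(w) \ge \sum_a \con(wa)$, which is the \emph{wrong direction}: the proposition asserts $\con(w) \le \sum_a \con(wa)$, so that $\con(k)$ is nondecreasing (as you yourself use in part~(c)). The correct approach runs the other way: take an optimal witness pair $\mu_1 \le \pi_1^w$, $\mu_2 \le \pi_2^w$ with $\mu_1 \equiv \mu_2$ and $|\mu_1| = \con(w)$, and \emph{push it forward} by $M(a)$. Then $\mu_i M(a) \le \pi_i^w M(a) = \pi_i^{wa}$ and $\mu_1 M(a) \equiv \mu_2 M(a)$, so $\con(wa) \ge |\mu_1 M(a)|$; summing over $a$ and using stochasticity of $\sum_a M(a)$ gives $\sum_a \con(wa) \ge \sum_a |\mu_1 M(a)| = |\mu_1| = \con(w)$. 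Parts~(a) and~(c), and the $\mi$ half of~(b), are fine.
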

\begin{proof} \mbox{}
\begin{itemize}
\item[(a)]
We have $1 \ge |\pi_1^w| \ge \min\{|\pi_1^w|, |\pi_2^w|\} = \mi(w)$, hence $1 \ge \mi(w)$.
Clearly, $\con(w) \ge 0$.

Let $\mu_1, \mu_2$ be the subdistributions such that $\con(w) = |\mu_1|$ and $\pi_1^w \ge \mu_1 \equiv \mu_2 \le \pi_2^w$.
Then we have $|\pi_1^w| \ge |\mu_1| = |\mu_2| \le |\pi_2^w|$, hence $\con(w) = |\mu_1| \le \min\{|\pi_1^w|,|\pi_2^w|\} = \mi(w)$.
\item[(b)]
Let $i \in \{1,2\}$ with $|\pi_i^w| \le |\pi_{3-i}^w|$.
Then we have:
\begin{align*}
 & \mi(w) \\
 & = \min\{ |\pi_1^w|, |\pi_2^w| \} = |\pi_i^w| = |\pi_i M(w)| \\
 & = |\pi_i M(w) \sum_{a \in \Sigma} M(a)| \\ 
 & = \sum_{a \in \Sigma} |\pi_i M(w a)| = \sum_{a \in \Sigma} |\pi_i^{w a}| \\
 & \ge \sum_{a \in \Sigma} \min\{ |\pi_i^{w a}|, |\pi_{3-i}^{w a}| \}  =   \sum_{a \in \Sigma} \mi(w a)
\end{align*}
Let $\mu_1, \mu_2$ be the subdistributions such that $\con(w) = |\mu_1|$ and $\pi_1^w \ge \mu_1 \equiv \mu_2 \le \pi_2^w$.
It follows that, for all $a \in \Sigma$, we have $\pi_1^{w a} \ge \mu_1 M(a) \equiv \mu_2 M(a) \le \pi_2^{w a}$,
 hence $\con(w a) \ge |\mu_1 M(a)|$.
So we have:
\begin{align*}
 & \sum_{a \in \Sigma} \con(w a) \\
 & \ge \sum_{a \in \Sigma} |\mu_1 M(a)| = |\mu_1 \sum_{a \in \Sigma} M(a)| = |\mu_1| \\
 & = \con(w)\,.
\end{align*}
\item[(c)] Follows from (a)~and~(b).
\end{itemize}
\end{proof}
The quantities $\mi(k)$ and~$\con(k)$ provide lower and upper bounds for the distance:
\begin{proposition} \label{prop-bounds}
For all $k \in \N$ we have:
\[
 1 - \mi(k) \quad \le \quad d(\pi_1, \pi_2) \quad \le \quad 1 - \con(k)\,.
\]
\end{proposition}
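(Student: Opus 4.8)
The plan is to prove the two inequalities separately by constructing, for the lower bound, an explicit event that realizes difference at least $1 - \mi(k)$, and, for the upper bound, showing that every event has difference at most $1 - \con(k)$.

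For the lower bound, fix $k$ and consider the event $E := \bigcup \{\Run(w) \mid w \in \Sigma^k,\ |\pi_1^w| \ge |\pi_2^w|\}$, i.e., the set of runs whose length-$k$ prefix~$w$ makes $\M$ ``more likely to follow'' under~$\pi_1$. Since the basic cylinders $\Run(w)$ for $w \in \Sigma^k$ partition $\Sigma^\omega$ and $\Pr_{\pi_i}(\Run(w)) = |\pi_i^w|$, we get
\[
 \pi_1(E) - \pi_2(E) = \sum_{w \in \Sigma^k,\ |\pi_1^w| \ge |\pi_2^w|} \bigl(|\pi_1^w| - |\pi_2^w|\bigr) = \sum_{w \in \Sigma^k} \bigl(|\pi_1^w| - \mi(w)\bigr) = 1 - \mi(k),
\]
using that $\sum_{w \in \Sigma^k} |\pi_1^w| = |\pi_1 M(\varepsilon)| = 1$ because $\sum_{a \in \Sigma} M(a)$ is stochastic (so summing $|\pi_1^w|$ over all length-$k$ words telescopes to~$1$). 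Hence $d(\pi_1,\pi_2) \ge \pi_1(E) - \pi_2(E) = 1 - \mi(k)$.

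For the upper bound, let $E \subseteq \Sigma^\omega$ be an arbitrary event. For each $w \in \Sigma^k$, pick subdistributions $\mu_1^w, \mu_2^w$ witnessing $\con(w)$, i.e., $\pi_1^w \ge \mu_1^w \equiv \mu_2^w \le \pi_2^w$ and $|\mu_1^w| = \con(w)$. The idea is that the equivalent ``cores'' $\mu_1^w \equiv \mu_2^w$ assign equal probability to every measurable subset of the cylinder $\Run(w)$, while the ``residuals'' $\pi_1^w - \mu_1^w$ contribute at most their total mass. Concretely, splitting $E$ along the length-$k$ cylinders and letting $E_w$ denote the shifted-by-$w$ slice of $E \cap \Run(w)$,
\[
 \pi_1(E) - \pi_2(E) = \sum_{w \in \Sigma^k} \bigl(\Pr_{\pi_1^w}(E_w) - \Pr_{\pi_2^w}(E_w)\bigr)
 = \sum_{w \in \Sigma^k} \bigl(\Pr_{\pi_1^w - \mu_1^w}(E_w) - \Pr_{\pi_2^w - \mu_2^w}(E_w)\bigr),
\]
where the last equality uses $\Pr_{\mu_1^w}(E_w) = \Pr_{\mu_2^w}(E_w)$, which follows from $\mu_1^w \equiv \mu_2^w$ (language equivalence of subdistributions implies they induce the same sub-probability measure on $\F$, since they agree on all basic cylinders $\Run(u)$ and these generate~$\F$). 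Now $\Pr_{\pi_1^w - \mu_1^w}(E_w) \le |\pi_1^w - \mu_1^w| = |\pi_1^w| - \con(w)$ and $\Pr_{\pi_2^w - \mu_2^w}(E_w) \ge 0$, so the sum is at most $\sum_{w \in \Sigma^k}(|\pi_1^w| - \con(w)) = 1 - \con(k)$. Since $E$ was arbitrary and $d(\pi_1,\pi_2) = \sup_E (\pi_1(E) - \pi_2(E))$, we conclude $d(\pi_1,\pi_2) \le 1 - \con(k)$.

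The main obstacle is the bookkeeping in the upper-bound argument: one must be careful that $E \mapsto E_w$ (restricting to a cylinder and stripping the prefix) is well-defined on measurable sets, that $\Pr_{\pi_i^w}(E_w)$ correctly computes $\Pr_{\pi_i}(E \cap \Run(w))$, and — the genuinely load-bearing point — that language equivalence of the subdistributions $\mu_1^w \equiv \mu_2^w$ upgrades from ``equal mass on every cylinder'' to ``equal measure on every measurable set,'' which is a standard $\pi$-$\lambda$ / uniqueness-of-extension argument but should be stated explicitly. The lower bound is essentially immediate once the telescoping identity $\sum_{w \in \Sigma^k} |\pi_1^w| = 1$ is noted.
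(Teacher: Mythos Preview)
Your proof is correct and follows essentially the same approach as the paper: the lower bound uses the same event $E = W_1 \Sigma^\omega$ with $W_1 = \{w \in \Sigma^k \mid |\pi_1^w| \ge |\pi_2^w|\}$, and the upper bound splits an arbitrary~$E$ along length-$k$ cylinders, cancels the equivalent cores $\mu_1^w \equiv \mu_2^w$, and bounds the residuals by their total mass. Your write-up is in fact slightly more explicit than the paper's about the measure-theoretic step (upgrading $\mu_1^w \equiv \mu_2^w$ from agreement on cylinders to agreement on all of~$\F$), which the paper uses without comment.
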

\begin{proof}
We show first the lower bound.
Let $k \in \N$. 
Define $W_1 := \{w \in \Sigma^k \mid |\pi_1^w| \ge |\pi_2^w| \}$ and
       $W_2 := \{w \in \Sigma^k \mid |\pi_1^w|  <  |\pi_2^w| \}$.
By the definitions we have:
\begin{align*}
 d(\pi_1, \pi_2)
 &  =  \sup_{E \subseteq \Sigma^\omega} \left( \pi_1(E)-\pi_2(E) \right) \\
 & \ge \pi_1(W_1 \Sigma^\omega) - \pi_2(W_1 \Sigma^\omega) \\
 &  =  1 - \pi_1(W_2 \Sigma^\omega) - \pi_2(W_1 \Sigma^\omega) \\
 &  =  1 - \sum_{w \in W_2} |\pi_1^w| - \sum_{w \in W_1} |\pi_2^w| \\
 &  =  1 - \sum_{w \in \Sigma^k} \min\{ |\pi_1^w|, |\pi_2^w| \} \\
 &  =  1 - \mi(k)\,.
\end{align*}

Now we show the upper bound.
For an event $E \subseteq \Sigma^\omega$ and a word $w \in \Sigma^*$, we denote by $w^{-1} E$ the event $\{u \in \Sigma^\omega \mid w u\in E\}$.
For $w \in \Sigma^*$ we write $\mu_1^{(w)}$ and $\mu_2^{(w)}$ to denote subdistributions
 with $\con(w) = |\mu_1^{(w)}| = |\mu_2^{(w)}|$ and $\pi_1^w \ge \mu_1^{(w)} \equiv \mu_2^{(w)} \le \pi_2^w$.
The following inequalities hold:
\begin{equation} \label{eq-prop-bounds}
\begin{aligned}
 \pi_1^w(w^{-1} E) & = \mu_1^{(w)}(w^{-1} E) + (\pi_1^w - \mu_1^{(w)})(w^{-1} E) \\
                   & \le \mu_1^{(w)}(w^{-1} E) + |\pi_1^w| - |\mu_1^{(w)}| \\
 \pi_2^w(w^{-1} E) & \ge \mu_2^{(w)}(w^{-1} E)
\end{aligned}
\end{equation}
We have:
\begin{align*}
  & d(\pi_1, \pi_2) \\
  &  =  \sup_{E \subseteq \Sigma^\omega} \left( \pi_1(E)-\pi_2(E) \right) \\
  &  =  \sup_{E \subseteq \Sigma^\omega} \sum_{w \in \Sigma^k} \pi_1^w(w^{-1}E)-\pi_2^w(w^{-1}E) \\
  & \le \sup_{E \subseteq \Sigma^\omega} \sum_{w \in \Sigma^k} \mu_1^{(w)}(w^{-1}E) + |\pi_1^w|-|\mu_1^{(w)}| \\[-2mm]
  & \hspace{20mm} \mbox{} - \mu_2^{(w)}(w^{-1}E)
     && \text{\hspace{-10mm}(by~\eqref{eq-prop-bounds})} \\[2mm]
  &  =  \sup_{E \subseteq \Sigma^\omega} \sum_{w \in \Sigma^k} |\pi_1^w|-|\mu_1^{(w)}|
     && \text{\hspace{-10mm}(as $\mu_1^{(w)} \equiv \mu_2^{(w)}$)} \\
  &  =  1 - \sum_{w \in \Sigma^k} |\mu_1^{(w)}| =  1-\con(k)
\end{align*}
\end{proof}
The lower bound in this proposition follows by considering the event $E_k := W_1 \Sigma^\omega$ (where $W_1$ is from the proof),
 which depends only on the length-$k$ prefix of the run.
In fact, if we restrict each run to its length-$k$ prefix,
 we obtain a finite sample space,
 and the event~$E_k$ is the maximizer according to~\eqref{eq-intro-maximiser} in the introduction.
We could define, for each $k \in \N$, a distance $d_k(\pi_1, \pi_2)$ with
\begin{align*}
 d_k(\pi_1, \pi_2) & = \max_{W \in \Sigma^k} |\pi_1(W \Sigma^\omega) - \pi_2(W \Sigma^\omega)| \\
                   & = \pi_1(E_k) - \pi_2(E_k) = 1 - \mi(k) \,.
\end{align*}
Since $d_k(\pi_1, \pi_2) \le d_{k+1}(\pi_1, \pi_2)$ holds by Proposition~\ref{prop-basic}~(b),
 there is a limit $\lim_{k \to \infty} d_k(\pi_1, \pi_2)$, which equals $d(\pi_1, \pi_2)$
  (as we will show in Theorem~\ref{thm-limits-coincide}).
This would offer an alternative but equivalent definition of the distance,
 which avoids the use of infinite runs by replacing them with increasing prefixes.

By combining Propositions \ref{prop-basic}~and~\ref{prop-bounds} we obtain
\begin{equation}
 1 - \mi(\infty) \quad \le \quad d(\pi_1, \pi_2) \quad \le \quad 1 - \con(\infty)\,. \label{eq-lower-upper-infinity}
\end{equation}
In the rest of this section we show that those inequalities are in fact equalities.

Recall that for a (random) run $r \in \Sigma^\omega$ we write $r_i \in \Sigma^i$ for the length-$i$ prefix of~$r$.
For $i \in \N$, we define the random variable~$L_i$ that assigns to a run $r \in \Sigma^\omega$ the likelihood ratio
\[
 L_i(r) := |\pi_2^{r_i}| / |\pi_1^{r_i}|\,.
\]
Observe that $L_0(r) = 1$.
\begin{proposition} \label{prop-lim-exists}
 We have
 \begin{align*}
   & \pi_1\left(\lim_{i \to \infty} L_i \text{ exists and is in } [0,\infty)\right) = 1 \quad \text{and} \\
   & \pi_2\left(\lim_{i \to \infty} 1/L_i \text{ exists and is in } [0,\infty)\right) = 1\,.
 \end{align*}
\end{proposition}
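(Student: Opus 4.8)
The plan is to recognise $(L_i)_{i\in\N}$ as a nonnegative supermartingale under~$\Pr_{\pi_1}$, with respect to the natural filtration $(\F_i)_{i\in\N}$ where $\F_i$ is the sub-$\sigma$-field of~$\F$ generated by the cylinders $\Run(w)$ with $|w|=i$, and then to apply the (super)martingale convergence theorem of Doob. The statement about $1/L_i$ under~$\pi_2$ then follows by the verbatim symmetric argument, since $1/L_i(r)=|\pi_1^{r_i}|/|\pi_2^{r_i}|$. First I would settle well-definedness: $L_i(r)$ only makes sense when $|\pi_1^{r_i}|>0$, but since $\Pr_{\pi_1}(\Run(w))=|\pi_1^w|$, the set of runs having \emph{some} prefix~$w$ with $|\pi_1^w|=0$ is a countable union of $\Pr_{\pi_1}$-null sets, hence $\Pr_{\pi_1}$-null; setting $L_i:=0$ on this null set makes each $L_i$ a well-defined, $\F_i$-measurable, $[0,\infty)$-valued random variable. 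Integrability is immediate, as $\mathbb{E}_{\pi_1}[L_i]=\sum_{w\in\Sigma^i,\,|\pi_1^w|>0}|\pi_1^w|\cdot|\pi_2^w|/|\pi_1^w|\le\sum_{w\in\Sigma^i}|\pi_2^w|=1$ (the length-$i$ cylinders partition~$\Sigma^\omega$).

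Next I would verify the supermartingale inequality $\mathbb{E}_{\pi_1}[L_{i+1}\mid\F_i]\le L_i$. It suffices to fix $w\in\Sigma^i$ with $|\pi_1^w|>0$ and check it conditioned on $\Run(w)$. Under $\Pr_{\pi_1}$, conditioned on $\Run(w)$, the next emitted letter is~$a$ with probability $|\pi_1^{wa}|/|\pi_1^w|$ (letters $a$ with $|\pi_1^{wa}|=0$ occur with probability~$0$ and contribute nothing), so
\[
 \mathbb{E}_{\pi_1}\bigl[L_{i+1}\mid\Run(w)\bigr]
   \;=\; \sum_{a\,:\,|\pi_1^{wa}|>0}\frac{|\pi_1^{wa}|}{|\pi_1^w|}\cdot\frac{|\pi_2^{wa}|}{|\pi_1^{wa}|}
   \;=\; \frac{1}{|\pi_1^w|}\sum_{a\,:\,|\pi_1^{wa}|>0}|\pi_2^{wa}|
   \;\le\; \frac{|\pi_2^w|}{|\pi_1^w|} \;=\; L_i(w),
\]
where the last step uses $\sum_{a\in\Sigma}\pi_2^{wa}=\pi_2^w\sum_{a\in\Sigma}M(a)=\pi_2^w$ (since $\sum_a M(a)$ is stochastic) together with the additivity of $L_1$-norms of nonnegative vectors. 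The inequality is in fact an equality unless some letter~$a$ satisfies $|\pi_1^{wa}|=0<|\pi_2^{wa}|$, so $(L_i)$ is frequently a genuine martingale, but the supermartingale property is all we need.

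Finally, a nonnegative supermartingale converges almost surely to an integrable — hence $\Pr_{\pi_1}$-a.s.\ finite — limit; applied to $(L_i)_{i\in\N}$ this gives $\Pr_{\pi_1}\bigl(\lim_{i\to\infty}L_i\text{ exists and lies in }[0,\infty)\bigr)=1$. Interchanging the roles of $\pi_1$ and~$\pi_2$ throughout yields the second claim. I expect the only genuinely delicate point to be the bookkeeping for runs along which some prefix has zero mass under the relevant initial distribution; the remainder is a routine martingale computation.
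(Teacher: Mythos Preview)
Your proposal is correct and follows essentially the same approach as the paper: identify $(L_i)$ as a nonnegative (super)martingale under~$\Pr_{\pi_1}$ and invoke Doob's convergence theorem, with the second claim following by symmetry. You are in fact slightly more careful than the paper, which asserts that $(L_i)$ is a genuine martingale and writes the conditional expectation as a sum over all $a\in\Sigma$; your observation that one only gets $\mathbb{E}_{\pi_1}[L_{i+1}\mid\Run(w)]\le L_i$ (with possible strict inequality when some $a$ has $|\pi_1^{wa}|=0<|\pi_2^{wa}|$) is the correct general statement, though the distinction is immaterial for the conclusion.
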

\begin{proof}
\newcommand{\Ex}{\textup{Ex}}
We prove only the first equality; the second equality is proved similarly.
First we show that the sequence $L_0, L_1, \ldots$ is a martingale.
Denote by $\Ex_1$ the expectation with respect to~$\pi_1$.
Let $i \in \N$ and let $w \in \Sigma^i$ with $|\pi_1^w| > 0$.
We have:
\begin{align*}
 & \Ex_1( L_{i+1} \mid \Run(w) ) \\
 & = \sum_{q, q' \in Q} \sum_{a \in \Sigma} \frac{\pi_1^w(q) M(a)(q,q')}{|\pi_1^w|} \cdot \frac{|\pi_2^{w a}|}{|\pi_1^{w a}|} \\
 & = \frac{1}{|\pi_1^w|} \sum_{a \in \Sigma} \frac{|\pi_2^{w a}|}{|\pi_1^{w a}|}
       \underbrace{\sum_{q \in Q} \pi_1^w(q) \sum_{q' \in Q} M(a)(q,q')}_{= |\pi_1^{w a}|} \\
 & = \frac{1}{|\pi_1^w|} \sum_{a \in \Sigma} |\pi_2^{w a}| = |\pi_2^{w}| / |\pi_1^w| = L_i
\end{align*}
So $L_0, L_1, \ldots$ is a martingale.
More precisely, the sequence $L_0, L_1, \ldots$ is a nonnegative martingale with $\Ex_1(L_i) = 1$ for all $i \in \N$.
So the martingale convergence theorem (more precisely, ``Doob's forward convergence theorem'', see e.g.~\cite{book:Williams})
 applies, and we obtain $\pi_1\left(\lim_{i \to \infty} L_i \text{ exists and is finite}\right) = 1$.
\end{proof}
In the following we may write $\lim_{i \to \infty} L_i = \infty$ to mean $\lim_{i \to \infty} 1/L_i = 0$.
Define
\begin{equation}
 \bar{L} := \lim_{i \to \infty} L_i \in [0, \infty] \quad \text{(if the limit exists).} \label{eq-bar-L}
\end{equation}
The random variable~$\bar{L}$ plays a crucial role in the next section and is also used in the proof of the following theorem.
\newcommand{\stmtthmlimitscoincide}{
 We have
 \[
  1 - \mi(\infty) \ = \ d(\pi_1, \pi_2) \ = \ 1 - \con(\infty)\,.
 \]
}
\begin{theorem} \label{thm-limits-coincide}
 \stmtthmlimitscoincide
\end{theorem}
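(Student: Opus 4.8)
The plan is to prove the single remaining inequality $\mi(\infty)\le\con(\infty)$: together with Proposition~\ref{prop-basic}(c) this yields $\mi(\infty)=\con(\infty)$, and then the sandwich~\eqref{eq-lower-upper-infinity} forces all three quantities to coincide. The engine of the argument is the likelihood‑ratio martingale $L_i$ and its $\pi_1$‑almost‑sure limit $\bar L\in[0,\infty)$ from Proposition~\ref{prop-lim-exists}; throughout write $\hat\mu:=\mu/|\mu|$ for a non‑zero subdistribution~$\mu$, and extend the notation by $\con(\mu_1,\mu_2):=\max\{|\nu_1|\mid\nu_1\le\mu_1,\ \exists\nu_2\le\mu_2:\nu_1\equiv\nu_2\}$ for arbitrary subdistributions, so that $\con(w)=\con(\pi_1^w,\pi_2^w)$.

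First I would recast $\mi(k)$ and $\con(k)$ as expectations over the length‑$k$ prefix of a $\pi_1$‑random run. For $w\in\Sigma^k$ with $|\pi_1^w|>0$ we have $\mi(w)=|\pi_1^w|\min\{1,L_k(w)\}$ with $L_k(w):=|\pi_2^w|/|\pi_1^w|$, and $\con(w)=|\pi_1^w|\,\con(\hat\pi_1^w,\,L_k(w)\hat\pi_2^w)$, using the scaling identity $\con(c\alpha,\beta)=c\,\con(\alpha,\beta/c)$ and $\pi_2^w/|\pi_1^w|=L_k(w)\hat\pi_2^w$; the words with $|\pi_1^w|=0$ contribute $0$ to both sums (since $\con(w)\le\mi(w)\le|\pi_1^w|$), and $\{r:|\pi_1^{r_k}|=0\}$ is $\pi_1$‑null. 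Hence
\[
 \mi(k)-\con(k)=\textup{Ex}_1\!\big[\min\{1,L_k\}-\con(\hat\pi_1^{r_k},\,L_k\hat\pi_2^{r_k})\big]=:\textup{Ex}_1[e_k],
\]
and $e_k\in[0,1]$ because $0\le\con(\hat\pi_1^{r_k},L_k\hat\pi_2^{r_k})\le\min\{|\hat\pi_1^{r_k}|,|L_k\hat\pi_2^{r_k}|\}=\min\{1,L_k\}$. By bounded convergence it then suffices to show $e_k\to 0$ $\pi_1$‑almost surely, which gives $\mi(\infty)-\con(\infty)=\lim_k(\mi(k)-\con(k))=0$. (In passing, the same manipulation and Proposition~\ref{prop-lim-exists} give $\mi(\infty)=\textup{Ex}_1[\min\{1,\bar L\}]$, which is useful for orientation.)

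For the almost‑sure convergence I would split according to $\bar L$. By Proposition~\ref{prop-lim-exists}, $\pi_1$‑a.s.\ either $\bar L=0$ or $\bar L\in(0,\infty)$. On $\{\bar L=0\}$ we have $L_k\to 0$, hence $0\le e_k\le\min\{1,L_k\}\to 0$. On $G:=\{0<\bar L<\infty\}$ the key claim is that the conditional state distributions become language‑equivalent in the limit: $\pi_1$‑a.s.\ on $G$, for every $v\in\Sigma^*$ the limits $\lim_k|(\hat\pi_1^{r_k})^v|$ and $\lim_k|(\hat\pi_2^{r_k})^v|$ both exist and are equal. Granting this, fix a $\pi_1$‑generic $r\in G$ and let $(\mu_1^\ast,\mu_2^\ast)$ be any Euclidean subsequential limit of $(\hat\pi_1^{r_k},\hat\pi_2^{r_k})$; then $|(\mu_1^\ast)^v|=|(\mu_2^\ast)^v|$ for all $v$, i.e.\ $\mu_1^\ast\equiv\mu_2^\ast$, and consequently $\con(\mu_1^\ast,\bar L\mu_2^\ast)=\min\{1,\bar L\}$ (``$\le$'' is $\con\le\min\{|\mu_1^\ast|,|\bar L\mu_2^\ast|\}$; ``$\ge$'' follows from the admissible pair $\nu_1:=\min\{1,\bar L\}\mu_1^\ast\le\mu_1^\ast$, $\nu_2:=\min\{1,\bar L\}\mu_2^\ast\le\bar L\mu_2^\ast$ with $\nu_1\equiv\nu_2$ and $|\nu_1|=\min\{1,\bar L\}$). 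Since $\con(\cdot,\cdot)$ is continuous on pairs of subdistributions — upper semicontinuity because ``$\equiv$'' is a closed (indeed linear, by Proposition~\ref{prop-equivalence}(b)) constraint and the feasible sets are uniformly bounded, lower semicontinuity by scaling an optimal pair by the largest $t\le 1$ that keeps it below the perturbed bounds, using that ``$\equiv$'' is preserved by scalar multiples — every subsequence has a further subsequence along which $(\hat\pi_1^{r_k},\hat\pi_2^{r_k},L_k)$ converges and hence $\con(\hat\pi_1^{r_k},L_k\hat\pi_2^{r_k})\to\con(\mu_1^\ast,\bar L\mu_2^\ast)=\min\{1,\bar L\}$; as this value is independent of the limit pair, the whole sequence converges to $\min\{1,\bar L\}$, so $e_k\to 0$ on $G$.

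The main obstacle is exactly the equivalence‑in‑the‑limit claim on $G$. I would attack it through the martingales underlying $L_i$: for each fixed $v\in\Sigma^*$, $Y^{(v)}_k:=|\pi_2^{r_kv}|/|\pi_1^{r_k}|$ is a nonnegative $\pi_1$‑martingale (same computation as in the proof of Proposition~\ref{prop-lim-exists}, using that $\sum_a M(a)$ is stochastic) and $X^{(v)}_k:=|\pi_1^{r_kv}|/|\pi_1^{r_k}|$ is a $[0,1]$‑valued $\pi_1$‑martingale; both converge $\pi_1$‑a.s., and $|(\hat\pi_1^{r_k})^v|=X^{(v)}_k$, $|(\hat\pi_2^{r_k})^v|=Y^{(v)}_k/L_k$. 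So the claim reduces to showing $\lim_k Y^{(v)}_k=\bar L\cdot\lim_k X^{(v)}_k$ on $G$, and by Proposition~\ref{prop-equivalence}(c) it is enough to do this for each of the finitely many $v$ with $|v|=2|Q|$. This is the delicate step, since the two sides are a priori unrelated limits. I expect to pin them down with a Borel–Cantelli argument, exploiting that $L_k\to\bar L$ $\pi_1$‑a.s.\ forces the one‑step increments $L_{k+1}-L_k\to 0$; writing $L_k=\sum_a X^{(a)}_k\,\big(|\pi_2^{r_ka}|/|\pi_1^{r_ka}|\big)$ as the convex combination of the one‑step likelihood ratios weighted by the conditional $\pi_1$‑probabilities of the next letter, this forces those ratios not to stay spread out along a typical run, which propagates to all $v$ via the relations $Y^{(v)}_k=\sum_a Y^{(va)}_k$ and $X^{(v)}_k=\sum_a X^{(va)}_k$. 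The awkward case is when some letter's conditional $\pi_1$‑probability tends to $0$ (so the naive Borel–Cantelli step does not apply), and handling it appears to require a separate, more careful martingale argument.
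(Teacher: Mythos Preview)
Your overall architecture is sound and close to the paper's: reduce to $\mi(\infty)\le\con(\infty)$, split according to $\bar L$, dispose of $\{\bar L=0\}$ trivially, and on $\{\bar L>0\}$ argue that subsequential limits of the conditional state distributions $(\hat\pi_1^{r_k},\hat\pi_2^{r_k})$ are language-equivalent, then feed this into $\con$ via compactness and continuity. Your recasting of $\mi(k)-\con(k)$ as a $\pi_1$-expectation and the appeal to bounded convergence is a clean variant of the paper's direct partition of $\Sigma^k$.

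The gap is precisely where you say it is, and your proposed attack on it contains an error that blocks the route. The processes $X^{(v)}_k=|\pi_1^{r_kv}|/|\pi_1^{r_k}|$ and $Y^{(v)}_k=|\pi_2^{r_kv}|/|\pi_1^{r_k}|$ are \emph{not} $\pi_1$-martingales for $v\ne\varepsilon$: the computation from Proposition~\ref{prop-lim-exists} gives
\[
\textup{Ex}_1\big[Y^{(v)}_{k+1}\mid r_k=w\big]
=\frac{1}{|\pi_1^w|}\sum_{a}|\pi_2^{wav}|
=\frac{1}{|\pi_1^w|}\Big|\pi_2^w\Big(\sum_a M(a)\Big)M(v)\Big|,
\]
and since $\sum_a M(a)$ is stochastic but does not commute with $M(v)$, this generally differs from $|\pi_2^{wv}|/|\pi_1^w|$. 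The identity $\sum_a|\pi_2^{wa}|=|\pi_2^w|$ underlying the $L_i$-martingale really needs $v=\varepsilon$. So the per-$v$ martingale-convergence step, and with it the reduction to ``$\lim Y^{(v)}_k=\bar L\cdot\lim X^{(v)}_k$'', does not get started; and you acknowledge that even granting it, the identification of the two limits is left open.

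The paper closes this step with a direct \emph{multiplicative-drop} argument on $L_i$ itself, avoiding auxiliary martingales. Set $\tilde d(\rho_1,\rho_2):=\max_{|u|=2|Q|}\big(|\rho_1^u|-|\rho_2^u|\big)$ and let $\tilde u_i$ realise the maximum for $(\rho_{1,i},\rho_{2,i}):=(\hat\pi_1^{r_i},\hat\pi_2^{r_i})$. If $\tilde d(\rho_{1,i},\rho_{2,i})\ge\delta$, then the $\pi_1$-conditional probability that the next $2|Q|$ letters are $\tilde u_i$ is $|\rho_{1,i}^{\tilde u_i}|\ge\delta$, and on that event $L_{i+2|Q|}/L_i=|\rho_{2,i}^{\tilde u_i}|/|\rho_{1,i}^{\tilde u_i}|\le 1-\delta$. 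Hence $\tilde d(\rho_{1,i},\rho_{2,i})\ge\delta$ cannot hold infinitely often on $\{\bar L>0\}$, so $\tilde d(\rho_{1,i},\rho_{2,i})\to 0$ there; any subsequential limit $(\mu_1^\ast,\mu_2^\ast)$ then has $\tilde d(\mu_1^\ast,\mu_2^\ast)=0$, whence $\mu_1^\ast\equiv\mu_2^\ast$ by Proposition~\ref{prop-equivalence}(c). That is exactly the equivalence-in-the-limit input your argument needs, after which your compactness/continuity finish goes through.
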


\begin{proof}[Proof sketch]
The proof \iftechrep{(Appendix~\ref{app-approx})}{(see~\cite{14CK-lics-report})}
is somewhat technical and we only give a sketch here.
Considering~\eqref{eq-lower-upper-infinity} it suffices to show that $\mi(\infty) = \con(\infty)$.
By Proposition~\ref{prop-basic} we have $\mi(k) \ge \con(k)$ for all $k$,
 so loosely speaking we have to show that for ``large''\footnote{%
In the rest of this proof sketch we gloss over the precise meaning of ``small'', ``not much larger'', etc.,
 and omit the quotation marks.%
}~%
$k$, $\mi(k)$ is not much larger than~$\con(k)$.
We first show that this holds for individual runs started from~$\pi_1$; more precisely, we show for all $\gamma > 0$ that
 \begin{equation} \label{eq-thm-limits-coincide-sketch}
 \begin{aligned}
  & \pi_1(\bar{L} > 0) \\
  & = \pi_1\left(\bar{L} > 0 \ \land\ \exists i \in \N : \mi(r_i) \le (1 + \gamma) \con(r_i) \right)\,.
 \end{aligned}
 \end{equation}
In words: Conditioned under the event $\{\bar{L} > 0\}$ the probability that eventually $\mi(r_i) \le (1 + \gamma) \con(r_i)$ holds is~$1$.
To show~\eqref{eq-thm-limits-coincide-sketch} we first show that conditioned under $\{\bar{L} > 0\}$
 we have with probability~$1$ that
 the distance between the distributions
 $\pi_1^{r_i} / |\pi_1^{r_i}|$ and $\pi_2^{r_i} / |\pi_2^{r_i}|$ converges to~$0$.
Using the fact that the set of distributions is compact, one can then show~\eqref{eq-thm-limits-coincide-sketch}.

To show that for large~$k$, $\mi(k)$ is not much larger than~$\con(k)$,
 we consider a partition $\Sigma^k = W_1 \cup W_2 \cup W_3$.
The set~$W_1$ contains the words~$w$ with small $|\pi_2^w| / |\pi_1^w|$.
So $\sum_{w \in W_1} \mi(w) \le \sum_{w \in W_1} |\pi_2^w|$ is small.
The set~$W_2$ contains the words~$w$ with $\mi(w) > (1 + \gamma) \con(w)$ and large $|\pi_2^w| / |\pi_1^w|$.
Runs with prefixes in~$W_2$ and $\bar{L} = 0$ are unlikely, as $L_k = |\pi_2^w| / |\pi_1^w|$ is large
 and $k$ is large and the sequence $L_0, L_1, \ldots$ converges to~$\bar{L}$ by Proposition~\ref{prop-lim-exists}.
Runs with prefixes in~$W_2$ and $\bar{L} > 0$ are also unlikely because of~\eqref{eq-thm-limits-coincide-sketch}.
So $\sum_{w \in W_2} \mi(w) \le \sum_{w \in W_2} |\pi_1^w|$ is small.
Finally, the set~$W_3$ contains the words~$w$ with $\mi(w) \le (1 + \gamma) \con(w)$ and large $|\pi_2^w| / |\pi_1^w|$.
So $\sum_{w \in W_3} \mi(w)$ is (for small~$\gamma$) not much larger than $\sum_{w \in W_3} \con(w) \le \con(k)$.
By adding the mentioned inequalities we obtain that $\mi(k) = \sum_{w \in \Sigma^k}$ is not much larger than~$\con(k)$.
\end{proof}

\begin{corollary} \label{cor-maximising-event}
 There is an algorithm that, given $\varepsilon > 0$, computes $a \in \Q$ such that $d(\pi_1, \pi_2) \in [a, a+\varepsilon]$.
\end{corollary}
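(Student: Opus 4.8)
The plan is to run the two convergent approximations from Proposition~\ref{prop-bounds} against each other. By that proposition, for every $k \in \N$ we have $1 - \mi(k) \le d(\pi_1,\pi_2) \le 1 - \con(k)$, and by Theorem~\ref{thm-limits-coincide} both bounds converge to $d(\pi_1,\pi_2)$. The algorithm, on input $\varepsilon > 0$, computes $\mi(k)$ and $\con(k)$ for $k = 0, 1, 2, \ldots$ until it reaches the first $k$ with $\mi(k) - \con(k) \le \varepsilon$, and then outputs $a := 1 - \mi(k)$.

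First I would argue that $\mi(k)$ and $\con(k)$ are computable rationals. For each $w \in \Sigma^k$ the subdistributions $\pi_1^w = \pi_1 M(w)$ and $\pi_2^w = \pi_2 M(w)$ are obtained by rational matrix--vector multiplication, so $\mi(w) = \min\{|\pi_1^w|, |\pi_2^w|\}$ can be computed. For $\con(w)$, recall from Proposition~\ref{prop-equivalence}~(b) that there is a computable finite set $\mathcal{B}$ of rational column vectors such that $\mu_1 \equiv \mu_2$ holds exactly when $(\mu_1\ \mu_2) \cdot b = 0$ for all $b \in \mathcal{B}$. Hence $\con(w)$ is the optimal value of the linear program: maximise $\sum_{q \in Q} \mu_1(q)$ subject to $0 \le \mu_1 \le \pi_1^w$, $0 \le \mu_2 \le \pi_2^w$, and $(\mu_1\ \mu_2) \cdot b = 0$ for all $b \in \mathcal{B}$. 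This LP has rational data and a bounded, nonempty feasible region (the zero vectors are feasible), so its optimum is a rational number, computable e.g.\ in polynomial time. Summing over the finitely many words of length~$k$ yields $\mi(k), \con(k) \in \Q$.

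Next I would argue termination and correctness. By Proposition~\ref{prop-basic} the sequence $(\mi(k))_k$ is non-increasing, $(\con(k))_k$ is non-decreasing, and $\mi(k) \ge \con(k) \ge 0$; moreover by Theorem~\ref{thm-limits-coincide} both sequences have the common limit $1 - d(\pi_1,\pi_2)$. Consequently $\mi(k) - \con(k)$ is non-increasing and tends to~$0$, so some $k_0$ satisfies $\mi(k_0) - \con(k_0) \le \varepsilon$ and the loop halts. When it halts at such a~$k_0$ we set $a := 1 - \mi(k_0)$, and Proposition~\ref{prop-bounds} gives $a \le d(\pi_1,\pi_2) \le 1 - \con(k_0) = a + (\mi(k_0) - \con(k_0)) \le a + \varepsilon$, i.e.\ $d(\pi_1,\pi_2) \in [a, a+\varepsilon]$, as required.

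The only real subtlety is that the algorithm carries no a priori bound on $k_0$ as a function of~$\varepsilon$: it is Theorem~\ref{thm-limits-coincide} (which ultimately rests on the martingale argument behind Proposition~\ref{prop-lim-exists}) that guarantees the gap $\mi(k) - \con(k)$ actually closes, turning what would otherwise be a mere semi-decision procedure into a terminating one. Everything else — the rational matrix arithmetic and the linear program defining $\con(w)$ — is routine, the main cost being the enumeration of the $|\Sigma|^k$ words at level~$k$.
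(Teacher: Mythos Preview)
Your proof is correct and follows essentially the same approach as the paper: both use Proposition~\ref{prop-bounds} for the two-sided bounds and Theorem~\ref{thm-limits-coincide} for convergence, then compute $\mi(k)$ and $\con(k)$ until the gap closes. You supply considerably more detail than the paper does---in particular, you spell out the LP characterization of $\con(w)$ via Proposition~\ref{prop-equivalence}~(b) and verify termination explicitly---where the paper simply asserts that $\mi(k)$ and $\con(k)$ are computable.
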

\begin{proof}
By Proposition~\ref{prop-bounds} and Theorem~\ref{thm-limits-coincide} the sequences $(1 - \mi(k))_{k \in \N}$
 and $(1 - \con(k))_{k \in \N}$ converge to $d(\pi_1, \pi_2)$ from below and above, respectively.
For each~$k$, the values $\mi(k)$ and~$\con(k)$ are computable.
\end{proof}

In terms of the complexity of approximating the distance we have the following result:

\begin{proposition} \label{prop:approx-np-hard} Approximating the distance up to any $\varepsilon$ whose size is polynomial in the given LMC
 is NP-hard with respect to Turing reductions.
\end{proposition}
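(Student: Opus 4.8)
The plan is to reduce from a known NP-hard problem in a way that forces a fine approximation of the distance to decide the original problem. A natural candidate is the \emph{subset-sum} or \emph{partition} problem, or an NP-hard problem about probabilistic automata such as the emptiness problem for acceptance probability (Condon--Lipton style). First I would take an instance of such a problem and build an LMC with initial distributions $\pi_1, \pi_2$ so that the exact value of $d(\pi_1, \pi_2)$ encodes the answer: the construction should arrange that $d(\pi_1, \pi_2)$ takes one value (say, some explicitly computable rational $v_{\mathrm{yes}}$) when the instance is positive and a value $v_{\mathrm{no}}$ bounded away from $v_{\mathrm{yes}}$ when the instance is negative, with $|v_{\mathrm{yes}} - v_{\mathrm{no}}|$ at least some $\varepsilon$ whose bit-size is polynomial in the input. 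Then any algorithm approximating $d(\pi_1,\pi_2)$ to additive error $\varepsilon/3$ immediately decides the instance, which gives NP-hardness under Turing reductions.

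The key technical steps, in order, are: (i) choose the source problem and describe the gadget that turns an instance into a pair of small LMCs (likely using a "guess a certificate, then verify" structure where the two chains agree on all outputs unless a certificate for a positive instance is produced); (ii) compute, or at least tightly bound, $d(\pi_1,\pi_2)$ in both cases, using Theorem~\ref{thm-limits-coincide} together with Proposition~\ref{prop-bounds} to pin the distance down via the $\mi(k)$ and $\con(k)$ sequences, or more directly by exhibiting an explicit distinguishing event and a matching upper bound as in Example~1; (iii) verify that the gap $\varepsilon = |v_{\mathrm{yes}}-v_{\mathrm{no}}|$ is inverse-polynomial (hence has polynomial bit-size) in the instance size, so that approximating to precision $\varepsilon$ is exactly what the reduction needs; and (iv) conclude that the approximation problem is NP-hard, since an oracle for it solves the NP-hard source problem in polynomial time.

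The main obstacle I expect is step (ii): controlling $d(\pi_1,\pi_2)$ precisely enough in the "no" case. The distance is a supremum over \emph{all} measurable events, and even for simple chains it can be irrational (as Example~1 shows), so a clean closed form is usually unavailable; the delicate part is to show that no event — in particular no "clever" non-$\omega$-regular event built from long-run averages, of the kind that realizes the distance in Examples~1 and~2 — can push the "no"-case distance up into the "yes"-case range. The way I would handle this is to design the gadget so that the relevant part of the behaviour is essentially a finite-horizon distinguishing problem: make the difference between $\pi_1$ and $\pi_2$ "materialize" within a bounded number of steps (so that $1 - \mi(k)$ already equals the distance for some small $k$, by Proposition~\ref{prop-bounds} and Theorem~\ref{thm-limits-coincide}), reducing the distance computation to a finite linear-algebraic quantity that can be read off from the instance. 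Keeping the construction finite-horizon in this sense is what makes the gap analysis tractable while still encoding an NP-hard combinatorial constraint.
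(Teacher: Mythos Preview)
Your plan is a \emph{gap reduction}: build instances where $d(\pi_1,\pi_2)$ lies in one of two separated intervals, then an $\varepsilon/3$-approximation decides the source problem. This is a legitimate and standard strategy, and your instinct to keep the gadget finite-horizon (so the distance is determined by a bounded prefix and becomes a finite linear-algebraic quantity) is exactly right---indeed, the known hardness reductions for this distance all work over models that eventually reach an absorbing ``end'' state, which is precisely the finite-horizon setting you describe.

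The paper, however, takes a shorter route that avoids building any new gadget or analysing a gap. It simply invokes the Lyngs\o--Pedersen reduction (from the clique problem) which already shows that \emph{exactly computing} the distance is NP-hard, and observes that in that reduction the distance is always a rational number whose numerator and denominator have polynomial bit-size. The punchline is then the \emph{continued-fraction trick}: any rational $p/q$ with polynomially bounded $q$ can be recovered exactly from an approximation of precision $1/(2q^2)$, and that precision has polynomial bit-size. Hence a polynomial-time approximator yields a polynomial-time exact algorithm, contradicting the cited hardness. This bypasses entirely the delicate ``no''-case analysis you worry about in step~(ii): no gap is needed, only the structural fact that the distances arising in the existing reduction are short rationals.

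So your approach is sound in outline but substantially heavier than necessary; the paper's proof is two sentences because the continued-fraction observation converts exact-computation hardness directly into approximation hardness. If you want to carry out your route, the cleanest source problem is probably max-clique (as in Lyngs\o--Pedersen) rather than Condon--Lipton-style emptiness, which is undecidable rather than NP-hard and thus not the right anchor here.
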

\begin{proof}
In \cite[Section 6]{LyngsoP02} (see also \cite[Theorem 7]{CortesMRdistance}), a reduction is given from the \emph{clique decision problem} to show that computing the distance in LMCs is NP-hard.
In their reduction the distance is rational and of polynomial size in the input.
Using the continued-fraction method (see e.g.\ Section 2.4 of~\cite{EtessamiY10} for an explanation) it follows that
 a polynomial-time algorithm (if it exists) for approximating the distance can be used to 
  construct a polynomial-time algorithm for computing the distance exactly.
Hence the conclusion.
\end{proof}
This NP-hardness result also follows from the proof of \cite[Theorem 10]{CortesMRdistance}.

\section{A Maximizing Event} \label{sec-maximizing}

The proof of Theorem~\ref{thm-limits-coincide} does not yield an event~$E_1$ with
 $\pi_1(E_1) - \pi_2(E_1) = d(\pi_1, \pi_2) \stackrel{\text{def}}{=} \sup_{E \subseteq \Sigma^\omega} \left| \pi_1(E) - \pi_2(E) \right|$.
In fact, it is not clear a priori whether such an event exists.
In this section we exhibit such a ``witness''~$E_1$.
It follows that the supremum from the definition of distance is in fact a maximum.

For some intuition recall from~\eqref{eq-intro-maximiser} in the introduction that in the countable case
 the event $E_1 = \{ r \in \Omega \mid \pi_1(r) \ge \pi_2(r)\}$ is the desired maximizer.
In the case of LMCs this does not work, since each individual run may have probability~$0$
 (as, e.g., in Figure~\ref{fig-surprising}).
However, by rewriting the inequality $\pi_1(r) \ge \pi_2(r)$ as $\pi_2(r) / \pi_1(r) \le 1$,
 one is tempted to guess that $\pi_2(r) / \pi_1(r)$ can be replaced by $\bar{L}(r)$ as defined in~\eqref{eq-bar-L}.
In the rest of the section we show that this intuition is correct.
Define the events
\begin{align*}
 E_1  := \left\{\bar{L} \le 1\right\} \quad \text{and} \quad
 E_2  := \left\{\bar{L} > 1\right\}\,.
\end{align*}
By Proposition~\ref{prop-lim-exists} we have
\begin{equation}
 \pi_1(E_1) + \pi_1(E_2) = \pi_2(E_1) + \pi_2(E_2) = 1\,. \label{eq-prop-lim-exists}
\end{equation}
The following lemma will suffice for showing that $E_1$ is the desired maximizer.
\begin{lemma} \label{lem-first-step}
We have $\pi_1(E_2) + \pi_2(E_1) \le \mi(\infty)$.
\end{lemma}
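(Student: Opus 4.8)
The plan is to bound $\pi_1(E_2)$ and $\pi_2(E_1)$ separately, in each case by the $\mi(k)$'s, and then let $k \to \infty$. The guiding observation is that $E_2 = \{\bar L > 1\}$ is, up to a $\pi_1$-null set, the event that eventually $|\pi_2^{r_i}| > |\pi_1^{r_i}|$ and this persists; and symmetrically $E_1 = \{\bar L \le 1\}$ is, up to a $\pi_2$-null set (using $1/L_i \to 1/\bar L$ under $\pi_2$ by Proposition~\ref{prop-lim-exists}), essentially the event that eventually $|\pi_1^{r_i}| \ge |\pi_2^{r_i}|$. So the idea is to compare $\pi_1(E_2) + \pi_2(E_1)$ with the quantity $\sum_{w \in \Sigma^k} \min\{|\pi_1^w|, |\pi_2^w|\} = \mi(k)$, whose infimum over $k$ is $\mi(\infty)$.

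First I would fix $k \in \N$ and split $\Sigma^k = W_1 \cup W_2$ where $W_1 := \{w \in \Sigma^k \mid |\pi_1^w| \ge |\pi_2^w|\}$ and $W_2 := \Sigma^k \setminus W_1$, exactly as in the proof of Proposition~\ref{prop-bounds}. On the cylinder $\Run(w)$ for $w \in W_2$, the conditional likelihood ratio starts (at step $k$) strictly above $1$; the point is that the event $E_1 = \{\bar L \le 1\}$ restricted to $\bigcup_{w \in W_2}\Run(w)$ forces the martingale $L_i$ to come back down to $\le 1$. I would then argue, using that $L_\bullet$ is a nonnegative martingale under $\pi_1$ with value $|\pi_2^w|/|\pi_1^w|$ on $\Run(w)$, and an optional-stopping / maximal-inequality type estimate (or more simply a direct measure-theoretic decomposition), that
\[
 \pi_1\!\left(E_2 \cap \textstyle\bigcup_{w \in W_1}\Run(w)\right) + \pi_2\!\left(E_1 \cap \textstyle\bigcup_{w \in W_2}\Run(w)\right) \ \le\ \mi(k) - \big(\text{something}\big),
\]
and the clean way to package this is: for $w \in W_1$ the contribution of $\Run(w)$ to $\pi_1(E_2) + \pi_2(E_1)$ is at most $\pi_2^w(E_1 \text{-part}) + \pi_1^w(E_2\text{-part}) \le |\pi_2^w| + (\text{small})$, while for $w \in W_2$ it is at most $|\pi_1^w| + (\text{small})$. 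Summing, the dominant term is $\sum_{w \in W_1}|\pi_2^w| + \sum_{w \in W_2}|\pi_1^w| = \mi(k)$, and the "small" corrections are the $\pi_1$-mass of runs whose likelihood ratio at time $k$ is on the "wrong side" of $1$ but whose limit $\bar L$ is on the other side — and these vanish as $k \to \infty$ by the almost-sure convergence $L_i \to \bar L$ (Proposition~\ref{prop-lim-exists}), because $\{\bar L \le 1\}$ and $\{L_k > 1\}$ can only overlap on a set whose probability tends to $0$ (and symmetrically under $\pi_2$).

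Concretely I would write $\pi_1(E_2) \le \sum_{w \in W_2}|\pi_1^w| + \pi_1(E_2 \cap \{L_k \le 1\})$ and $\pi_2(E_1) \le \sum_{w \in W_1}|\pi_2^w| + \pi_2(E_1 \cap \{1/L_k < 1\})$, where I abuse notation by reading $L_k \le 1$ as "$w \in W_1$". Adding these two gives $\pi_1(E_2) + \pi_2(E_1) \le \mi(k) + \pi_1(E_2 \cap \{L_k \le 1\}) + \pi_2(E_1 \cap \{L_k > 1\})$. Now $\{L_k \le 1\}\downarrow$ towards a set disjoint from $\{\bar L > 1\}$ up to a null set — more precisely $\bigcap_k \bigcup_{j \ge k}\{L_j \le 1\} \supseteq \{\bar L < 1\}$ and on $\{\bar L > 1\}$ we have $L_k > 1$ for all large $k$ a.s. — so $\pi_1(E_2 \cap \{L_k \le 1\}) \to 0$, and symmetrically $\pi_2(E_1 \cap \{L_k > 1\}) \to 0$ using $1/L_k \to 1/\bar L$ under $\pi_2$. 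Taking $k \to \infty$ and using $\mi(k) \to \mi(\infty)$ (Proposition~\ref{prop-basic}(c)) yields $\pi_1(E_2) + \pi_2(E_1) \le \mi(\infty)$.

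The main obstacle is handling the "corrections" rigorously: one must be careful that the events $\{L_k \le 1\}$ are $\F$-measurable cylinder-level events (they are, being unions of $\Run(w)$ over $w \in \Sigma^k$), and that passing from "$w \in W_2$" to "$\Run(w) \subseteq \{L_k > 1\}$ except possibly for $|\pi_1^w| = |\pi_2^w|$ ties" is legitimate — ties contribute equally to $\mi$ and cause no harm. A secondary subtlety is that $\pi_2(E_1 \cap \{L_k > 1\})$ must be controlled via the $\pi_2$-martingale $1/L_i$ (which is a bounded-below martingale under $\pi_2$ by the symmetric half of Proposition~\ref{prop-lim-exists}), not the $\pi_1$-martingale; getting the two measures and their respective a.s.-limits lined up correctly is where the bookkeeping lives, but no genuinely new idea beyond Propositions~\ref{prop-basic}, \ref{prop-bounds} and~\ref{prop-lim-exists} is needed.
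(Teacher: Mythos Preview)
Your approach is essentially the paper's: partition $\Sigma^k$ according to the sign of $L_k - 1$, obtain
\[
 \pi_1(E_2) + \pi_2(E_1) \ \le\ \mi(k) + \pi_1(E_2 \cap \{L_k \le 1\}) + \pi_2(E_1 \cap \{L_k > 1\})\,,
\]
and argue that the correction terms vanish as $k \to \infty$. The decomposition is correct, and the first correction $\pi_1(E_2 \cap \{L_k \le 1\}) \to 0$ is fine: on $\{\bar L > 1\}$ we have $L_k \to \bar L > 1$ $\pi_1$-a.s., so eventually $L_k > 1$.

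The gap is in the second correction. You claim $\pi_2(E_1 \cap \{L_k > 1\}) \to 0$ ``using $1/L_k \to 1/\bar L$ under~$\pi_2$'', but this does not follow on the boundary set $\{\bar L = 1\} \subseteq E_1$. There $L_k \to 1$, yet nothing prevents $L_k > 1$ from holding infinitely often; the indicator of $\{L_k > 1\}$ need not tend to~$0$ pointwise, and dominated convergence does not apply. Your remark about ``ties'' at $|\pi_1^w| = |\pi_2^w|$ addresses a different (and harmless) issue; the problematic boundary is $\bar L = 1$, not $L_k = 1$, and $\pi_2(\bar L = 1)$ need not be zero.

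The paper fixes exactly this by an $\varepsilon$-buffer: it thresholds at $1+\varepsilon$ rather than~$1$. Then $\pi_2(\{\bar L \le 1\} \cap \{L_k > 1+\varepsilon\}) \to 0$ holds cleanly (since $L_k \to \bar L \le 1 < 1+\varepsilon$ on $E_1$, $\pi_2$-a.s.). The cost of the shift is twofold: a multiplicative factor $(1+\varepsilon)$ in front of $\mi(k)$, absorbed into the $\gamma$-slack, and an extra term $\pi_1(\bar L \in (1,1+2\varepsilon])$ on the $\pi_1$-side, which is made smaller than $\gamma$ by choosing $\varepsilon$ small \emph{before} choosing~$k$. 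Your plan becomes a complete proof once this buffer is inserted.
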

\begin{proof}
Towards a contradiction, suppose that this does not hold.
Then there is $k' \in \N$ with $\pi_1(E_2) + \pi_2(E_1) > \mi(k')$;
 hence there is $\gamma > 0$ with
\begin{equation}
 \pi_1(E_2) + \pi_2(E_1) > \mi(k') + 4 \gamma\,. \label{eq-first-step}
\end{equation}
Choose $\varepsilon \in (0, \gamma]$ small enough so that
\[
 \pi_1(\bar{L} \in (1,1+2\varepsilon]) \le \gamma\,.
\]
Using Proposition~\ref{prop-lim-exists}, choose $k \ge k'$ large enough so that we have
\begin{equation}
\begin{aligned}
 \pi_1(L_k \le 1 + \varepsilon \ \land \ \bar{L} > 1 + 2\varepsilon) & \le \gamma \quad \text{and} \\
 \pi_2(L_k > 1 + \varepsilon \ \land \ \bar{L} \le 1)                & \le \gamma\,.
\end{aligned}
\label{eq-lem-conv-squeeze}
\end{equation}
Then we have:
\begin{align*}
 & \ \pi_1(E_2) \\
 & =   \pi_1(\bar{L} \in (1,1+2\varepsilon]) + \pi_1(\bar{L} > 1 + 2\varepsilon) && \text{(def.~of~$E_2$)} \\
 & \le \gamma  + \pi_1(\bar{L} > 1 + 2\varepsilon)                               &&  \text{(choice of~$\varepsilon$)} \\
 & = \gamma  + \pi_1(L_k > 1+\varepsilon\ \land \ \bar{L} > 1 + 2\varepsilon) \\
 & \quad \mbox{} + \pi_1(L_k \le 1+\varepsilon \ \land \ \bar{L} > 1 + 2\varepsilon) && \text{} \\
 & \le \gamma  + \pi_1(L_k > 1+\varepsilon) \\
 & \quad \mbox{} + \pi_1(L_k \le 1+\varepsilon \ \land \ \bar{L} > 1 + 2\varepsilon) && \text{} \\
 & \le 2\gamma  + \pi_1(L_k > 1+\varepsilon) && \text{(by~\eqref{eq-lem-conv-squeeze})}
\intertext{Similarly we have:}
 & \ \pi_2(E_1) \\
 & \le \pi_2(L_k > 1+\varepsilon \ \land \ \bar{L} \le 1) \\
 & \quad \mbox{} + \pi_2(L_k \le 1+\varepsilon) && \text{(def.~of~$E_1$)} \\
 & \le \gamma + \pi_2(L_k \le 1+\varepsilon)    && \text{(by~\eqref{eq-lem-conv-squeeze})}
\end{align*}
By adding those two inequalities we obtain
\begin{equation}
\begin{aligned}
 & \pi_1(E_2) + \pi_2(E_1) \\
 & \le 3\gamma + \pi_1(L_k > 1+\varepsilon) + \pi_2(L_k \le 1+\varepsilon)\,.
\end{aligned}
\label{eq-first-step2}
\end{equation}

Define the partition of~$\Sigma^k$ in $\Sigma^k = W_1 \cup W_2 \cup W_3$ with
\begin{align*}
 W_1 &:= \{w \in \Sigma^k \mid |\pi_2^w| / |\pi_1^w| \le 1\} \\
 W_2 &:= \{w \in \Sigma^k \mid 1 < |\pi_2^w| / |\pi_1^w| \le 1+\varepsilon\} \\
 W_3 &:= \{w \in \Sigma^k \mid 1+\varepsilon < |\pi_2^w| / |\pi_1^w|\}\,.
\end{align*}
Then we have
\begin{align*}
 \pi_2(L_k \le 1) & = \sum_{w \in W_1} |\pi_2^w| \\
 \pi_2(1 < L_k \le 1+\varepsilon) & = \sum_{w \in W_2} |\pi_2^w| \le \sum_{w \in W_2} (1+\varepsilon) |\pi_1^w| \\
 \pi_1(L_k > 1+\varepsilon) & = \sum_{w \in W_3} |\pi_1^w|\,.
\end{align*}
By adding those (in)equalities we obtain
\begin{align*}
 & \pi_1(L_k > 1+\varepsilon) + \pi_2(L_k \le 1+\varepsilon) \\
 & \le (1+\varepsilon) \mi(k) \le \mi(k) + \varepsilon \le \mi(k) + \gamma \,.
\end{align*}
Combining this with~\eqref{eq-first-step2} yields
\[
 \pi_1(E_2) + \pi_2(E_1) \le \mi(k) + 4\gamma \le \mi(k') + 4\gamma\,,
\]
thus contradicting~\eqref{eq-first-step} as desired.
\end{proof}

Now we can prove that $E_1$ is the desired maximizing event.
\begin{theorem} \label{thm-maximising-event}
We have
 \[
  d(\pi_1, \pi_2) = \pi_1(E_1) - \pi_2(E_1)\,.
 \]
\end{theorem}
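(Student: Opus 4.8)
The plan is to prove the two inequalities $\pi_1(E_1) - \pi_2(E_1) \le d(\pi_1,\pi_2)$ and $\pi_1(E_1) - \pi_2(E_1) \ge d(\pi_1,\pi_2)$ separately. The first is immediate: since $E_1$ is a measurable event and, as noted just after the definition of the distance, $d(\pi_1,\pi_2) = \sup_{E \subseteq \Sigma^\omega}(\pi_1(E) - \pi_2(E))$, we automatically get $\pi_1(E_1) - \pi_2(E_1) \le d(\pi_1,\pi_2)$. So nothing needs to be done for this direction.

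For the second inequality — where the real work has already been done in Lemma~\ref{lem-first-step} — I would start from the total-probability identity~\eqref{eq-prop-lim-exists}, which gives $\pi_1(E_1) = 1 - \pi_1(E_2)$. Hence
\[
 \pi_1(E_1) - \pi_2(E_1) = 1 - \bigl(\pi_1(E_2) + \pi_2(E_1)\bigr) \ge 1 - \mi(\infty),
\]
where the inequality is exactly Lemma~\ref{lem-first-step}. By Theorem~\ref{thm-limits-coincide} we have $1 - \mi(\infty) = d(\pi_1,\pi_2)$, so $\pi_1(E_1) - \pi_2(E_1) \ge d(\pi_1,\pi_2)$.

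Combining the two inequalities yields $d(\pi_1,\pi_2) = \pi_1(E_1) - \pi_2(E_1)$; in particular the supremum in the definition of the distance is attained, by the event $E_1$.

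I do not expect any obstacle within this argument itself — it is a two-line bookkeeping computation once Lemma~\ref{lem-first-step} and Theorem~\ref{thm-limits-coincide} are available. The genuine difficulty lies upstream: Theorem~\ref{thm-limits-coincide} requires the technical martingale-plus-compactness argument sketched after its statement, and Lemma~\ref{lem-first-step} already carries out the delicate $\varepsilon$-$\gamma$ squeeze of $\pi_1(E_2) + \pi_2(E_1)$ below $\mi(k)$ using the almost-sure convergence $L_k \to \bar{L}$ from Proposition~\ref{prop-lim-exists}. Given those results, the theorem is essentially immediate.
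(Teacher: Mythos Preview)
Your proposal is correct and follows essentially the same approach as the paper: the paper chains the same four ingredients --- the definition of distance, identity~\eqref{eq-prop-lim-exists}, Lemma~\ref{lem-first-step}, and Theorem~\ref{thm-limits-coincide} --- into a single inequality string $d(\pi_1,\pi_2) \ge \pi_1(E_1) - \pi_2(E_1) = 1 - \pi_1(E_2) - \pi_2(E_1) \ge 1 - \mi(\infty) = d(\pi_1,\pi_2)$, which is just a compressed presentation of your two-inequality argument.
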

\begin{proof}
We have:
\begin{align*}
  d(\pi_1,\pi_2)
  & \ge \pi_1(E_1) - \pi_2(E_1)     && \text{(definition of distance)} \\
  &  =  1 - \pi_1(E_2) - \pi_2(E_1) && \text{(by~\eqref{eq-prop-lim-exists})} \\
  & \ge 1 - \mi(\infty)             && \text{(Lemma~\ref{lem-first-step})} \\
  &  =  d(\pi_1,\pi_2)              && \text{(Theorem~\ref{thm-limits-coincide})}
\end{align*}
\end{proof}

\section{Irrational Distances and Lower Bounds} \label{sec-irrational}

The following proposition shows that the distance can be irrational even if all numbers in the description of the LMC are rational.
\begin{figure}
\begin{center}
\scalebox{1.0}{
\begin{tikzpicture}[scale=2.5,LMC style]
\node[state] (q1) at (-1,0) {$q_1$};
\node[state] (q2) at (+1,0) {$q_2$};
\node[state] (r)  at ( 0,0) {$r$};
\path[->] (q1) edge node[above] {$x \ c$} (r);
\path[->] (q2) edge node[above] {$x \ c$} (r);
\path[->] (q1) edge [loop,out=70,in=110,looseness=20] node[above] {$\frac12 a$} (q1);
\path[->] (q1) edge [loop,out=290,in=250,looseness=20] node[below] {$(\frac12 - x) b$} (q1);
\path[->] (q2) edge [loop,out=110,in=70,looseness=20] node[above] {$(\frac12 - x) a$} (q2);
\path[->] (q2) edge [loop,out=250,in=290,looseness=20] node[below] {$\frac12 b$} (q2);
\path[->] (r) edge [loop,out=110,in=70,looseness=20] node[above] {$1 c$} (r);
\end{tikzpicture}
}
\end{center}
\caption{In this LMC, $x \in (0,\frac12)$ is a parameter.
For $x = \frac14$ the Dirac distributions $\delta_{q_1}$ and $\delta_{q_2}$ have distance $\sqrt{2}/4 \not\in \Q$.}
\label{fig-irrational-1}
\end{figure}
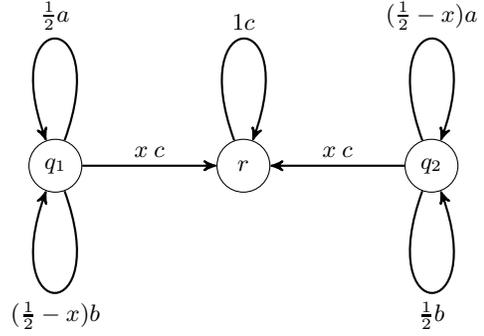
\newcommand{\stmtpropirrationalone}{
Consider the LMC shown in Figure~\ref{fig-irrational-1}, with parameter $x \in (0, \frac12)$.
We have $d(\delta_{q_1}, \delta_{q_2}) = \frac12 \sqrt{2 x}$.
}
\begin{proposition} \label{prop-irrational-1}
\stmtpropirrationalone
\end{proposition}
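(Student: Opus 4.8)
The plan is to invoke Theorem~\ref{thm-maximising-event}: first pin down the limiting likelihood ratio~$\bar L$, then read off the maximizing event $E_1 = \{\bar L \le 1\}$, and finally evaluate $\delta_{q_1}(E_1) - \delta_{q_2}(E_1)$ by a short random-walk computation.

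\emph{Run structure and $\bar L$.} From either $q_1$ or~$q_2$ the first emitted~$c$ moves the chain to~$r$, after which it emits $c^\omega$; since every step emits~$c$ with probability $x>0$, almost every run has the form $w c^\omega$ with $w \in \{a,b\}^*$. For $w \in \{a,b\}^*$ one reads off the chain that $|\delta_{q_1}^w| = (\tfrac12)^{\#_a(w)}(\tfrac12-x)^{\#_b(w)}$ and $|\delta_{q_2}^w| = (\tfrac12-x)^{\#_a(w)}(\tfrac12)^{\#_b(w)}$, and emitting~$c$ and any number of further letters multiplies both by~$x$. Hence along a run $w c^\omega$ the ratio $L_i = |\pi_2^{r_i}| / |\pi_1^{r_i}|$ stabilizes, and
\[
 \bar L(w c^\omega) = \frac{|\delta_{q_2}^w|}{|\delta_{q_1}^w|} = (1-2x)^{\#_a(w) - \#_b(w)}\,.
\]
Since $0 < 1-2x < 1$, we have $\bar L \le 1$ iff $\#_a(w) \ge \#_b(w)$; as the runs \emph{not} of the form $w c^\omega$ form a null set under both $\delta_{q_1}$ and~$\delta_{q_2}$, the event $E_1 = \{\bar L \le 1\}$ coincides (up to a null set) with $\{\, w c^\omega \mid w \in \{a,b\}^*,\ \#_a(w) \ge \#_b(w)\,\}$, the event already announced in Example~1.

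\emph{Reduction to a walk.} Let $D$ be the random variable $\#_a(w) - \#_b(w)$, where $w \in \{a,b\}^*$ is the (a.s.\ finite) maximal $\{a,b\}$-prefix of the run. Then $\delta_{q_1}(E_1) = \delta_{q_1}(D \ge 0)$, and since relabelling $a \leftrightarrow b$ is an isomorphism carrying the chain started in~$q_1$ to the chain started in~$q_2$ and sending $D \mapsto -D$, we also get $\delta_{q_2}(E_1) = \delta_{q_2}(D \ge 0) = \delta_{q_1}(D \le 0)$. By Theorem~\ref{thm-maximising-event},
\[
 d(\delta_{q_1},\delta_{q_2}) = \delta_{q_1}(D \ge 0) - \delta_{q_1}(D \le 0) = \delta_{q_1}(D > 0) - \delta_{q_1}(D < 0)\,.
\]
Under~$\delta_{q_1}$, $D$ is the terminal value of a walk on~$\mathbb{Z}$ starting at~$0$ that at each step moves $+1$ with probability~$\tfrac12$, moves $-1$ with probability~$\tfrac12-x$, and is absorbed with probability~$x$. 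Let $f$ (resp.~$g$) be the probability that this walk ever attains net displacement~$+1$ (resp.~$-1$) before absorption. A first-step analysis gives $f = \tfrac12 + (\tfrac12-x) f^2$ and $g = (\tfrac12-x) + \tfrac12 g^2$; the roots lying in~$[0,1]$ are $f = \tfrac{1}{1+\sqrt{2x}}$ and $g = 1-\sqrt{2x}$. By translation invariance and a first-passage decomposition, $\delta_{q_1}(D = k) = f^{k} h_0$ for $k \ge 0$ and $\delta_{q_1}(D = k) = g^{|k|} h_0$ for $k \le 0$, where $h_0 := \delta_{q_1}(D = 0)$; the normalization $\sum_{k \in \mathbb{Z}} \delta_{q_1}(D = k) = 1$ forces $h_0 = \frac{(1-f)(1-g)}{1-fg} = \tfrac12\sqrt{2x}$. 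Summing geometric series then yields $\delta_{q_1}(D > 0) = \frac{f}{1-f}\, h_0 = \tfrac12$ and $\delta_{q_1}(D < 0) = \frac{g}{1-g}\, h_0 = \tfrac{1-\sqrt{2x}}{2}$, whence $d(\delta_{q_1},\delta_{q_2}) = \tfrac12 - \tfrac{1-\sqrt{2x}}{2} = \tfrac12\sqrt{2x}$.

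\emph{Main obstacle.} The structural observations and the geometric bookkeeping are routine; the part needing care is the combinatorial core — solving the two quadratics, selecting the probabilistically correct roots, and computing~$h_0$ from normalization — together with the measure-zero argument identifying $E_1$ with the announced event. (Alternatively, one could avoid Theorem~\ref{thm-maximising-event} and compute $d = 1 - \mi(\infty)$ via Theorem~\ref{thm-limits-coincide}, using $\mi(\infty) = x\sum_{w \in \{a,b\}^*} \min\{|\delta_{q_1}^w|, |\delta_{q_2}^w|\}$ and the identity $\sum_i \binom{2i}{i} y^i = (1-4y)^{-1/2}$; this gives the same closed form.)
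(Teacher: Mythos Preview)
Your proof is correct, and its opening half mirrors the paper: you invoke Theorem~\ref{thm-maximising-event}, identify $E_1$ with $\{\,\#_a(w)\ge\#_b(w)\,\}$ up to a null set, and use the $a\leftrightarrow b$ symmetry to rewrite the distance entirely in terms of the $\delta_{q_1}$-law of $D=\#_a(w)-\#_b(w)$.

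Where you diverge is in the combinatorial core. The paper computes $\pi_1(E_\le)$ as a double sum and uses the identity $\sum_{m=0}^n\binom{m+n}{m}2^{-m}=2^n$ to show it equals~$\tfrac12$, whence $d=\pi_1(E_=)$; it then evaluates $\pi_1(E_=)=\sum_n\binom{2n}{n}(\tfrac12)^n(\tfrac12-x)^n x$ via the central-binomial generating function $\sum_n\binom{2n}{n}y^n=(1-4y)^{-1/2}$ (your closing parenthetical). You instead treat $D$ as the absorption position of a biased $\pm1$ walk with killing rate~$x$, solve the first-passage quadratics for $f$ and~$g$, and read off the two-sided geometric law of~$D$ by the strong Markov property, recovering $h_0=\pi_1(E_=)=\tfrac12\sqrt{2x}$ from normalization. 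Your route trades two nontrivial binomial identities for a pair of quadratics plus a standard first-passage decomposition; it is arguably more self-contained (no external identities) and makes the probabilistic structure explicit, while the paper's route makes the link to classical generating functions visible and generalizes more readily to other step distributions. Both arrive at the same intermediate facts $\delta_{q_1}(D>0)=\tfrac12$ and $\delta_{q_1}(D=0)=\tfrac12\sqrt{2x}$.
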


We start with a technical lemma.

\begin{lemma} \label{lem-irrational-binomial-series}
 For $y \in [0,\frac14)$ we have
 \[
  \sum_{n=0}^\infty {2 n \choose n} y^k \quad = \quad \frac{1}{\sqrt{1 - 4 y}}\,.
 \]
\end{lemma}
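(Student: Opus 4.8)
The plan is to recognize the left-hand side (reading the exponent as $y^n$, matching the summation index) as the Taylor series of $(1-4y)^{-1/2}$, via Newton's generalized binomial theorem. Recall that for any real $\alpha$ and any $z$ with $|z| < 1$ one has $(1+z)^\alpha = \sum_{n=0}^\infty \binom{\alpha}{n} z^n$, where $\binom{\alpha}{n} := \frac{\alpha(\alpha-1)\cdots(\alpha-n+1)}{n!}$. I would apply this with $\alpha = -\frac12$ and $z = -4y$; since $y \in [0,\frac14)$ we have $|z| = 4y < 1$, so the series converges and equals $(1-4y)^{-1/2}$.

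The computational heart is the identity $\binom{-1/2}{n}(-4)^n = \binom{2n}{n}$. To see this, write $\binom{-1/2}{n} = \frac{(-1/2)(-3/2)\cdots(-(2n-1)/2)}{n!} = \frac{(-1)^n}{2^n\, n!}\bigl(1\cdot 3\cdots(2n-1)\bigr)$, and use $1\cdot 3\cdots(2n-1) = \frac{(2n)!}{2^n\, n!}$ to get $\binom{-1/2}{n} = \frac{(-1)^n (2n)!}{4^n (n!)^2} = \frac{(-1)^n}{4^n}\binom{2n}{n}$. Multiplying by $(-4)^n$ cancels the sign and the power of $4$, leaving exactly $\binom{2n}{n}$. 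Substituting term by term into the binomial series then yields $\sum_{n=0}^\infty \binom{2n}{n} y^n = (1-4y)^{-1/2}$, which is the claim.

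I do not expect a genuine obstacle here; the only points needing a line of care are (i) convergence of the series on $[0,\frac14)$, which already follows from $|z| < 1$ in the binomial theorem (or from the crude bound $\binom{2n}{n} \le 4^n$), and (ii) the elementary factorial manipulation above. As an alternative route that avoids the generalized binomial theorem, one may set $f(y) := \sum_{n=0}^\infty \binom{2n}{n} y^n$, use the recurrence $(n+1)\binom{2n+2}{n+1} = (4n+2)\binom{2n}{n}$ to verify termwise that $(1-4y)f'(y) = 2 f(y)$ with $f(0) = 1$, and solve this separable ODE to obtain $f(y) = (1-4y)^{-1/2}$.
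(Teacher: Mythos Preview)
Your proof is correct and matches the paper's own argument: both apply the generalized binomial series $(1-4y)^{-1/2}=\sum_{n\ge 0}\binom{-1/2}{n}(-4y)^n$ and then verify the coefficient identity $\binom{-1/2}{n}(-4)^n=\binom{2n}{n}$ (the paper just says ``by induction'' where you do the explicit double-factorial computation). Your additional remarks on convergence and the alternative ODE route are fine but not needed.
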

\begin{proof}
By a binomial series we have:
\begin{align*}
\frac{1}{\sqrt{1 - 4 y}}
  & = (1 - 4 y)^{-1/2}
   = \sum_{n=0}^\infty {-1/2 \choose n} (-4 y)^n
\end{align*}
By induction on $n \in \N$ one can show that ${-1/2 \choose n} = {2 n \choose n} \frac{(-1)^n}{4^n}$.
The lemma follows.
\end{proof}

\begin{proof}[Proof of Proposition~\ref{prop-irrational-1}]
We write $\pi_1 := \delta_{q_1}$ and $\pi_2 := \delta_{q_2}$.
Define $C := \{w c c c \ldots \mid w \in \{a,b\}^*\} \subseteq \Sigma^\omega$.
Clearly we have $\pi_1(C) = \pi_2(C) = 1$.
Define
 \begin{align*}
  E_> &:= \{w c c c \ldots \mid w \in \Sigma^*, \ \#_a(w) > \#_b(w) \} \quad \text{and} \\
  E_= &:= \{w c c c \ldots \mid w \in \Sigma^*, \ \#_a(w) = \#_b(w) \} \;,
 \end{align*}
 where $\#_a(w)$ and~$\#_b(w)$ denote the number of occurrences of $a$ resp.~$b$ in the word~$w$.
The events $E_\ge, E_<, E_\le$ are defined accordingly.

Recall the event $E_1 = \{\bar{L} \le 1\} \subseteq \Sigma^\omega$ from Section~\ref{sec-maximizing}.
Using the fact that the LMC in Figure~\ref{fig-irrational-1} is ``deterministic''
 (i.e., for each $a \in \Sigma$ and $q \in Q$ there is at most one $q' \in Q$ with $M(a)(q,q') > 0$),
 it is easy to verify that we have $E_1 \cap C = E_\ge$.
We have:
\begin{align}
 & \quad d(\pi_1, \pi_2) \notag \\
 & = \pi_1(E_1) - \pi_2(E_1)               && \text{(Theorem~\ref{thm-maximising-event})} \notag\\
 & = \pi_1(E_1 \cap C) - \pi_2(E_1 \cap C) && \text{(as $\pi_1(C) = \pi_2(C) = 1$)} \notag\\
 & = \pi_1(E_\ge) - \pi_2(E_\ge)           && \text{(as argued above)} \notag\\
 & = \pi_1(E_\ge) - \pi_1(E_\le)           && \text{(by symmetry of the chain)} \notag\\
 & = \pi_1(E_\ge) - (1 - \pi_1(E_>))       && \text{(as $\pi_1(C) = 1$)} \notag\\
 & = 2 \pi_1(E_>) + \pi_1(E_=) - 1         && \text{(by the definitions)}\,. \label{eq-irrational->=}
\end{align}
The following identity is proved in~\cite[p.167, (5.20)]{Concrete-Mathematics} and in~\cite{world-series} with a short combinatorial proof:
\begin{equation} \label{eq-knuth}
\sum_{m=0}^n {m+n \choose m} \left( \frac12 \right)^m \quad = \quad 2^n \qquad \text{for $n \in \N$.}
\end{equation}
For $m,n \in \N$ define $E(m,n) := \{w c c c \ldots \mid w \in \Sigma^*, \ \#_a(w) = m, \ \#_b(w) = n\}$.
We have:
\begin{align*}
 \pi_1(E_\le)
 & = \sum_{n=0}^\infty \sum_{m=0}^n \pi_1(E(m,n)) \\
 & = \sum_{n=0}^\infty \underbrace{\sum_{m=0}^n {m+n \choose m} \left( \frac12 \right)^m}_{= 2^n \text{ by~\eqref{eq-knuth}}} \left(\frac12 - x\right)^n x \\
 & = \frac{x}{1 - 2 (\frac12 -  x)} = \frac12
\end{align*}
So we have $\pi_1(E_>) = 1 - \pi_1(E_\le) = \frac12$ and hence by~\eqref{eq-irrational->=}
\begin{equation}
 d(\pi_1, \pi_2) = \pi_1(E_=) \,. \label{eq-irrational=}
\end{equation}
We have:
\begin{align*}
\pi_1(E_=)
& = \sum_{n=0}^\infty \pi_1(E(n,n)) \\
& = \sum_{n=0}^\infty {2 n \choose n} \left( \frac12 \right)^n \left(\frac12 - x\right)^n x \\
& = \frac{x}{\sqrt{1 - 4 \left( \frac14 - \frac12 x \right)}} = \frac12 \sqrt{2 x}  && \text{(by Lemma~\ref{lem-irrational-binomial-series})\,.}
\end{align*}
so the statement follows with~\eqref{eq-irrational=}.
\end{proof}

Note that when $x = \frac14$, the LMC shown in Figure~\ref{fig-irrational-1} is essentially
the union of the two LMCs shown in Figure~\ref{fig-intro-irrational}.
Proposition~\ref{prop-irrational-1} states that  $d(\delta_{q_1}, \delta_{q_2}) = \sqrt{2}/4$, thus substantiating a claim in Section~\ref{example1}.

This example suggests that in general it is not obvious what \emph{computing} the distance means, as it may be irrational.
Nevertheless it is shown in~\cite[Section~6]{LyngsoP02} that computing the distance is NP-hard (with respect to Turing reductions).
In that reduction the computed LMCs have a rational distance by construction.
However, in light of Proposition~\ref{prop-irrational-1} it may be more natural to study the \emph{threshold-distance} problem
 defined as follows:
Given an LMC, two initial distributions $\pi_1, \pi_2$, and a threshold $\tau \in [0,1] \cap \Q$,
 decide whether $d(\pi_1, \pi_2) \ge \tau$.

By Proposition~\ref{prop:approx-np-hard}, together with a binary search, the following lower bound follows:
\newcommand{\stmtthmNPhardness}{
The threshold-distance problem is NP-hard with respect to Turing reductions.
}
\begin{proposition} \label{prop-NP-hardness}
\stmtthmNPhardness
\end{proposition}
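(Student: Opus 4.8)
The plan is to combine Proposition~\ref{prop:approx-np-hard} with a binary search. Proposition~\ref{prop:approx-np-hard} already tells us that approximating $d(\pi_1,\pi_2)$ up to an $\varepsilon$ of polynomial size is NP-hard with respect to Turing reductions. I would give a polynomial-time Turing reduction from this approximation problem to the threshold-distance problem, using the latter as an oracle; since Turing reductions compose, this shows the threshold-distance problem is NP-hard with respect to Turing reductions.

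\textbf{The binary search.} Concretely, suppose we are given an LMC together with initial distributions $\pi_1,\pi_2$ and a rational $\varepsilon>0$ with $\log_2(1/\varepsilon)$ polynomially bounded in the input size; put $m := \lceil \log_2(1/\varepsilon) \rceil$. I would maintain an interval $[\ell,u] \subseteq [0,1]$ with the invariant $d(\pi_1,\pi_2) \in [\ell,u]$, starting from $[\ell,u] = [0,1]$ (valid since $d(\pi_1,\pi_2) \in [0,1]$). In each of $m$ rounds, set $\tau := (\ell+u)/2 \in [0,1]\cap\Q$, query the oracle whether $d(\pi_1,\pi_2) \ge \tau$, and update $\ell := \tau$ if the answer is ``yes'' and $u := \tau$ otherwise. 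In both cases the invariant is preserved and the width of the interval is halved, so after $m$ rounds the interval has width $2^{-m} \le \varepsilon$; outputting $a := \ell$ yields a rational with $d(\pi_1,\pi_2) \in [a, a+\varepsilon]$, as required.

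\textbf{Why this is a valid reduction.} The search makes only $m$ oracle calls, each on the given LMC and on a threshold $\tau$ that is a rational with $O(m)$ bits, hence of polynomial size; and all arithmetic on $\ell,u,\tau$ stays over rationals of polynomial bit-length and runs in polynomial time. Composing this reduction with the Turing reduction of Proposition~\ref{prop:approx-np-hard} (choosing the approximation precision $\varepsilon$ to be the polynomially small one for which that proposition asserts NP-hardness) gives a polynomial-time Turing reduction from an NP-hard problem to the threshold-distance problem, which finishes the proof. I do not expect a genuine obstacle here; the only point that needs a little care is the bookkeeping — checking that the thresholds produced by the search remain rationals of polynomial bit-length and that the precision reached after $m$ steps matches the one required by Proposition~\ref{prop:approx-np-hard}.
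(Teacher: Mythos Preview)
Your proposal is correct and matches the paper's own argument: the paper states that Proposition~\ref{prop-NP-hardness} follows from Proposition~\ref{prop:approx-np-hard} ``together with a binary search'', and you have spelled out exactly that reduction. (The paper additionally sketches, in an appendix, an alternative proof by directly adapting the clique reduction of \cite{LyngsoP02} to the threshold problem, but that is presented as a remark rather than the primary argument.)
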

We remark that this can also be done by
modifying the reduction from~\cite{LyngsoP02}, \iftechrep{see Appendix~\ref{app-irrational}}{see~\cite{14CK-lics-report}}.

In the following we give another lower bound for the threshold-distance problem:
 the problem is hard for the square-root-sum problem, as we explain now.
Following~\cite{ABKM09} the \emph{square-root-sum} problem is defined as follows.
Given natural numbers $s_1, \ldots, s_n \in \N$ and $t \in \N$, decide whether $\sum_{i=1}^n \sqrt{s_i} \ge t$.
Membership of square-root-sum in NP has been open since 1976 when Garey, Graham and Johnson~\cite{GGJ76}
 showed NP-hardness of the travelling-salesman problem with Euclidean distances, but left membership in NP open.
It is known that square-root-sum reduces to PosSLP and hence lies in the 4th level of the counting hierarchy,
 see~\cite{ABKM09} and the references therein for more information on square-root-sum, PosSLP, and the counting hierarchy.

We use the LMC from Figure~\ref{fig-irrational-1} as a ``gadget'' to prove hardness for the square-root-sum problem:
\newcommand{\stmtthmsquarerootsum}{
There is a polynomial-time many-one reduction from the square-root-sum problem to the threshold-distance problem.
}
\begin{theorem} \label{thm-square-root-sum}
\stmtthmsquarerootsum
\end{theorem}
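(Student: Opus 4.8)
The plan is to use the LMC of Figure~\ref{fig-irrational-1} as a ``square-root gadget''. By Proposition~\ref{prop-irrational-1}, instantiating the parameter with~$x$ gives $d(\delta_{q_1},\delta_{q_2}) = \frac12\sqrt{2x}$, and, crucially, the proof of that proposition exhibits a \emph{maximizing} event, namely $E_\ge = \{w c^\omega \mid w\in\{a,b\}^*,\ \#_a(w)\ge\#_b(w)\}$, which does \emph{not} depend on~$x$. Given a square-root-sum instance $s_1,\dots,s_n,t$, I would first discard the indices with $s_i=0$ (they contribute nothing and would force the illegal parameter~$0$), handling the degenerate case $n=0$ directly. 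Let $s_{\max} := \max_i s_i$ and $S := s_{\max}+1$. If $t \ge nS$ then, since $\sum_i\sqrt{s_i}\le n\sqrt{s_{\max}} < nS \le t$, the answer is ``no'', and I would output a fixed negative threshold-distance instance (e.g.\ the one-state chain with $\delta_q$ versus $\delta_q$ and threshold~$1$). Otherwise set $x_i := s_i/(2S^2)$; then $x_i \in (0,\tfrac12)\cap\Q$ has bit-size polynomial in the input, and a copy of the gadget with parameter~$x_i$ realizes $d(\delta_{q_1^{(i)}},\delta_{q_2^{(i)}}) = \tfrac12\sqrt{2x_i} = \sqrt{s_i}/(2S)$.

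Next I would let $\M$ be the disjoint union of these $n$ gadgets and consider the initial distributions $\pi_1 := \tfrac1n\sum_{i=1}^n\delta_{q_1^{(i)}}$ and $\pi_2 := \tfrac1n\sum_{i=1}^n\delta_{q_2^{(i)}}$. The heart of the argument is the identity $d(\pi_1,\pi_2) = \tfrac1n\sum_{i=1}^n d(\delta_{q_1^{(i)}},\delta_{q_2^{(i)}}) = \tfrac{1}{2nS}\sum_{i=1}^n\sqrt{s_i}$. The inequality ``$\le$'' is the convexity of total variation distance: for any event~$E$, $\pi_1(E)-\pi_2(E) = \tfrac1n\sum_i(\delta_{q_1^{(i)}}(E)-\delta_{q_2^{(i)}}(E)) \le \tfrac1n\sum_i d(\delta_{q_1^{(i)}},\delta_{q_2^{(i)}})$, and then one takes the supremum over~$E$. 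For ``$\ge$'' I would use the single global event~$E_\ge$: starting in $q_1^{(i)}$ or $q_2^{(i)}$, a run is almost surely of the form $wc^\omega$ with $w\in\{a,b\}^*$, so within gadget~$i$ the set $E_\ge$ coincides with the maximizing event of Proposition~\ref{prop-irrational-1}, whence $\delta_{q_1^{(i)}}(E_\ge)-\delta_{q_2^{(i)}}(E_\ge) = d(\delta_{q_1^{(i)}},\delta_{q_2^{(i)}})$ for every~$i$; averaging gives $\pi_1(E_\ge)-\pi_2(E_\ge) = \tfrac{1}{2nS}\sum_i\sqrt{s_i}$. (Measurability of~$E_\ge$ is not an issue; it is already used in Proposition~\ref{prop-irrational-1}, being a countable union of singleton runs.)

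Finally I would output the threshold-distance instance $(\M,\pi_1,\pi_2,\tau)$ with $\tau := t/(2nS)$. Since we are in the case $t<nS$, we have $\tau < \tfrac12 < 1$, so $\tau\in[0,1]\cap\Q$, and the whole construction is plainly polynomial-time. By the identity above, $d(\pi_1,\pi_2)\ge\tau$ holds if and only if $\sum_{i=1}^n\sqrt{s_i}\ge t$, which is exactly the given square-root-sum question; this establishes the many-one reduction.

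I do not anticipate a deep obstacle: Proposition~\ref{prop-irrational-1} already supplies the analytic content (the value $\tfrac12\sqrt{2x}$ and the explicit maximizer). The two points needing care are (i) the observation that $E_\ge$ is parameter-free, so that a single event simultaneously witnesses the distance in all $n$ gadgets and makes the ``sum'' structure work, and (ii) the bookkeeping ensuring $x_i\in(0,\tfrac12)$ and $\tau\in[0,1]$ stay rational of polynomial size, together with the boundary cases ($s_i=0$, $t\ge nS$).
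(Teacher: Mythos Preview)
Your proposal is correct and follows essentially the same approach as the paper: both use the gadget of Figure~\ref{fig-irrational-1}, combine $n$ copies, and rely on the key observation that the maximizing event $E_\ge$ from Proposition~\ref{prop-irrational-1} is parameter-free so that a single event witnesses the distance simultaneously in all gadgets. The only cosmetic differences are that the paper joins the gadgets by adding two fresh initial states $p_1,p_2$ with distinguishing first-letter labels $c_1,\dots,c_n$ (and hence works with Dirac initial distributions), whereas you take the disjoint union and use genuine mixture initial distributions together with the convexity bound for the ``$\le$'' direction; and the paper uses the scaling $h=3\max_i s_i$ where you use $2S=2(\max_i s_i+1)$.
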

\begin{proof}[Proof sketch]
The construction is by taking the LMC from Figure~\ref{fig-irrational-1} as a gadget, and joining $n$ instances of it in parallel.
This is sketched for $n=3$ in Figure~\ref{fig-irrational-2}.
\begin{figure}
\begin{center}
\scalebox{0.78}{
\begin{tikzpicture}[scale=2.5, LMC style]
\node[state] (p1) at (-2,0) {$p_1$};
\node[state] (p2) at (+2,0) {$p_2$};
\node[state] (q11) at (-1,1.3) {$q_1^1$};
\node[state] (q12) at (-1,0) {$q_1^2$};
\node[state] (q13) at (-1,-1.3) {$q_1^3$};
\node[state] (q21) at (+1,1.3) {$q_2^1$};
\node[state] (q22) at (+1,0) {$q_2^2$};
\node[state] (q23) at (+1,-1.3) {$q_2^3$};
\node[state] (r)  at ( 0,0) {$r$};
\path[->] (p1) edge node[left] {$\frac13 c_1$} (q11.south west);
\path[->] (p1) edge node[above] {$\frac13 c_2$} (q12);
\path[->] (p1) edge node[left] {$\frac13 c_3$} (q13.north west);
\path[->] (p2) edge node[right] {$\frac13 c_1$} (q21.south east);
\path[->] (p2) edge node[above] {$\frac13 c_2$} (q22);
\path[->] (p2) edge node[right] {$\frac13 c_3$} (q23.north east);
\path[->] (q11) edge node[right] {$x_1 \ c$} (r.north west);
\path[->] (q12) edge node[above] {$x_2 \ c$} (r);
\path[->] (q13) edge node[right] {$x_3 \ c$} (r.south west);
\path[->] (q21) edge node[left] {$x_1 \ c$} (r.north east);
\path[->] (q22) edge node[above] {$x_2 \ c$} (r);
\path[->] (q23) edge node[left] {$x_3 \ c$} (r.south east);
\path[->] (q11) edge [loop,out=70,in=110,looseness=10] node[above] {$\frac12 a$} (q11);
\path[->] (q12) edge [loop,out=70,in=110,looseness=10] node[above] {$\frac12 a$} (q12);
\path[->] (q13) edge [loop,out=70,in=110,looseness=10] node[above] {$\frac12 a$} (q13);
\path[->] (q11) edge [loop,out=290,in=250,looseness=10] node[below] {$(\frac12 - x_1) b$} (q11);
\path[->] (q12) edge [loop,out=290,in=250,looseness=10] node[below] {$(\frac12 - x_2) b$} (q12);
\path[->] (q13) edge [loop,out=290,in=250,looseness=10] node[below] {$(\frac12 - x_3) b$} (q13);
\path[->] (q21) edge [loop,out=110,in=70,looseness=10] node[above] {$(\frac12 - x_1) a$} (q21);
\path[->] (q22) edge [loop,out=110,in=70,looseness=10] node[above] {$(\frac12 - x_2) a$} (q22);
\path[->] (q23) edge [loop,out=110,in=70,looseness=10] node[above] {$(\frac12 - x_3) a$} (q23);
\path[->] (q21) edge [loop,out=250,in=290,looseness=10] node[below] {$\frac12 b$} (q21);
\path[->] (q22) edge [loop,out=250,in=290,looseness=10] node[below] {$\frac12 b$} (q22);
\path[->] (q23) edge [loop,out=250,in=290,looseness=10] node[below] {$\frac12 b$} (q23);
\path[->] (r) edge [loop,out=110,in=70,looseness=10] node[above] {$1 c$} (r);
\end{tikzpicture}
}
\end{center}
\caption{This LMC is obtained by combining the chain from Figure~\ref{fig-irrational-1} in parallel $n=3$ times.
We have $d(\delta_{p_1}, \delta_{p_2}) =
  \frac13 \left( d(\delta_{q_1^1}, \delta_{q_2^1}) + d(\delta_{q_1^2}, \delta_{q_2^2}) + d(\delta_{q_1^3}, \delta_{q_2^3}) \right)$.}
\label{fig-irrational-2}
\end{figure}
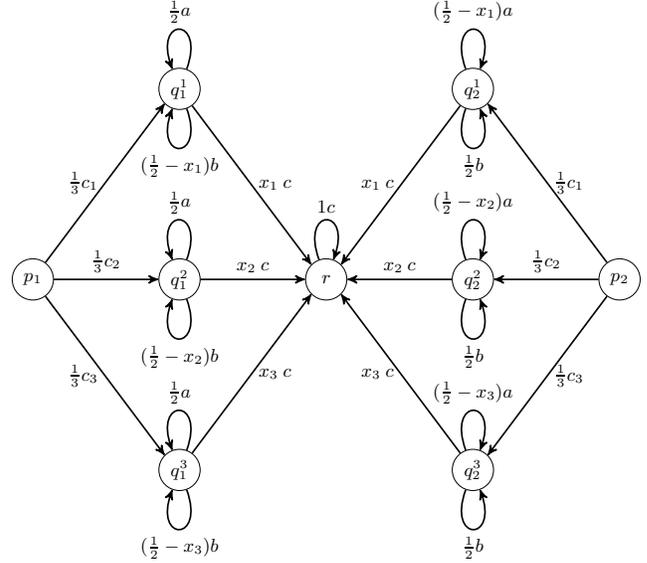
In general we have $\Sigma = \{c_1, \ldots, c_n, a, b, c\}$ and $Q = \{p_1, p_2, q_1^1, \ldots, q_1^n, q_2^1, \ldots, q_2^n, r\}$.
Using this construction we have
\begin{equation} \label{eq-square-root-1-mainbody}
d(\delta_{p_1}, \delta_{p_2}) = \frac{1}{n} \sum_{i=1}^n d(\delta_{q_1^i}, \delta_{q_2^i})\,.
\end{equation}
We prove~\eqref{eq-square-root-1-mainbody} \iftechrep{in Appendix~\ref{app-irrational}}{\cite{14CK-lics-report}}.
From the proof of Proposition~\ref{prop-irrational-1} we know the distances $d(\delta_{q_1^i}, \delta_{q_2^i})$
 and the corresponding maximizing events.
The proof is completed by suitably choosing the $x_i$ and the threshold~$\tau$,
 see \iftechrep{Appendix~\ref{app-irrational}}{\cite{14CK-lics-report}}.
\end{proof}

\subsection{Bernoulli Convolutions} \label{sec-function}
In this section, we establish another ``lower bound" by demonstrating a link to Bernoulli convolutions. 
Consider the LMC in Figure~\ref{fig-Bernoulli} which has two parameters: $\theta > 1$ and $x \in [-\frac12, \frac12]$.
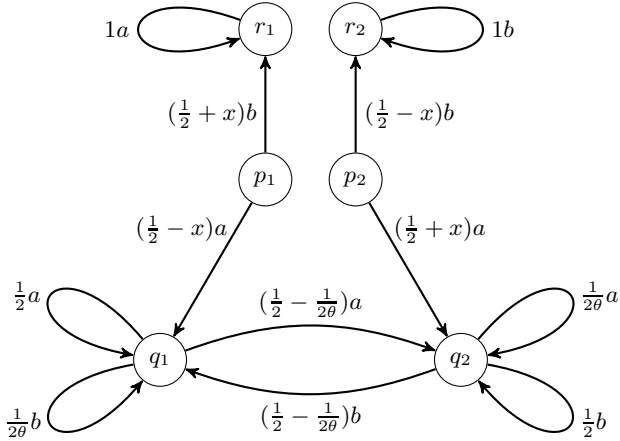
\begin{figure}
\begin{center}
\scalebox{1.0}{
\begin{tikzpicture}[scale=4,LMC style] 
\node[state] (r1) at (0.35,1.1) {$r_1$};
\node[state] (r2) at (0.65,1.1) {$r_2$};
\node[state] (p1) at (0.35,0.6) {$p_1$};
\node[state] (p2) at (0.65,0.6) {$p_2$};
\node[state] (q1) at (0,0) {$q_1$};
\node[state] (q2) at (1,0) {$q_2$};
\path[->] (q1) edge [bend left=20] node[above] {$(\frac12 - \frac1{2\theta}) a$} (q2);
\path[->] (q2) edge [bend left=20] node[below] {$(\frac12 - \frac1{2\theta}) b$} (q1);
\path[->] (q1) edge [loop,out=130,in=170,looseness=20] node[left] {$\frac12 a$} (q1);
\path[->] (q1) edge [loop,out=190,in=230,looseness=20] node[left] {$\frac1{2\theta} b$} (q1);
\path[->] (q2) edge [loop,out=50,in=10,looseness=20] node[right] {$\frac1{2\theta} a$} (q2);
\path[->] (q2) edge [loop,out=-10,in=-50,looseness=20] node[right] {$\frac12 b$} (q2);
\path[->] (p1) edge node[left,pos=0.2] {$(\frac12 - x) a$} (q1);
\path[->] (p2) edge node[right,pos=0.2] {$(\frac12 + x) a$} (q2);
\path[->] (p1) edge node[left,pos=0.4] {$(\frac12 + x) b$} (r1);
\path[->] (p2) edge node[right,pos=0.4] {$(\frac12 - x) b$} (r2);
\path[->] (r1) edge [loop,out=160,in=200,looseness=20] node[left] {$1 a$} (r1);
\path[->] (r2) edge [loop,out=20,in=340,looseness=20] node[right] {$1 b$} (r2);
\end{tikzpicture}
}
\end{center}
\caption{The distance between state $p_1, p_2$ depends on a \emph{Bernoulli-convolution}.}
\label{fig-Bernoulli}
\end{figure}
For each $\theta > 1$, denote by $d_\theta : [-\frac12, \frac12] \to [0,1]$ the function such that $d_\theta(x)$ is the distance between states $p_1$ and~$p_2$
 in the chain with parameters $\theta$ and~$x$.
Using the Banach fixed-point theorem one can show
\iftechrep{(see Appendix~\ref{app-function})}{(see~\cite{14CK-lics-report})}:
\newcommand{\stmtpropdistancefunction}{
For all $\theta > 1$ we have $d_\theta(x) = \frac12 + \frac12 f_\theta(x)$ for the unique function $f_\theta : \R \to \R$ with
\[
 f_\theta(x) = \begin{cases}
                 -2 x & x \le -\frac12 \\
                 \begin{array}{l}
                  \hspace{+1mm} \frac{1}{2 \theta} f_\theta( \theta x - (\frac12 \theta - \frac12)) \\
                  \hspace{-3mm} \mbox{} + \frac{1}{2 \theta} f_\theta( \theta x + (\frac12 \theta - \frac12))
                 \end{array} & x \in [-\frac12, \frac12] \\
                 2 x & x \ge +\frac12\;.
               \end{cases}
\]
}
\begin{proposition} \label{prop-distance-function}
\stmtpropdistancefunction
\end{proposition}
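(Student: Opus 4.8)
The plan is to prove the proposition in two parts: (i) existence and uniqueness of a function $f_\theta$ satisfying the displayed system — this is the ``Banach fixed-point'' part — and (ii) the identity $d_\theta(x) = \frac12 + \frac12 f_\theta(x)$, which I would establish by showing that $x \mapsto 2 d_\theta(x) - 1$ itself satisfies the same system and then invoking the uniqueness from (i).

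For (i), write $c := \frac12\theta - \frac12 > 0$ and work in the Banach space $X$ of bounded functions $[-\tfrac12,\tfrac12]\to\R$ under the sup-norm. To $g \in X$ associate its extension $\tilde g : \R\to\R$ with $\tilde g(x) = g(x)$ on $[-\tfrac12,\tfrac12]$, $\tilde g(x) = 2x$ for $x \ge \tfrac12$, and $\tilde g(x) = -2x$ for $x \le -\tfrac12$, and define $(Tg)(x) := \frac1{2\theta}\tilde g(\theta x - c) + \frac1{2\theta}\tilde g(\theta x + c)$ for $x\in[-\tfrac12,\tfrac12]$. Since $\tilde g_1 - \tilde g_2$ vanishes outside $[-\tfrac12,\tfrac12]$ and equals $g_1-g_2$ on it, one gets $\|Tg_1 - Tg_2\|_\infty \le \frac1\theta\|g_1-g_2\|_\infty$, so $T$ is a contraction (as $\theta>1$) with a unique fixed point; its canonical extension to $\R$ is the unique $f_\theta$ with the stated properties (any solution restricts to a fixed point of $T$, and the values off $[-\tfrac12,\tfrac12]$ are forced).

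For (ii) the engine is two ``one-step'' decompositions. First, splitting a run from $p_1$ (resp.\ $p_2$) by its first letter, and using that $r_1$ deterministically emits $a^\omega$ while $r_2$ deterministically emits $b^\omega$, one gets for every event $E$, writing $E_a := \{u : au\in E\}$,
$$\delta_{p_1}(E) - \delta_{p_2}(E) = \Big[(\tfrac12-x)\delta_{q_1}(E_a) - (\tfrac12+x)\delta_{q_2}(E_a)\Big] + (\tfrac12+x)\mathbf 1[ba^\omega\in E] - (\tfrac12-x)\mathbf 1[b^\omega\in E].$$
As $\Sigma=\{a,b\}$, a measurable event is determined by the pair of its derivatives $E_a,E_b$, and $E_a$ and the memberships of the runs $ba^\omega,b^\omega$ (which start with $b$, hence lie outside $\{a\}E_a$) are mutually unconstrained; taking the supremum over $E$ (a maximum, by Theorem~\ref{thm-maximising-event}) therefore gives $d_\theta(x) = (\tfrac12+x) + g(x)$, where $g(x) := \sup_{F\subseteq\Sigma^\omega}\big[(\tfrac12-x)\delta_{q_1}(F) - (\tfrac12+x)\delta_{q_2}(F)\big]$, defined for all $x\in\R$. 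A direct estimate (using $\delta_{q_i}(F)\in[0,1]$) gives $g(x)=0$ for $x\ge\tfrac12$ and $g(x)=-2x$ for $x\le-\tfrac12$. Second, split a run from $q_1$ (resp.\ $q_2$) by its first letter, expand through the transition probabilities of Figure~\ref{fig-Bernoulli}, and regroup the terms according to whether they involve $F_a$ or $F_b$; carrying out the algebra, the $F_a$-block and the $F_b$-block each come out equal to $\frac1{2\theta}$ times an expression of the same shape as the definition of $g$, but with $x$ replaced by $\theta x - c$ and by $\theta x + c$ respectively. Since again $F\leftrightarrow(F_a,F_b)$ is a bijection on measurable sets, taking suprema yields $g(x) = \frac1{2\theta}g(\theta x - c) + \frac1{2\theta}g(\theta x + c)$ for all $x$. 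Now put $h(x):=2x+2g(x)$: the boundary values of $g$ give $h(x)=\pm2x$ off $[-\tfrac12,\tfrac12]$, and for $x\in[-\tfrac12,\tfrac12]$ the identity $(\theta x - c)+(\theta x + c)=2\theta x$ combined with the recursion for $g$ gives $h(x)=\frac1{2\theta}h(\theta x - c)+\frac1{2\theta}h(\theta x + c)$. So $h$ solves the system, hence $h=f_\theta$ by uniqueness, and $d_\theta(x) = \tfrac12 + x + g(x) = \tfrac12 + \tfrac12 h(x)$ on $[-\tfrac12,\tfrac12]$, as claimed.

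The main obstacle is the second one-step decomposition: one must check that after expanding $(\tfrac12-x)\delta_{q_1}(F) - (\tfrac12+x)\delta_{q_2}(F)$ through the $\{q_1,q_2\}$-subchain and regrouping, both blocks are governed by the \emph{same} scaling constant $\frac1{2\theta}$ and by the \emph{precise} affine reparametrisations $x\mapsto \theta x \mp (\tfrac12\theta-\tfrac12)$ of the statement — careful bookkeeping rather than anything deep, and exactly the point where the chain's parameters were chosen to produce the Bernoulli-convolution recursion. The remaining points (that the supremum distributes over the first-letter split, and that the deterministic tails at $r_1,r_2$ collapse the $b$-contribution to the constant $\tfrac12+x$) are routine once $\Sigma=\{a,b\}$ is used.
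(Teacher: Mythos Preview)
Your proof is correct and takes a genuinely different route from the paper. The paper works through Theorem~\ref{thm-limits-coincide}: it writes $d_\theta(x)=1-\con(\infty)$, observes that for this particular LMC $\con(w)=|\pi_1^w\wedge\pi_2^w|$, introduces auxiliary recursive functions $\co_k,\li_k,\fs{k}$ (tracking the componentwise minimum and the $L_1$-norm of the difference), proves the identity $2\co_k(u,v)+\li_k(u-v)=|u+v|$, and then shows that the sequence $\fs{k}$ converges via Banach to the fixed point~$f_\theta$. You instead go straight to the definition $d=\sup_E(\pi_1(E)-\pi_2(E))$, split by the first emitted letter twice (once from $p_1,p_2$, once from $q_1,q_2$), and read off a self-similarity relation for the auxiliary supremum~$g$; Banach is then used only for uniqueness, to identify $2x+2g(x)$ with~$f_\theta$. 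Your argument is shorter and avoids the chain of lemmas about $\co_k$ and~$\li_k$; the paper's approach, by contrast, makes the connection to the approximation scheme of Section~\ref{sec-approx} explicit and would adapt more mechanically to other LMCs. The algebra you flag as the ``main obstacle'' does check out: with $c=\frac12\theta-\frac12$, the $F_a$-block of $(\frac12-x)\delta_{q_1}(F)-(\frac12+x)\delta_{q_2}(F)$ rewrites as $\frac1{2\theta}\big[(\frac12-(\theta x-c))\delta_{q_1}(F_a)-(\frac12+(\theta x-c))\delta_{q_2}(F_a)\big]$ and symmetrically for $F_b$, so the recursion for~$g$ is exactly as claimed. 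One small remark: you invoke Theorem~\ref{thm-maximising-event} to say the supremum is attained, but your argument does not actually need this; suprema distribute over the first-letter split without any attainment hypothesis.
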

It follows that the derivative of~$f_\theta$ must satisfy
\begin{equation} \label{eq-f-prime-functional}
 f'_\theta(x) = \begin{cases}
                 -2 & x \le -\frac12 \\
                 \begin{array}{l}
                 \hspace{+1mm} \frac{1}{2} f'_\theta( \theta x - (\frac12 \theta - \frac12)) \\
                 \hspace{-3mm} \mbox{} + \frac{1}{2} f'_\theta( \theta x + (\frac12 \theta - \frac12))
                 \end{array} & x \in [-\frac12, \frac12] \\
                 2 & x \ge +\frac12\;.
               \end{cases}
\end{equation}
Again, one can use the Banach fixed-point theorem to show that the solution~$f'_\theta$ is unique.

The functional equation~\eqref{eq-f-prime-functional} is known from the study of \emph{Bernoulli convolutions},
 see~\cite{Bernoulli-60-years} for a survey and \cite[Chapter 5]{Experimental-Mathematics} for a gentle introduction.
In this field the solution of~\eqref{eq-f-prime-functional} occurs (translated and rescaled)
 as the cumulative distribution function of the random variable $\sum_{i=0}^\infty X_i/\theta^i$,
 where the $X_i$ are random variables that take on $-1$ and~$+1$ with probability~$\frac12$ each.
Bernoulli convolutions have been studied since the 1930s.
It is known that the solutions of~\eqref{eq-f-prime-functional} are either absolutely continuous or singular on~$[-\frac12, \frac12]$,
 depending on~$\theta$.
For $\theta > 2$ they are singular; in fact, for $\theta = 3$ the function is the (ternary) Cantor function.
For $\theta = 2$ we have $f'_2(x) = 4 x$ for $x \in [-\frac12, \frac12]$.
Erd\H{o}s showed that if $\theta$ is a \emph{Pisot number}\footnote{%
A Pisot number is a real algebraic integer greater than 1 such that all its Galois conjugates are less than 1 in absolute value.
The smallest Pisot number ($\approx 1.3247$) is the real root of $x^3 - x - 1$. Another one is the golden ratio $(\sqrt{5} + 1)/2 \approx 1.6180$.},
 then $f'_\theta$ is singular.
However, for almost all $\theta \in (1,2]$ the function~$f'_\theta$ is absolutely continuous.
It is open, e.g., for $\theta=3/2$ whether $f'_\theta$ is absolutely continuous or purely singular.

We conclude from this relation to Bernoulli convolutions that the distance can depend on the probabilities in the LMC in intricate ways.

\section{The Distance-1 Problem} \label{sec-distance-1}

The \emph{distance-1 problem} asks whether $d(\pi_1, \pi_2) = 1$
 holds for a given LMC and two distributions~$\pi_1, \pi_2$.
%
For the rest of the section
we fix an LMC~$\M = (Q, \Sigma, M)$ and initial distributions $\pi_1,\pi_2$.
Recall from Proposition~\ref{prop-equivalence} that $d(\pi_1, \pi_2) = 0$ is equivalent to $\pi_1 \equiv \pi_2$,
 and that the latter problem, language equivalence, is known to be decidable in polynomial time \cite{13KMOWW-LMCS}.
In this section we show that the distance-1 problem can also be decided in polynomial time.
The algorithm and its correctness argument are much more subtle.
The following proposition provides a characterisation of the 
case $d(\pi_1, \pi_2) < 1$.
\begin{proposition} \label{prop-dist-1-con}
We have $d(\pi_1, \pi_2) < 1$ if and only if
 there are $w \in \Sigma^*$ and subdistributions $\mu_1, \mu_2$ with $\mu_1 \le \pi_1^w$ and $\mu_2 \le \pi_2^w$ and $\mu_1 \equiv \mu_2$
  and $|\mu_1| = |\mu_2| > 0$.
\end{proposition}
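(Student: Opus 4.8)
The plan is to prove both directions using the machinery from Sections~\ref{sec-approx} and~\ref{sec-maximizing}, in particular the quantities $\con(k)$ and Theorem~\ref{thm-limits-coincide}, which tell us $d(\pi_1,\pi_2) = 1 - \con(\infty)$ where $\con(\infty) = \lim_{k\to\infty}\con(k)$.

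First I would handle the ``if'' direction. Suppose there are $w \in \Sigma^*$ and subdistributions $\mu_1 \le \pi_1^w$, $\mu_2 \le \pi_2^w$ with $\mu_1 \equiv \mu_2$ and $|\mu_1| = |\mu_2| > 0$. Fix $k = |w|$. By the very definition of $\con(w)$ as a maximum over pairs of language-equivalent subdistributions dominated by $\pi_1^w$ and $\pi_2^w$, we get $\con(w) \ge |\mu_1| > 0$, hence $\con(k) \ge \con(w) > 0$. Since $\con(\cdot)$ is nondecreasing in $k$ (Proposition~\ref{prop-basic}(b)), we obtain $\con(\infty) \ge \con(k) > 0$, so by Theorem~\ref{thm-limits-coincide} we conclude $d(\pi_1,\pi_2) = 1 - \con(\infty) < 1$.

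For the ``only if'' direction, suppose $d(\pi_1,\pi_2) < 1$. By Theorem~\ref{thm-limits-coincide} this means $\con(\infty) > 0$, and since $\con(\infty) = \lim_k \con(k)$ with the sequence nondecreasing, there is some $k \in \N$ with $\con(k) > 0$. By definition $\con(k) = \sum_{w \in \Sigma^k} \con(w)$, so there must be at least one word $w \in \Sigma^k$ with $\con(w) > 0$. Unfolding the definition of $\con(w)$: it is the maximum of $|\mu_1|$ over all $\mu_1 \le \pi_1^w$ such that there exists $\mu_2 \le \pi_2^w$ with $\mu_1 \equiv \mu_2$; this maximum is attained (the feasible region is a compact polytope and the objective is linear — this is implicit already in how $\con(w)$ is used in the proof of Proposition~\ref{prop-basic}). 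So pick the maximizing pair $\mu_1, \mu_2$. Then $\mu_1 \le \pi_1^w$, $\mu_2 \le \pi_2^w$, $\mu_1 \equiv \mu_2$, and $|\mu_1| = \con(w) > 0$; moreover $\mu_1 \equiv \mu_2$ forces $|\mu_1| = |\mu_1^\varepsilon| = |\mu_2^\varepsilon| = |\mu_2|$ (taking the empty word in the definition of $\equiv$), so $|\mu_1| = |\mu_2| > 0$ as required.

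The argument is essentially a direct translation between the statement and the already-established facts about $\con$; there is no genuine obstacle. The only point needing a word of care is that the supremum in the definition of $\con(w)$ is actually attained — but this is a compactness/linear-programming observation that the paper already relies on (e.g., in the proof of Proposition~\ref{prop-basic}(a), where subdistributions $\mu_1,\mu_2$ with $\con(w) = |\mu_1|$ are simply taken to exist), so I would just invoke it.
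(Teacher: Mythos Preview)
Your proposal is correct and is exactly the unpacking of what the paper means by ``Proposition~\ref{prop-dist-1-con} follows immediately from Theorem~\ref{thm-limits-coincide}'': you translate $d(\pi_1,\pi_2)<1$ into $\con(\infty)>0$ via Theorem~\ref{thm-limits-coincide}, and then pass between $\con(\infty)>0$ and the existence of a witnessing $w,\mu_1,\mu_2$ via the definition of $\con(w)$ and $\con(k)$. The only detail you make explicit that the paper leaves implicit is the attainment of the maximum in $\con(w)$, and your justification (compactness/linear programming, already relied upon in Proposition~\ref{prop-basic}) is correct.
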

Note that $\mu_1 \equiv \mu_2$ implies $|\mu_1| = |\mu_2|$.
Proposition~\ref{prop-dist-1-con} follows immediately from Theorem~\ref{thm-limits-coincide}.

Given $\pi_1, \pi_2$ and a word $w \in \Sigma^*$ one can compute $\pi_1^w$ and~$\pi_2^w$ in polynomial time.
Consider the following condition on~$w$:
\begin{equation} \label{eq-LP-1}
 \exists \mu_1, \mu_2 :
 \mu_1 \le \pi_1^w \text{ and } \mu_2 \le \pi_2^w \text{ and } \mu_1 \equiv \mu_2 \text{ and } |\mu_1| > 0\,.
\end{equation}
By Proposition~\ref{prop-equivalence}~(b), \eqref{eq-LP-1} amounts to a feasibility test of a linear program,
 and hence can be decided in polynomial time.
By Proposition~\ref{prop-dist-1-con} we have $d(\pi_1, \pi_2) < 1$ if and only if there is $w \in \Sigma^*$ such that \eqref{eq-LP-1} holds.

For notational convenience we write $\supp(w)$ for the pair $(\supp(\pi_1^w), \supp(\pi_2^w))$ in the following.
The condition~\eqref{eq-LP-1} on~$w$ is in fact only a condition on~$\supp(w)$,
 as $\mu_1 \equiv \mu_2$ implies $a \mu_1 \equiv a \mu_2$ for all $a \in [0, \infty)$.
So \eqref{eq-LP-1} can be rephrased as
\begin{equation}
\begin{aligned}
 \exists \mu_1, \mu_2 : \mbox{} & \supp(\mu_1) \subseteq  \supp(\pi_1^w) \text{ and } \supp(\mu_2) \subseteq \supp(\pi_2^w) \\
                                & \text{ and } \mu_1 \equiv \mu_2 \text{ and } |\mu_1| > 0\,.
\end{aligned}
\label{eq-LP-2}
\end{equation}
Moreover, for any two words $w, w' \in \Sigma^*$ with $\supp(w) = \supp(w')$
 we have $\supp(w a) = \supp(w' a)$ for all $a \in \Sigma$.
This implies
\[
\{ \supp(w) \mid w \in \Sigma^* \}  =  \{ \supp(w) \mid w \in \Sigma^*, \ |w| \le 2^{2 |Q|} \}\,.
\]
This suggests the following nondeterministic algorithm for checking whether $d(\pi_1, \pi_2) < 1$ holds:
 compute $\supp(w)$ for a guessed word $w \in \Sigma^*$ with $|w| \le 2^{2 |Q|}$ and check~\eqref{eq-LP-2} for feasibility.
Note that $w$ may have exponential length but need not be stored as a whole.
This results in a PSPACE algorithm.

In the following, we give a polynomial-time algorithm, which is based on further properties of the distance.

Given subdistributions $\mu_1, \mu_2$ with $|\mu_1|, |\mu_2| > 0$ we define the following relation:
\[
 \mu_1 \sim \mu_2 \quad \Longleftrightarrow \quad d \left( \frac{\mu_1}{|\mu_1|}, \frac{\mu_2}{|\mu_2|} \right) < 1
\]
Note that $\frac{\mu_1}{|\mu_1|}$ and~$\frac{\mu_2}{|\mu_2|}$ are distributions.
We have that $\mu_1 \equiv \mu_2$ implies $\mu_1 \sim \mu_2$.
The relation~$\mathord{\sim}$ is reflexive, symmetric, but in general not transitive.
We observe:
\begin{proposition} \label{prop-dist-1-wv}
Let $\mu_1 \sim \mu_2$.
Let $w \in \Sigma^*$ such that $\supp(\pi_1^w) \supseteq \supp(\mu_1)$ and $\supp(\pi_2^w) \supseteq \supp(\mu_2)$.
Then $d(\pi_1, \pi_2) < 1$.
\end{proposition}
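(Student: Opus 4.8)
The plan is to chain together the two directions of Proposition~\ref{prop-dist-1-con}: use the hypothesis $\mu_1 \sim \mu_2$ to obtain, via the ``only if'' direction, a pair of equivalent subdistributions of equal positive mass dominated by $(\mu_1/|\mu_1|)^v$ and $(\mu_2/|\mu_2|)^v$; then transport this pair along the prefix~$w$ and rescale it so that the ``if'' direction of Proposition~\ref{prop-dist-1-con}, now applied at the word $wv$, yields $d(\pi_1,\pi_2) < 1$. Concretely, since $\mu_1 \sim \mu_2$ means $d\bigl(\mu_1/|\mu_1|,\ \mu_2/|\mu_2|\bigr) < 1$, applying Proposition~\ref{prop-dist-1-con} to the distributions $\mu_1/|\mu_1|$ and $\mu_2/|\mu_2|$ gives a word $v \in \Sigma^*$ and subdistributions $\lambda_1, \lambda_2$ with $\lambda_1 \le (\mu_1/|\mu_1|)^v = \mu_1^v/|\mu_1|$, $\lambda_2 \le \mu_2^v/|\mu_2|$, $\lambda_1 \equiv \lambda_2$, and $|\lambda_1| = |\lambda_2| > 0$.

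Next I would exploit the support hypothesis. For $i \in \{1,2\}$, since $\supp(\mu_i) \subseteq \supp(\pi_i^w)$ and $\supp(\mu_i) \neq \emptyset$ (because $|\mu_i| > 0$), the constant $c_i := \min\{\pi_i^w(q)/\mu_i(q) \mid q \in \supp(\mu_i)\}$ is well-defined and positive, and $c_i \mu_i \le \pi_i^w$. Because $M(v)$ is a nonnegative matrix, the map $\mu \mapsto \mu M(v)$ is monotone with respect to~$\le$, so $c_i \mu_i^v = c_i \mu_i M(v) \le \pi_i^w M(v) = \pi_i^{wv}$. Now set $t := \min\{c_1 |\mu_1|,\ c_2 |\mu_2|\} > 0$ and $\nu_i := t \lambda_i$. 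Then for $i \in \{1,2\}$ we have $\nu_i = t \lambda_i \le c_i |\mu_i| \lambda_i \le c_i |\mu_i| \cdot \mu_i^v/|\mu_i| = c_i \mu_i^v \le \pi_i^{wv}$, while scaling by $t > 0$ preserves equivalence, so $\nu_1 \equiv \nu_2$, and $|\nu_1| = t|\lambda_1| = t|\lambda_2| = |\nu_2| > 0$. Applying the ``if'' direction of Proposition~\ref{prop-dist-1-con} with the word $wv$ and the subdistributions $\nu_1, \nu_2$ then gives $d(\pi_1, \pi_2) < 1$, as required.

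The argument is essentially routine; the only points that need care are (i) picking a \emph{single} scalar $t$ that keeps both $\nu_i \le \pi_i^{wv}$ while preserving $\nu_1 \equiv \nu_2$ and $|\nu_1| = |\nu_2|$ — this is why $t$ is taken as the minimum of $c_1|\mu_1|$ and $c_2|\mu_2|$ rather than rescaling the two sides independently — and (ii) the elementary monotonicity of right-multiplication by the nonnegative matrix $M(v)$, which is what lets the domination $c_i\mu_i \le \pi_i^w$ be pushed forward to $c_i\mu_i^v \le \pi_i^{wv}$. I do not expect any substantial obstacle.
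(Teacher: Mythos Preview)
Your proof is correct and follows essentially the same route as the paper's: apply Proposition~\ref{prop-dist-1-con} to $\mu_1/|\mu_1|,\mu_2/|\mu_2|$ to obtain $v$ and an equivalent pair $\lambda_1\equiv\lambda_2$, then use the support inclusions to rescale so that the pair is dominated by $\pi_1^{wv},\pi_2^{wv}$, and invoke Proposition~\ref{prop-dist-1-con} again. The paper merely says ``for some small enough $a>0$'' where you explicitly construct $t=\min\{c_1|\mu_1|,c_2|\mu_2|\}$, but the argument is the same.
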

\begin{proof}
Since $\mu_1 \sim \mu_2$,
we have $d(\rho_1, \rho_2)<1$ for the distributions $\rho_1 := \mu_1 / |\mu_1|$ and $\rho_2 := \mu_2 / |\mu_2|$.
By Proposition~\ref{prop-dist-1-con} there is a word $v \in \Sigma^*$ and subdistributions $\nu_1, \nu_2$
 with $|\nu_1| = |\nu_2| > 0$ and $\rho_1^v \ge \nu_1 \equiv \nu_2 \le \rho_2^v$.
Since $\supp(\pi_i^w) \supseteq \supp(\mu_i) = \supp(\rho_i)$ holds for $i \in \{1,2\}$, we get
 $\pi_1^{w v} \ge a \nu_1 \equiv a \nu_2 \le \pi_2^{w v}$ for some small enough $a > 0$.
Using Proposition~\ref{prop-dist-1-con} again it follows that $d(\pi_1, \pi_2) < 1$.
\end{proof}

The following proposition states two structural properties of the relation~$\mathord{\sim}$
 which can be proved using the fact that $d(\pi_1, \pi_2) = 1$ implies that there is a ``maximizing'' event~$E$ with
 $\pi_1(E) = 1$ and $\pi_2(E) = 0$, see Theorem~\ref{thm-maximising-event}.
\begin{proposition} \label{prop-dist-1-structure}
We have the following.
\begin{itemize}
\item[(a)]
Let $\mu_1 \equiv \mu_2$.
Let $\nu_1 \le \mu_1$ with $|\nu_1| > 0$.
Then $\nu_1 \sim \mu_2$.
\item[(b)]
Let $\mu_1 \sim \mu_2$.
Then there is $q \in \supp(\mu_1)$ with $\delta_q \sim \mu_2$.
\end{itemize}
\end{proposition}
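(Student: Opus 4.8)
The plan is to derive both parts from the characterisation of distance~$1$ via a maximizing event (Theorem~\ref{thm-maximising-event}), which tells us that $d(\rho_1,\rho_2)=1$ iff there is an event $E$ with $\rho_1(E)=1$ and $\rho_2(E)=0$; equivalently $d(\rho_1,\rho_2)<1$ iff \emph{no} such event exists, i.e.\ every event of $\rho_1$-measure~$1$ has positive $\rho_2$-measure. For part~(a), I would argue by contraposition: suppose $\nu_1 \not\sim \mu_2$, so there is an event $E$ with $\frac{\nu_1}{|\nu_1|}(E)=1$ and $\frac{\mu_2}{|\mu_2|}(E)=0$, hence $\nu_1(E)=|\nu_1|$ and $\mu_2(E)=0$. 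From $\mu_2(E)=0$ and $\mu_1 \equiv \mu_2$ I want to conclude $\mu_1(E)=0$: since language equivalence means $|\mu_1^w|=|\mu_2^w|$ for all $w$, and probabilities of cylinders, hence of all events in the generated $\sigma$-field, are determined by these quantities, $\mu_1$ and $\mu_2$ assign the same measure to every event, so $\mu_1(E)=\mu_2(E)=0$. But $\nu_1 \le \mu_1$ forces $\nu_1(E) \le \mu_1(E) = 0$, contradicting $\nu_1(E)=|\nu_1|>0$. Hence $\nu_1 \sim \mu_2$.

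For part~(b), again by contraposition: assume $\delta_q \not\sim \mu_2$ for \emph{every} $q \in \supp(\mu_1)$. Writing $\rho_2 := \mu_2/|\mu_2|$, for each such $q$ there is an event $E_q$ with $\delta_q(E_q)=1$ and $\rho_2(E_q)=0$. I would set $E := \bigcap_{q \in \supp(\mu_1)} E_q$; since $\supp(\mu_1)$ is finite, this is a (measurable) event, $\rho_2(E) \le \sum_q \rho_2(E_q) = 0$, and $\delta_q(E)=1$ for every $q \in \supp(\mu_1)$ because $\delta_q(E_{q'})=1$ for all $q'$ (an event of $\delta_q$-measure~$1$ means $q$ starts a run in it with probability~$1$, and the intersection of finitely many such events again has $\delta_q$-measure~$1$). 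Then $\mu_1(E) = \sum_{q \in \supp(\mu_1)} \mu_1(q)\,\delta_q(E) = \sum_q \mu_1(q) = |\mu_1|$, so $\frac{\mu_1}{|\mu_1|}(E)=1$ while $\rho_2(E)=0$, i.e.\ $d\big(\frac{\mu_1}{|\mu_1|},\frac{\mu_2}{|\mu_2|}\big)=1$, contradicting $\mu_1 \sim \mu_2$. Therefore some $q \in \supp(\mu_1)$ satisfies $\delta_q \sim \mu_2$.

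The routine facts I rely on are: (i) $\mu_1 \equiv \mu_2$ implies $\mu_1(E)=\mu_2(E)$ for every event $E$ (from agreement on all cylinders plus uniqueness of the extension to the generated $\sigma$-field, as in Section~\ref{sec-prelim}); (ii) for a starting (sub)distribution $\mu = \sum_q \mu(q)\delta_q$ one has $\mu(E) = \sum_q \mu(q)\,\delta_q(E)$ by linearity of the measure in the initial distribution; and (iii) a finite intersection of $\delta_q$-almost-sure events is $\delta_q$-almost sure. The one place that needs a little care is checking that the maximizing-event characterisation is applied to the \emph{normalised} distributions in the definition of $\sim$ and not to the subdistributions directly; since normalisation only rescales measures by a positive constant, "measure~$1$" and "measure~$0$" are preserved, so translating between $\delta_q \not\sim \mu_2$ and the existence of the witness event $E_q$ is unproblematic. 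I expect the main (minor) obstacle to be bookkeeping the normalisation constants cleanly; there is no deep difficulty, as both parts are essentially immediate once Theorem~\ref{thm-maximising-event} supplies the $\{0,1\}$-valued witness events.
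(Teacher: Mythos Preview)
Your argument for part~(a) is correct and is essentially the paper's proof, just written for the complementary event: the paper picks $E$ with $\nu_1(E)=0$ and $\mu_2(E)=|\mu_2|$ and bounds $\mu_1(E)$ from above, whereas you pick $E$ with $\nu_1(E)=|\nu_1|$ and $\mu_2(E)=0$ and bound $\mu_1(E)$ from below. Both use $\mu_1(E)=\mu_2(E)$ from $\mu_1\equiv\mu_2$.

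Part~(b), however, contains a genuine gap. You take $E := \bigcap_{q\in\supp(\mu_1)} E_q$ and then assert that $\delta_q(E_{q'})=1$ for \emph{all} $q'$, so that the finite intersection remains $\delta_q$-almost sure. But nothing you have established gives $\delta_q(E_{q'})=1$ when $q\ne q'$: the witness $E_{q'}$ was chosen only so that $\delta_{q'}(E_{q'})=1$ and $\rho_2(E_{q'})=0$, and there is no reason the event distinguishing $q'$ from~$\mu_2$ should also have full $\delta_q$-measure. Consequently you cannot conclude $\delta_q(E)=1$, and the argument breaks down.

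The fix is simply to replace the intersection by the \emph{union} $E := \bigcup_{q\in\supp(\mu_1)} E_q$, which is what the paper does. Then $\delta_q(E)\ge\delta_q(E_q)=1$ for each $q$ trivially, giving $\mu_1(E)=|\mu_1|$, while $\rho_2(E)\le\sum_{q}\rho_2(E_q)=0$ by a union bound. The rest of your writeup (the normalisation bookkeeping, the linearity $\mu_1(E)=\sum_q \mu_1(q)\delta_q(E)$) carries over unchanged.
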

\begin{proof}\mbox{}
\begin{itemize}
\item[(a)]
Towards a contradiction suppose that $d(\nu_1 / |\nu_1|, \mu_2 / |\mu_2|) = 1$.
Then by Theorem~\ref{thm-maximising-event} there is an event $E \subseteq \Sigma^\omega$ with
 $\frac{\nu_1}{|\nu_1|}(E) = 0$ and $\frac{\mu_2}{|\mu_2|}(E) = 1$, i.e., $\nu_1(E) = 0$ and $\mu_2(E) = |\mu_2|$.
We have:
\begin{align*}
 |\mu_2|
 & = \mu_2(E) \\
 & = \mu_1(E) && \text{(as $\mu_1 \equiv \mu_2$)} \\
 & = (\mu_1 - \nu_1)(E) + \nu_1(E) && \text{(as $\nu_1 \le \mu_1$)} \\
 & = (\mu_1 - \nu_1)(E) && \text{(as $\nu_1(E) = 0$)} \\
 & \le |\mu_1 - \nu_1| \\
 & = |\mu_1| - |\nu_1| && \text{(as $\nu_1 \le \mu_1$)} \\
 & < |\mu_1| && \text{(as $|\nu_1| > 0$)} \\
 & = |\mu_2| && \text{(as $\mu_1 \equiv \mu_2$)}\;,
\end{align*}
which is a contradiction. Hence $\nu_1 \sim \mu_2$.
\item[(b)]
Suppose that for all $q \in \supp(\mu_1)$ we have $\delta_q \not\sim \mu_2$, i.e., $d(\delta_q, \mu_2/|\mu_2|) = 1$.
By Theorem~\ref{thm-maximising-event} for all $q \in \supp(\mu_1)$ there is an event $E_q \subseteq \Sigma^\omega$ with
 $\delta_q(E_q) = 1$ and $\mu_2/|\mu_2|(E_q) = 0$.
Consider the event
 \[
  E := \bigcup_{q \in \supp(\mu_1)} E_q\,.
 \]
For all $q \in \supp(\mu_1)$ we have $\delta_q(E) \ge \delta_q(E_q) = 1$, so $\delta_q(E) = 1$.
Hence,
\[
 \mu_1(E) = \!\! \sum_{q \in \supp(\mu_1)} \! \mu_1(q) \delta_q(E) = \!\! \sum_{q \in \supp(\mu_1)} \! \mu_1(q) = |\mu_1|\,.
\]
On the other hand, by a union bound, we have
\[
 \mu_2(E) \le \sum_{q \in \supp(\mu_1)} \mu_2(E_q) = 0\,.
\]
If $|\mu_2| > 0$, then by the definition of the distance we have
 \[
  d\left(\frac{\mu_1}{|\mu_1|}, \frac{\mu_2}{|\mu_2|}\right) \ge \frac{\mu_1}{|\mu_1|}(E) - \frac{\mu_2}{|\mu_2|}(E) = 1-0 = 1\;,
 \]
 so $\mu_1 \not\sim \mu_2$.
If $|\mu_2| = 0$, then by the definition of~$\mathord{\sim}$ it also follows $\mu_1 \not\sim \mu_2$.
\end{itemize}
\end{proof}
For distributions $\pi_1, \pi_2$ we define a set $R^{\pi_1, \pi_2} \subseteq Q \times Q$:
\begin{align*}
R^{\pi_1, \pi_2} := \{(r_1,r_2) \in Q \times Q \mid  \exists w & \in \Sigma^* : r_1 \in \supp(\pi_1^w) \\
                                                               & \text{ and } r_2 \in \supp(\pi_2^w)\}
\end{align*}
This set can be computed in polynomial time:
\newcommand{\stmtlemdistoneR}{
Let $\pi_1, \pi_2$ be distributions.
Define a directed graph~$G$ as follows.
The vertex set is $Q \times Q$.
There is an edge from $(q_1, q_2) \in Q \times Q$ to $(r_1,r_2) \in Q \times Q$
 if there is $a \in \Sigma$ with $M(a)(q_1,r_1) > 0$ and $M(a)(q_2,r_2) > 0$.
Then we have:
\begin{align*}
 & R^{\pi_1, \pi_2} \\
 & = \{(r_1,r_2) \in Q \times Q \mid \mbox{} \exists q_1 \in \supp(\pi_1) \ \exists q_2 \in \supp(\pi_2):\\
 & \hspace{30mm}  (r_1,r_2) \text{ is reachable from $(q_1, q_2)$ in $G$}\}
\end{align*}
As a consequence, $R^{\pi_1, \pi_2}$ can be computed in polynomial time using graph reachability.
}
\begin{lemma} \label{lem-dist-1-R}
\stmtlemdistoneR
\end{lemma}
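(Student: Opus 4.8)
The plan is to prove the claimed set equality by unfolding both descriptions of $R^{\pi_1,\pi_2}$ into statements about pairs of ``synchronized'' state paths, using the elementary observation that an entry of a product of nonnegative matrices is positive precisely when there is a positive-weight path realizing it. Concretely, the first step I would carry out is to establish the following fact (by a routine induction on $|w|$): for a distribution $\pi$ and a word $w = a_1 \cdots a_k$, a state $r$ lies in $\supp(\pi^w)$ if and only if there is a sequence $q_0, q_1, \ldots, q_k$ of states with $q_0 \in \supp(\pi)$, $q_k = r$, and $M(a_i)(q_{i-1}, q_i) > 0$ for every $i \in \{1, \ldots, k\}$. This is immediate from expanding $\pi^w(r) = (\pi M(a_1) \cdots M(a_k))(r)$ as a sum of nonnegative products $\pi(q_0) \prod_{i=1}^k M(a_i)(q_{i-1}, q_i)$ over all intermediate state tuples, a sum which is positive exactly when one of its summands is.

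With this fact in hand, both inclusions are short. For the forward direction, given $(r_1, r_2) \in R^{\pi_1, \pi_2}$ witnessed by a word $w = a_1 \cdots a_k$, I apply the fact to $\pi_1$ and to $\pi_2$ with the \emph{same} $w$, obtaining two state sequences ending in $r_1$ and $r_2$ respectively; pairing them coordinatewise yields a path in $G$ from $\supp(\pi_1) \times \supp(\pi_2)$ to $(r_1, r_2)$, since at step $i$ the single letter $a_i$ witnesses the required edge. For the converse, given a $G$-path from some $(q_1, q_2) \in \supp(\pi_1) \times \supp(\pi_2)$ to $(r_1, r_2)$, I pick for each edge one letter witnessing it (valid simultaneously in both coordinates), concatenate these letters into a word $w$, and invoke the fact in the other direction to conclude $r_1 \in \supp(\pi_1^w)$ and $r_2 \in \supp(\pi_2^w)$, i.e., $(r_1, r_2) \in R^{\pi_1, \pi_2}$. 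The equality then yields the algorithm: $G$ has $|Q|^2$ vertices and is built in polynomial time (for each ordered pair of vertices, check whether some $a \in \Sigma$ witnesses an edge), the source set $\supp(\pi_1) \times \supp(\pi_2)$ is read off directly, and the reachable set is computed by breadth-first search, all in time polynomial in the size of~$\M$.

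I do not expect a real obstacle here; the only point needing care is the synchronization built into the edge relation of~$G$, namely that a \emph{single} letter must witness both coordinates of an edge. This means the translation ``word $\mapsto$ $G$-path'' must reuse the letter $a_i$ in both coordinates at step~$i$, and the translation ``$G$-path $\mapsto$ word'' must, independently at each edge, select one common witnessing letter. Keeping this bookkeeping straight is essentially the whole content of the lemma; the remaining ingredients (the matrix-product/path correspondence and polynomial-time graph reachability) are entirely standard.
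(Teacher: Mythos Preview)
Your proposal is correct and takes essentially the same approach as the paper: both arguments establish the set equality by translating between a word~$w$ witnessing $r_i \in \supp(\pi_i^w)$ and a synchronized path in~$G$, via induction on the length of~$w$ (resp.\ the path). The only cosmetic difference is that you first isolate the single-chain path characterization $r \in \supp(\pi^w) \Leftrightarrow$ ``there is a positive-weight $w$-labelled path from $\supp(\pi)$ to~$r$'' and then pair two such paths, whereas the paper carries out the induction directly on the product graph; the content is identical.
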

The proof of Lemma~\ref{lem-dist-1-R} is straightforward by induction. 
For $r_1 \in Q$ we define the projection $R^{\pi_1, \pi_2}_{r_1} := \{r_2 \in Q \mid (r_1, r_2) \in R^{\pi_1, \pi_2}\}$.
We are ready to show the main theorem of the section:
\begin{theorem} \label{thm-dist-1}
Let $\pi_1, \pi_2$ be distributions.
Then $d(\pi_1, \pi_2) < 1$ holds if and only if there are $r_1 \in Q$ and subdistributions $\mu_1, \mu_2$ such that
\[
 \mu_1 \equiv \mu_2 \text{ and } r_1 \in \supp(\mu_1) \text{ and } \supp(\mu_2) \subseteq R^{\pi_1, \pi_2}_{r_1} \,.
\]
\end{theorem}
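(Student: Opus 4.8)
The plan is to prove the two directions separately, using Proposition~\ref{prop-dist-1-con} together with the structural facts about~$\mathord{\sim}$ from Propositions~\ref{prop-dist-1-structure} and~\ref{prop-dist-1-wv}.

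For the ``only if'' direction, assume $d(\pi_1, \pi_2) < 1$. By Proposition~\ref{prop-dist-1-con} there are $w \in \Sigma^*$ and subdistributions $\nu_1 \le \pi_1^w$, $\nu_2 \le \pi_2^w$ with $\nu_1 \equiv \nu_2$ and $|\nu_1| = |\nu_2| > 0$. Pick any $r_1 \in \supp(\nu_1)$; then $r_1 \in \supp(\pi_1^w)$. Since $\nu_1 \equiv \nu_2$, in particular $\nu_1 \sim \nu_2$, so by Proposition~\ref{prop-dist-1-structure}~(b) there is $q_1 \in \supp(\nu_1)$ with $\delta_{q_1} \sim \nu_2$; but I actually want to keep $r_1$ fixed and relate $\supp(\nu_2)$ to $R^{\pi_1,\pi_2}_{r_1}$, so the cleaner route is: take $\mu_1 := \nu_1$ and $\mu_2 := \nu_2$, fix $r_1 \in \supp(\mu_1)$, and check that $\supp(\mu_2) \subseteq R^{\pi_1,\pi_2}_{r_1}$. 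For each $r_2 \in \supp(\mu_2) = \supp(\nu_2) \subseteq \supp(\pi_2^w)$ we have $r_1 \in \supp(\pi_1^w)$ and $r_2 \in \supp(\pi_2^w)$ witnessed by the \emph{same} word~$w$, so $(r_1, r_2) \in R^{\pi_1,\pi_2}$ by definition, i.e.\ $r_2 \in R^{\pi_1,\pi_2}_{r_1}$. This gives the required $r_1, \mu_1, \mu_2$.

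For the ``if'' direction, suppose $r_1 \in Q$ and subdistributions $\mu_1, \mu_2$ satisfy $\mu_1 \equiv \mu_2$, $r_1 \in \supp(\mu_1)$, and $\supp(\mu_2) \subseteq R^{\pi_1,\pi_2}_{r_1}$. First, since $r_1 \in \supp(\mu_1)$, consider the restriction $\nu_1 := \mu_1(r_1) \cdot \delta_{r_1} \le \mu_1$, which has $|\nu_1| > 0$; by Proposition~\ref{prop-dist-1-structure}~(a), $\nu_1 \sim \mu_2$, hence $\delta_{r_1} \sim \mu_2$ (scaling $\nu_1$ to a distribution gives $\delta_{r_1}$). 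Now I need a single word~$w$ with $\supp(\pi_1^w) \supseteq \{r_1\}$ and $\supp(\pi_2^w) \supseteq \supp(\mu_2)$, so that Proposition~\ref{prop-dist-1-wv} applies to $\mu_1' := \delta_{r_1}$, $\mu_2' := \mu_2$ and yields $d(\pi_1, \pi_2) < 1$. Each $r_2 \in \supp(\mu_2)$ gives, by $r_2 \in R^{\pi_1,\pi_2}_{r_1}$, \emph{some} word $w_{r_2}$ with $r_1 \in \supp(\pi_1^{w_{r_2}})$ and $r_2 \in \supp(\pi_2^{w_{r_2}})$; the obstacle is that these words may differ, so I cannot directly invoke Proposition~\ref{prop-dist-1-wv} with one~$w$.

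The main obstacle, then, is synchronising the reachability witnesses into a single word. I expect the fix to use the graph~$G$ from Lemma~\ref{lem-dist-1-R}: the set $R^{\pi_1,\pi_2}$ is closed under the reachability relation in~$G$, and more importantly one can argue that from the pair $(r_1, r_1)$ — note $(r_1,r_1) \in R^{\pi_1,\pi_2}$ since $r_1$ lies in some $\supp(\pi_i^{\cdot})$ — one can reach, along a common word, a ``product'' subdistribution simultaneously covering $r_1$ on the first coordinate and all of $\supp(\mu_2)$ on the second. Concretely, I would iterate: having a word $w$ with $r_1 \in \supp(\pi_1^w)$ and $\supp(\pi_2^w) \supseteq S$ for some $S \subsetneq \supp(\mu_2)$, pick $r_2 \in \supp(\mu_2) \setminus S$, take its witness word $w_{r_2}$, and use the fact that $\supp(\pi_1^{w})$ and $\supp(\pi_1^{w_{r_2}})$ both contain $r_1$ together with a pumping/merging argument on~$G$ to find a common extension — alternatively, appeal to the standard fact that the set of reachable support-pairs is exactly $R^{\pi_1,\pi_2}$ and that one can always ``wait'' in a product state, so a long enough word hits all of $\supp(\mu_2)$ on the second coordinate while keeping $r_1$ reachable on the first. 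Once such a single $w$ is obtained, Proposition~\ref{prop-dist-1-wv} with $\mu_1' = \delta_{r_1} \sim \mu_2 = \mu_2'$ finishes the proof.
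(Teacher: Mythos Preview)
Your ``only if'' direction is correct and matches the paper. Your ``if'' direction also correctly obtains $\delta_{r_1} \sim \mu_2$ via Proposition~\ref{prop-dist-1-structure}~(a). The gap is what comes next: you try to synchronise the witness words $w_{r_2}$ for all $r_2 \in \supp(\mu_2)$ into a single word~$w$ with $r_1 \in \supp(\pi_1^w)$ and $\supp(\mu_2) \subseteq \supp(\pi_2^w)$. This is not possible in general, and the informal justifications you sketch (``waiting'' in a product state, ``a long enough word hits all of $\supp(\mu_2)$'') both fail. For a concrete counterexample, take $\pi_1 = \delta_{p_1}$, $\pi_2 = \delta_{p_2}$ where $p_1$ emits either $a$ or~$b$ (each with probability $\frac12$) and goes to~$r_1$, while $p_2$ emits $a$ and goes to~$s_1$ or emits $b$ and goes to~$s_2$; let $r_1, s_1, s_2$ be absorbing. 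Then $(r_1,s_1), (r_1,s_2) \in R^{\pi_1,\pi_2}$, but for every word~$w$ with $|w|\ge 1$ the set $\supp(\pi_2^w)$ is a singleton, so no single~$w$ covers $\{s_1,s_2\}$. (Your parenthetical claim that $(r_1,r_1)\in R^{\pi_1,\pi_2}$ is also unjustified: the hypothesis gives no reason why $r_1$ should ever appear in $\supp(\pi_2^w)$.)

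The missing idea is that you do not need to cover all of $\supp(\mu_2)$ at once. Having $\delta_{r_1} \sim \mu_2$, apply Proposition~\ref{prop-dist-1-structure}~(b) (with the roles of the two sides swapped) to extract a \emph{single} state $r_2 \in \supp(\mu_2)$ with $\delta_{r_1} \sim \delta_{r_2}$. Since $r_2 \in \supp(\mu_2) \subseteq R^{\pi_1,\pi_2}_{r_1}$, the definition of~$R^{\pi_1,\pi_2}$ immediately yields one word~$w$ with $r_1 \in \supp(\pi_1^w)$ and $r_2 \in \supp(\pi_2^w)$, and Proposition~\ref{prop-dist-1-wv} applied to $\delta_{r_1} \sim \delta_{r_2}$ finishes the proof. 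You actually mentioned Proposition~\ref{prop-dist-1-structure}~(b) in your ``only if'' paragraph and then discarded it; it is precisely the tool that removes the synchronisation obstacle in the ``if'' direction.
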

\begin{proof}
($\Longrightarrow$) Let $d(\pi_1, \pi_2) < 1$.
Then by Proposition~\ref{prop-dist-1-con} there are $w \in \Sigma^*$ and subdistributions $\mu_1, \mu_2$ with
 $\mu_1 \le \pi_1^w$ and $\mu_2 \le \pi_2^w$ and $\mu_1 \equiv \mu_2$ and $|\mu_1| = |\mu_2| > 0$.
By the definition of~$R^{\pi_1, \pi_2}$ we have $(r_1, r_2) \in R^{\pi_1, \pi_2}$ for all $r_1 \in \supp(\mu_1)$ and all $r_2 \in \supp(\mu_2)$.
Choose any $r_1 \in \supp(\mu_1)$.
Then $\supp(\mu_2) \subseteq R^{\pi_1, \pi_2}_{r_1}$.

($\Longleftarrow$)
For the converse, let $r_1 \in Q$ and $\mu_1, \mu_2$ be subdistributions such that
\[
 \mu_1 \equiv \mu_2 \text{ and } r_1 \in \supp(\mu_1) \text{ and } \supp(\mu_2) \subseteq R^{\pi_1, \pi_2}_{r_1} \,.
\]
By Proposition~\ref{prop-dist-1-structure}~(a) we have $\delta_{r_1} \sim \mu_2$.
Hence, by Proposition~\ref{prop-dist-1-structure}~(b) there is $r_2 \in \supp(\mu_2)$ with $\delta_{r_1} \sim \delta_{r_2}$.
Since $(r_1, r_2) \in R^{\pi_1, \pi_2}$, we have by the definition of~$R^{\pi_1, \pi_2}$ that
 there is $w \in \Sigma^*$ with $r_1 \in \supp(\pi_1^w)$ and $r_2 \in \supp(\pi_2^w)$.
As $\delta_{r_1} \sim \delta_{r_2}$, Proposition~\ref{prop-dist-1-wv} implies $d(\pi_1, \pi_2) < 1$.
\end{proof}

We highlight the algorithmic nature of Theorem~\ref{thm-dist-1} in Algorithm~\ref{alg-dist-1}.
\begin{algorithm}
\textbf{procedure} distance1 \\
\textbf{input}: LMC $\M = (Q, \Sigma, M)$ \\
\mbox{}\hspace{9mm} initial distributions $\pi_1, \pi_2 \in [0,1]^Q$ \\
\textbf{output}: $d(\pi_1, \pi_2) = 1$ or $d(\pi_1, \pi_2) < 1$ \\[1mm]
compute $\mathcal{B} \subseteq \Q^{2 |Q|}$ from Proposition~\ref{prop-equivalence}~(b) \\
compute $R^{\pi_1, \pi_2}$ by graph reachability (Lemma~\ref{lem-dist-1-R}) \\
for $r_1 \in Q$ do \\
\ind if there exist subdistributions $\mu_1, \mu_2$ with \\
\indd $r_1 \in \supp(\mu_1)$  and  $\supp(\mu_2) \subseteq R^{\pi_1, \pi_2}_{r_1}$ \\
\inddd and $\forall\, b \in \mathcal{B}$ : $(\mu_1 \ \mu_2) \cdot b = 0$ \\
\ind (* this can be decided using linear programming *) \\
\ind then return ``$d(\pi_1, \pi_2) < 1$'' \\
\ind fi \\
od \\
return ``$d(\pi_1, \pi_2) = 1$''
\caption{Polynomial-time algorithm for deciding the distance-1 problem.}
\label{alg-dist-1}
\end{algorithm}

\section{Related Work} \label{sec-related-work}

Two LMCs have distance~$0$ if and only if they are language equivalent. We have discussed works on language equivalence in the introduction.
The papers \cite{LyngsoP02} and~\cite{CortesMRdistance} are closest to ours.
They investigate the $L_p$-distance between two hidden Markov models~\cite{LyngsoP02}
 and two probabilistic automata~\cite{CortesMRdistance}. Those models are similar to ours. The main difference is that in their models
 no letters are emitted once a special \emph{end state} is reached,  and the transition structure of the chains guarantees that an end state is eventually reached with probability~$1$.
 (Our model is more general, as one can make the LMC emit an
  infinite sequence of a special ``end letter'' once the end state is reached.)
So those models induce a probability distribution over~$\Sigma^*$, which makes the sample space countable.
As mentioned in the introduction, the $L_1$-distance is then twice the total variation distance,
 so the hardness results from~\cite{LyngsoP02,CortesMRdistance} become available:
  it is NP-hard to ``compute'' the $L_1$- and $L_\infty$-distance~\cite{LyngsoP02} and the $L_p$-distance for odd~$p$~\cite{CortesMRdistance},
   but recall our discussion after Proposition~\ref{prop-irrational-1} about irrational distances.
We note that the example from Figure~\ref{fig-irrational-1}, showing the existence of irrational distances,
 can be easily framed in their models. It is also shown in~\cite{CortesMRdistance} that it is NP-hard to approximate the $L_1$-distance within an additive error,
 and that the $L_p$-distance can be computed in polynomial time for even~$p$.

The total variation distance in LMCs is considered in~\cite{ChenBW12},
 where the authors give an upper bound on the total variation distance in terms of the \emph{bisimilarity pseudometric}
  defined in~\cite{DesharnaisGJPmetrics}.
Bisimilarity is a ``structural'' (i.e., based on the emitted letters and the states) notion of equivalence of LMCs,
 whereas language equivalence is purely ``semantical'' (based only on the emitted letters). Accordingly,
 the bisimilarity pseudometric defines a \emph{branching-time} distance while the total variance distance defines a \emph{linear-time} distance.
The authors of~\cite{ChenBW12} prove a quantitative analogue of the fact that bisimilarity implies language equivalence:
they prove that
 the bisimilarity pseudometric, which can be computed in polynomial time~\cite{ChenBW12},
  is an upper bound on the total variation distance which we discuss here.

\section{Conclusions and Open Problems} \label{sec-conclusions}

In this paper we have developed a theory of the total variation distance between two LMCs.
Two important theoretical results of this paper are summarized as:
\begin{enumerate}
\item[(1)]
 By considering longer and longer prefix words,
  one can define two sequences $(1 - \mi(i))_{i \in \N}$ and $(1 - \con(i))_{i \in \N}$ that
  converge to the distance from below and above, respectively.
\item[(2)]
 Using the martingale convergence theorem one can show that there is always a maximizing event,
  and we have explicitly exhibited one.
\end{enumerate}

These results have algorithmic consequences.
Our main algorithmic result is a procedure that decides the distance-1 problem in polynomial time.
The result~(1) 
also leads to an algorithm for approximating the distance with arbitrary precision.
%
We have also shown that the distance can be irrational, and we have given lower complexity bounds for the threshold-distance problem:
 it is NP-hard and hard for the square-root-sum problem.

The complexity and even the decidability of the threshold-distance problem are open problems.
A theoretical question is whether the distance is always algebraic.
We have established a connection to Bernoulli convolutions,
 the long history of which may hint at the difficulty of solving the mentioned open problems.

\subsection*{Acknowledgements.}
We would like to thank Christoph Haase for pointing us to~\cite{Concrete-Mathematics},
 James Worrell for valuable discussions, and anonymous referees for helpful comments.
Stefan Kiefer is supported by a Royal Society University Research Fellowship.

\bibliographystyle{abbrvnat}
\bibliography{db}

\begin{thebibliography}{20}
\providecommand{\natexlab}[1]{#1}
\providecommand{\url}[1]{\texttt{#1}}
\expandafter\ifx\csname urlstyle\endcsname\relax
  \providecommand{\doi}[1]{doi: #1}\else
  \providecommand{\doi}{doi: \begingroup \urlstyle{rm}\Url}\fi

\bibitem[Allender et~al.(2009)Allender, B{\"u}rgisser, Kjeldgaard-Pedersen, and
  Miltersen]{ABKM09}
E.~Allender, P.~B{\"u}rgisser, J.~Kjeldgaard-Pedersen, and P.~B. Miltersen.
\newblock On the complexity of numerical analysis.
\newblock \emph{SIAM J. Comput.}, 38\penalty0 (5):\penalty0 1987--2006, 2009.

\bibitem[Bailey et~al.(2007)Bailey, Borwein, Calkin, Girgensohn, Luke, and
  Moll]{Experimental-Mathematics}
D.~Bailey, J.~Borwein, N.~Calkin, R.~Girgensohn, D.~Luke, and V.~Moll.
\newblock \emph{Experimental Mathematics in Action}.
\newblock Wellesley, 2007.

\bibitem[Chen et~al.(2012)Chen, van Breugel, and Worrell]{ChenBW12}
D.~Chen, F.~van Breugel, and J.~Worrell.
\newblock On the complexity of computing probabilistic bisimilarity.
\newblock In L.~Birkedal, editor, \emph{FoSSaCS}, volume 7213 of \emph{Lecture
  Notes in Computer Science}, pages 437--451. Springer, 2012.
\newblock ISBN 978-3-642-28728-2.

\bibitem[Cortes et~al.(2007)Cortes, Mohri, and Rastogi]{CortesMRdistance}
C.~Cortes, M.~Mohri, and A.~Rastogi.
\newblock {$L_p$} distance and equivalence of probabilistic automata.
\newblock \emph{International Journal of Foundations of Computer Science},
  18\penalty0 (04):\penalty0 761--779, 2007.

\bibitem[Desharnais et~al.(2004)Desharnais, Gupta, Jagadeesan, and
  Panangaden]{DesharnaisGJPmetrics}
J.~Desharnais, V.~Gupta, R.~Jagadeesan, and P.~Panangaden.
\newblock Metrics for labelled {M}arkov processes.
\newblock \emph{Theoretical Computer Science}, 318\penalty0 (3):\penalty0
  323--354, 2004.

\bibitem[Doyen et~al.(2008)Doyen, Henzinger, and Raskin]{DoyenHenzingerRaskin}
L.~Doyen, T.~Henzinger, and J.-F. Raskin.
\newblock Equivalence of labeled {M}arkov chains.
\newblock \emph{International Journal of Foundations of Computer Science},
  19\penalty0 (3):\penalty0 549--563, 2008.

\bibitem[Etessami and Yannakakis(2010)]{EtessamiY10}
K.~Etessami and M.~Yannakakis.
\newblock On the complexity of nash equilibria and other fixed points.
\newblock \emph{SIAM J. Comput.}, 39\penalty0 (6):\penalty0 2531--2597, 2010.

\bibitem[Garey et~al.(1976)Garey, Graham, and Johnson]{GGJ76}
M.~Garey, R.~Graham, and D.~Johnson.
\newblock Some {NP}-complete geometric problems.
\newblock In \emph{STOC}, pages 10--22. ACM, 1976.

\bibitem[Gibbs and Su(2002)]{GibbsSuMetrics}
A.~Gibbs and F.~Su.
\newblock On choosing and bounding probability metrics.
\newblock \emph{International Statistical Review}, 70\penalty0 (3):\penalty0
  419--435, 2002.

\bibitem[Graham et~al.(1989)Graham, Knuth, and Patashnik]{Concrete-Mathematics}
R.~Graham, D.~Knuth, and O.~Patashnik.
\newblock \emph{Concrete Mathematics}.
\newblock Addison-Wesley, second edition, 1989.

\bibitem[Kiefer et~al.(2011)Kiefer, Murawski, Ouaknine, Wachter, and
  Worrell]{Cav11}
S.~Kiefer, A.~Murawski, J.~Ouaknine, B.~Wachter, and J.~Worrell.
\newblock Language equivalence for probabilistic automata.
\newblock In \emph{CAV}, volume 6806 of \emph{LNCS}, pages 526--540, 2011.

\bibitem[Kiefer et~al.(2013)Kiefer, Murawski, Ouaknine, Wachter, and
  Worrell]{13KMOWW-LMCS}
S.~Kiefer, A.~Murawski, J.~Ouaknine, B.~Wachter, and J.~Worrell.
\newblock On the complexity of equivalence and minimisation for {Q}-weighted
  automata.
\newblock \emph{Logical Methods in Computer Science (LMCS)}, 9\penalty0
  (1:8):\penalty0 1--22, 2013.

\bibitem[Lengyel(1993)]{world-series}
T.~Lengyel.
\newblock A combinatorial identity and the world series.
\newblock \emph{SIAM Review}, 35\penalty0 (2):\penalty0 294--297, 1993.

\bibitem[Lyngs{\o} and Pedersen(2002)]{LyngsoP02}
R.~Lyngs{\o} and C.~Pedersen.
\newblock The consensus string problem and the complexity of comparing hidden
  markov models.
\newblock \emph{J. Comput. Syst. Sci.}, 65\penalty0 (3):\penalty0 545--569,
  2002.

\bibitem[Paz(1971)]{Paz71}
A.~Paz.
\newblock \emph{Introduction to Probabilistic Automata}.
\newblock Academic Press, 1971.

\bibitem[Peres et~al.(2000)Peres, Schlag, and Solomyak]{Bernoulli-60-years}
Y.~Peres, W.~Schlag, and B.~Solomyak.
\newblock Sixty years of {B}ernoulli convolutions.
\newblock In \emph{Fractal Geometry and Stochastics II}, volume~46 of
  \emph{Progress in Probability}, pages 39--65. Birkh\"auser Basel, 2000.

\bibitem[Rabin(1963)]{Rabin63}
M.~O. Rabin.
\newblock Probabilistic automata.
\newblock \emph{Information and Control}, 6\penalty0 (3):\penalty0 230--245,
  1963.

\bibitem[Sch\"{u}tzenberger(1961)]{Schutzenberger}
M.-P. Sch\"{u}tzenberger.
\newblock On the definition of a family of automata.
\newblock \emph{Inf.\ and Control}, 4:\penalty0 245--270, 1961.

\bibitem[van Glabbeek et~al.(1995)van Glabbeek, Smolka, and
  Steffen]{GlabbeekSS95}
R.~van Glabbeek, S.~Smolka, and B.~Steffen.
\newblock Reactive, generative and stratified models of probabilistic
  processes.
\newblock \emph{Inf. Comput.}, 121\penalty0 (1):\penalty0 59--80, 1995.

\bibitem[Williams(1991)]{book:Williams}
D.~Williams.
\newblock \emph{Probability with Martingales}.
\newblock Cambridge University Press, 1991.

\end{thebibliography}

\iftechrep{%
\onecolumn
\newpage
\appendix
\section{Proof of Proposition~\ref{prop-equivalence}} \label{app-prelim}

\begin{qproposition}{\ref{prop-equivalence}}
\stmtpropequivalence
\end{qproposition}
\begin{proof} \mbox{}
\begin{itemize}
\item[(a)]
Immediate from the definitions.
\item[(b)]
Define
\[\eta := (\underbrace{1, \ldots, 1}_{|Q| \text{ times}}, \underbrace{-1, \ldots, -1}_{|Q| \text{ times}})^T \;,\]
where the superscript~$T$ denotes transpose.
According to the definitions, we have $\mu_1\equiv \mu_2$ if and only if we have
\[\left( \begin{matrix} \mu_1 & \mu_2 \end{matrix} \right)
\cdot
\left( \begin{matrix} M(w) & 0\\
0 & M(w) \end{matrix} \right)\cdot \eta = 0
\]
for all $w \in \Sigma^*$.
We write
\[
G := \left\{\left( \begin{matrix} M(w) & 0\\
0 & M(w) \end{matrix} \right) \cdot \eta \mathrel{} \middle| \mathrel{} w\in \Sigma^* \right\}\,.\]
Observe that $G$ is a (column) vector space.
As the vectors are $2 |Q|$-dimensional, a basis of~$G$ contains at most $2 |Q|$ vectors.
It is shown, e.g., in~\cite{DoyenHenzingerRaskin} that one can compute a basis $\mathcal{B}$ for~$G$ in $O(|Q|^3)$ time.
It follows that $\mu_1\equiv \mu_2$ holds if and only if we have $\left( \begin{matrix} \mu_1 & \mu_2 \end{matrix} \right) \cdot b =0$ for all $b\in \mathcal{B}$.
\item[(c)]
The direction ``$\Longrightarrow$'' is immediate from the definitions.
For the converse, let $\mu_1 \not\equiv \mu_2$.
By the linear-algebra argument from part~(b) there is a word $u \in \Sigma^*$ with $|u| \le 2 |Q|$ and $|\mu_1^u| \ne |\mu_2^u|$.
Towards a contradiction suppose that for all $v \in \Sigma^*$ with $|v| = 2 |Q| - |u|$ we have $|\mu_1^{u v}| = |\mu_2^{u v}|$.
Now we have:
\begin{align*}
 |\mu_1^u|
 & = \sum_{v \in \Sigma^{2 |Q| - |u|}} |\mu_1^{u v}| && \text{(as $\sum_{a \in \Sigma} M(a)$ is stochastic)} \\
 & = \sum_{v \in \Sigma^{2 |Q| - |u|}} |\mu_2^{u v}| && \text{(by assumption)} \\
 & = |\mu_2^u|                                       && \text{(as $\sum_{a \in \Sigma} M(a)$ is stochastic)}
\end{align*}
This is a contradiction.
So there is $v \in \Sigma^*$ with $|v| = 2 |Q| - |u|$ and $|\mu_1^{u v}| \ne |\mu_2^{u v}|$.
\item[(d)]
Immediate from part~(b).
\end{itemize}
\end{proof}

\section{Proof of Theorem~\ref{thm-limits-coincide}} \label{app-approx}

Recall that for a (random) run $r \in \Sigma^\omega$ we write $r_i \in \Sigma^i$ for the length-$i$ prefix of~$r$.
We first prove the following lemma:

\begin{lemma} \label{lem-decrease}
 For all $\varepsilon > 0$ we have
 \[
  \pi_1(\bar{L} > 0) \quad = \quad \pi_1\left(\bar{L} > 0 \ \land\ \exists i \in \N : \mi(r_i)  \le (1+\varepsilon) \con(r_i) \right)\,.
 \]
\end{lemma}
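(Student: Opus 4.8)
The plan is to reduce, via the likelihood-ratio martingale, to an algebraic fact about normalized state distributions, and then finish by a compactness argument. First, the inclusion ``$\supseteq$'' of the two events in the statement is immediate (the right-hand event is a subset of $\{\bar{L}>0\}$), so everything reduces to showing that $\pi_1$-almost every run~$r$ with $\bar{L}(r)>0$ eventually satisfies $\mi(r_i)\le(1+\varepsilon)\con(r_i)$. Write $\sigma_i^{(j)}:=\pi_j^{r_i}/|\pi_j^{r_i}|$ for the normalized state distributions after the length-$i$ prefix (well defined for all~$i$ under~$\pi_1$, and for all large~$i$ on $\{\bar{L}>0\}$ since $L_i\to\bar{L}>0$ forces $|\pi_2^{r_i}|>0$ eventually), and let
\[
 \mathrm{ov}(\rho_1,\rho_2):=\max\{|\mu_1|\mid \mu_1\le\rho_1 \ \land\ \exists\,\mu_2\le\rho_2:\mu_1\equiv\mu_2\}
\]
be the ``overlap'' of two distributions. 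Scaling a pair witnessing $\mathrm{ov}(\sigma_i^{(1)},\sigma_i^{(2)})$ by $\mi(r_i)=\min\{|\pi_1^{r_i}|,|\pi_2^{r_i}|\}$ yields subdistributions below $\pi_1^{r_i}$ resp.\ $\pi_2^{r_i}$ that are still equivalent, so $\con(r_i)\ge\mi(r_i)\cdot\mathrm{ov}(\sigma_i^{(1)},\sigma_i^{(2)})$. Hence it suffices to prove that $\mathrm{ov}(\sigma_i^{(1)},\sigma_i^{(2)})\to 1$ as $i\to\infty$, $\pi_1$-almost surely on $\{\bar{L}>0\}$.

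For this I would fix a word $v\in\Sigma^*$ and observe, by the computation in the proof of Proposition~\ref{prop-lim-exists}, that $i\mapsto|\pi_1^{r_i}M(v)|/|\pi_1^{r_i}|$ and $i\mapsto|\pi_2^{r_i}M(v)|/|\pi_1^{r_i}|$ are bounded nonnegative $\pi_1$-martingales; they therefore converge $\pi_1$-a.s., and by countability of $\Sigma^*$ they converge simultaneously for all~$v$. Together with $L_i\to\bar{L}$ this shows that on $\{\bar{L}>0\}$ both $|\sigma_i^{(1)}M(v)|$ and $|\sigma_i^{(2)}M(v)|$ converge; the crucial claim is that they converge to the \emph{same} limit. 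I would prove this from the identity (with $\Gamma_i$ the event that the run continues with~$v$ after its length-$i$ prefix and $\mathbb{E}_1$ the expectation under~$\pi_1$)
\[
 L_i\bigl(|\sigma_i^{(2)}M(v)|-|\sigma_i^{(1)}M(v)|\bigr)=\mathbb{E}_1\bigl[(L_{i+|v|}-L_i)\,\mathbf{1}_{\Gamma_i}\,\big|\,\F_i\bigr],
\]
which follows by expanding both sides in terms of $|\pi_1^{r_i v}|$ and $|\pi_2^{r_i v}|$. On the event $\{\sup_j L_j\le K\}$ for fixed~$K$ --- and the sets $\{\sup_j L_j\le K\}\cap\{\bar{L}>0\}$ cover $\{\bar{L}>0\}$ up to a $\pi_1$-null set, since an a.s.\ convergent sequence is a.s.\ bounded --- the martingale $(L_i)$ is uniformly integrable, $L_{i+|v|}-L_i\to 0$, and $\mathbb{E}_1[\mathbf{1}_A L_{i+|v|}]=\mathbb{E}_1[\mathbf{1}_A L_i]$ for $A\in\F_i$; combining these (after approximating $\{\sup_j L_j\le K\}$ by the $\F_i$-measurable sets $\{\max_{j\le i}L_j\le K\}$) shows that the right-hand side above tends to~$0$ in $L^1(\pi_1)$ on that event, hence its a.s.\ limit is~$0$ there. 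Letting $K\to\infty$ gives $|\sigma_i^{(1)}M(v)|-|\sigma_i^{(2)}M(v)|\to 0$ for all~$v$, $\pi_1$-a.s.\ on $\{\bar{L}>0\}$.

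The last step is soft. By Proposition~\ref{prop-equivalence}~(b) there is a finite set~$\mathcal{B}$ with $\rho_1\equiv\rho_2\iff(\rho_1\ \rho_2)\cdot b=0$ for all $b\in\mathcal{B}$, and each $(\rho_1\ \rho_2)\cdot b$ is a fixed linear combination of the numbers $|\rho_1^w|-|\rho_2^w|$; by the previous step these all converge to~$0$ along $(\sigma_i^{(1)},\sigma_i^{(2)})$. The function $\mathrm{ov}$ is the optimal value of a linear program whose feasible region depends continuously (indeed affinely) on $(\rho_1\ \rho_2)$, hence $\mathrm{ov}$ is continuous on the compact set of pairs of distributions, and it equals~$1$ exactly on the closed set $\{(\rho_1,\rho_2)\mid\rho_1\equiv\rho_2\}$. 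A routine compactness argument --- every subsequence of $(\sigma_i^{(1)},\sigma_i^{(2)})$ has a convergent sub-subsequence, whose limit satisfies $\rho_1\equiv\rho_2$ --- then forces $\mathrm{ov}(\sigma_i^{(1)},\sigma_i^{(2)})\to 1$, and together with the reduction in the first paragraph (choosing $i$ large enough that $\mathrm{ov}>1/(1+\varepsilon)$) this completes the proof.

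The main obstacle I anticipate is the martingale identification in the second step: showing that the conditional ``next-word'' probabilities of $\pi_1$ and $\pi_2$ agree asymptotically precisely on $\{\bar{L}>0\}$. This is where the hypothesis $\bar{L}>0$ is genuinely used, and it requires circumventing the fact that the likelihood-ratio martingale $(L_i)$ need not be uniformly integrable --- handled above by truncating at level~$K$.
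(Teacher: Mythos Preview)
Your overall strategy---reduce to showing that the normalized state distributions $\sigma_i^{(1)},\sigma_i^{(2)}$ become equivalent on $\{\bar L>0\}$ and then finish by compactness---is sound and coincides with the paper's; your first and third steps are correct. The gap is in the second step. The processes $i\mapsto|\pi_j^{r_i}M(v)|/|\pi_1^{r_i}|$ are \emph{not} $\pi_1$-martingales. Carrying out the computation from Proposition~\ref{prop-lim-exists} with $P:=\sum_{a\in\Sigma}M(a)$ gives
\[
 \mathbb{E}_1\!\left[\frac{|\pi_j^{r_{i+1}}M(v)|}{|\pi_1^{r_{i+1}}|}\ \middle|\ r_i=w\right]
 \;=\;\frac{|\pi_j^{w}\,P\,M(v)|}{|\pi_1^{w}|}\,,
\]
and in general $|\pi_j^{w}PM(v)|\ne|\pi_j^{w}M(v)|$ (take a two-state chain in which state~$1$ emits $a$ or~$b$ with probability~$\tfrac12$ each and state~$2$ never emits~$a$; with $\pi_1=\delta_1$ and $v=a$ the ratio drops in expectation from $\tfrac12$ to~$\tfrac14$ after one step). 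Without almost-sure convergence of these processes your $L^1$ argument cannot conclude: $L^1$-convergence to~$0$ only gives a.s.\ convergence along a subsequence, and your sentence ``hence its a.s.\ limit is~$0$ there'' presupposes the limit exists. Your displayed identity \emph{is} correct, and summing its absolute value over $v\in\Sigma^m$ yields
\[
 L_i\sum_{v\in\Sigma^m}\bigl|\,|\sigma_i^{(2)}M(v)|-|\sigma_i^{(1)}M(v)|\,\bigr|
 \;=\;\mathbb{E}_1\bigl[\,|L_{i+m}-L_i|\ \big|\ \F_i\,\bigr],
\]
but showing that this \emph{conditional} expectation tends to~$0$ almost surely on $\{\bar L>0\}$ is precisely the hard part and does not follow from martingale convergence of $(L_i)$ alone.

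The paper closes this step by a direct drift argument rather than a martingale identity: whenever $\widetilde d(\sigma_i^{(1)},\sigma_i^{(2)}):=\max_{|v|=2|Q|}\bigl(|\sigma_i^{(1)}M(v)|-|\sigma_i^{(2)}M(v)|\bigr)\ge\delta$, there is a specific word~$\widetilde u_i$ of length $2|Q|$ which, conditionally on~$\F_i$, the run emits next with $\pi_1$-probability at least~$\delta$, and on that event $L_{i+2|Q|}\le(1-\delta)L_i$. A conditional Borel--Cantelli argument then forces $\widetilde d(\sigma_i^{(1)},\sigma_i^{(2)})\to 0$ on $\{\bar L>0\}$, after which the compactness step (essentially your third paragraph) applies.
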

\begin{proof}
For distributions $\rho_1, \rho_2$ define $\widetilde{d}(\rho_1, \rho_2) := \max_{w \in \Sigma^{2 |Q|}} \left( |\rho_1^w| - |\rho_2^w| \right)=\max_{w \in \Sigma^{2 |Q|}} \left(| |\rho_1^w| - |\rho_2^w| \right|)$.
Clearly $\widetilde{d}(\rho_1, \rho_2) \ge 0$.

For a given run~$r$ we define $\rho_{1,i}$ and $\rho_{2,i}$ for all $i \in \N$: let $\rho_{1,i} := \pi_1^{r_i} / |\pi_1^{r_i}|$
 and $\rho_{2,i} := \pi_2^{r_i} / |\pi_2^{r_i}|$.
Intuitively, $\rho_{1,i}$ (resp.\ $\rho_{2,i}$) is the state distribution in the first (resp.\ second) LMC,
 conditioned under having emitted the run prefix~$r_i$.
Define $\widetilde{u}_i \in \Sigma^{2 |Q|}$ so that
 $\widetilde{d}(\rho_{1,i}, \rho_{2,i}) = |\rho_{1,i}^{\widetilde{u}_i}| - |\rho_{2,i}^{\widetilde{u}_i}|$.
For arbitrary $i \in \N$ and $\delta > 0$ define the event $E_{i,\delta} := \{ \widetilde{d}(\rho_{1,i}, \rho_{2,i}) \ge \delta \}$.
For any run  $r\in E_{i,\delta}$
we have
$|\rho_{1,i}^{\widetilde{u}_i}| - |\rho_{2,i}^{\widetilde{u}_i}| \ge \delta$ and hence
\begin{align}
 |\rho_{1,i}^{\widetilde{u}_i}| & \ge \delta \quad \text{and} \label{eq-decrease1a} \\
 \frac{|\rho_{2,i}^{\widetilde{u}_i}|}{|\rho_{1,i}^{\widetilde{u}_i}|} & \le 1 - \frac{\delta}{|\rho_{1,i}^{\widetilde{u}_i}|}\leq 1-\delta\,. \label{eq-decrease1b}
\end{align}
It follows that
\begin{align*}
 \delta
 & \le \pi_1(r_{i+{2 |Q|}} = r_i \widetilde{u}_i \mid E_{i,\delta}) && \text{by~\eqref{eq-decrease1a}} \\
 & \le \pi_1\left( \begin{array}{l}
     |\pi_2^{r_{i+{2 |Q|}}}| = |\pi_2^{r_i \widetilde{u}_i}| = |\pi_2^{r_{i}}| |\rho_{2,i}^{\widetilde{u}_i}| \quad \land\\[1mm]
     |\pi_1^{r_{i+{2 |Q|}}}| = |\pi_1^{r_i \widetilde{u}_i}| = |\pi_1^{r_{i}}| |\rho_{1,i}^{\widetilde{u}_i}|
     \end{array} \mathrel{}\middle|\mathrel{} E_{i,\delta} \right) \\
 & \le \pi_1\left(L_{i+{2 |Q|}} \le (1-\delta) L_i \mid E_{i,\delta}\right) && \text{by~\eqref{eq-decrease1b}}
\end{align*}
In words: for those runs in $E_{i,\delta}$, the probability of a decrease of $L_i$ in the next $2 |Q|$ steps by at least~$\delta L_i$
 is bounded below by~$\delta$.
It follows that if 
$\widetilde{d}(\rho_{1,i}, \rho_{2,i}) \ge \delta$ holds for infinitely many~$i$, a positive limit $\bar{L}$ exists with probability~$0$:
\[
 \pi_1(\bar{L} > 0 \quad \land \quad \widetilde{d}(\rho_{1,i}, \rho_{2,i}) \ge \delta \text{ holds for infinitely many~$i$}) = 0\,.
\]
Since $\delta > 0$ was chosen arbitrarily, we have:
\begin{equation} \label{eq-decrease2}
 \pi_1\left(\bar{L} > 0 \quad \land \quad \lim_{i \to \infty} \widetilde{d}(\rho_{1,i}, \rho_{2,i}) = 0\right) \quad = \quad \pi_1(\bar{L} > 0) \,.
\end{equation}

Consider a run with $\lim_{i \to \infty} \widetilde{d}(\rho_{1,i}, \rho_{2,i}) = 0$.
Since the pairs $(\rho_{1,i}, \rho_{2,i})$ are elements of the compact set $C = \{(\rho_1, \rho_2) \mid |\rho_1| = |\rho_2| = 1\}$, by the Bolzano-Weierstrass theorem,
 there is a subsequence $i(0) < i(1) < \ldots$ and distributions $\rho_{1,*}, \rho_{2,*}$ such that
\[
 \lim_{j \to \infty} \left( \rho_{1,i(j)}, \rho_{2,i(j)} \right) = (\rho_{1,*}, \rho_{2,*})\,.
\]
It follows that for all $\varepsilon > 0$ there is $i \in \N$ with
\[
  \rho_{1,*} \le (1+\varepsilon) \rho_{1,i} =  (1+\varepsilon)\pi_1^{r_i} / |\pi_1^{r_i}| \quad \text{and} \quad
  \rho_{2,*} \le (1+\varepsilon) \rho_{2,i} = (1+\varepsilon) \pi_2^{r_i} / |\pi_2^{r_i}|
\]
and hence
\[
  \mi(r_i) \rho_{1,*} \le (1+\varepsilon) \pi_1^{r_i} \quad \text{and} \quad
  \mi(r_i) \rho_{2,*} \le (1+\varepsilon) \pi_2^{r_i}\,.
\]
Since $\widetilde{d}$ is a continuous function on~$C$, we have $\widetilde{d}(\rho_{1,*}, \rho_{2,*}) = \widetilde{d}(\lim_{i\rightarrow\infty} \rho_{1,i}, \lim_{i\rightarrow \infty}\rho_{2,i}) =\lim_{i\rightarrow \infty} \widetilde{d}(\rho_{1,i}, \rho_{2,i}) = 0$.
Hence, by Proposition~\ref{prop-equivalence}~(c), we have $\rho_{1,*} \equiv \rho_{2,*}$,
 and so $  \mi(r_i) \rho_{1,*} \equiv  \mi(r_i) \rho_{2,*}$.
It follows
\begin{equation} \label{eq-decrease3}
  \mi(r_i)\le (1+\varepsilon)\con(r_i)\,.
\end{equation}

Using those considerations we obtain:
\begin{align*}
 \pi_1(\bar{L} > 0)
 & = \pi_1\left(\bar{L} > 0 \ \land \ \lim_{i \to \infty} \widetilde{d}(\rho_{1,i}, \rho_{2,i}) = 0\right)
   && \text{by \eqref{eq-decrease2}} \\
 & \le \pi_1\left( \bar{L} > 0 \ \land \ \exists i \in \N : \mi(r_i)\le (1+\varepsilon)\con(r_i) \right)
   && \text{by \eqref{eq-decrease3}} \\
 & \le \pi_1\left( \bar{L} > 0 \right)
\end{align*}
\end{proof}

Now we are ready to prove that the limits $\mi(\infty)$ and~$\con(\infty)$ coincide:
\begin{lemma} \label{lem-limits-coincide}
We have $\mi(\infty) = \con(\infty)$.
\end{lemma}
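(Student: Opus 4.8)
Since Proposition~\ref{prop-basic}~(a) already yields $\mi(\infty) \ge \con(\infty)$, the plan is to establish the reverse inequality. As $(\mi(k))_{k}$ is non-increasing and $(\con(k))_{k}$ is non-decreasing, it suffices to show that for every $\varepsilon > 0$ there is some $k \in \N$ with $\mi(k) \le (1+\varepsilon)\con(\infty) + 3\varepsilon$; letting $\varepsilon \to 0$ then gives $\mi(\infty) \le \con(\infty)$, hence equality. So I would fix $\varepsilon > 0$ and aim to produce such a~$k$, following the three-set strategy already sketched after the statement of Theorem~\ref{thm-limits-coincide}.

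The first ingredient I would isolate is a purely deterministic compactness lemma: for every $\varepsilon > 0$ there is $\delta > 0$ such that, whenever $\rho_1, \rho_2$ are distributions over~$Q$ with $\widetilde{d}(\rho_1, \rho_2) \le \delta$ (here $\widetilde{d}$ is the truncated distance $\max_{w \in \Sigma^{2|Q|}} \bigl| |\rho_1^w| - |\rho_2^w| \bigr|$ already used in the proof of Lemma~\ref{lem-decrease}), there exist subdistributions $\mu_1 \le \rho_1$ and $\mu_2 \le \rho_2$ with $\mu_1 \equiv \mu_2$ and $|\mu_1| \ge 1/(1+\varepsilon)$. I would prove this by contradiction: a sequence of counterexamples has, by compactness of the set of pairs of distributions, a subsequential limit $(\rho_1^*, \rho_2^*)$ with $\widetilde{d}(\rho_1^*, \rho_2^*) = 0$, hence with $\rho_1^* \equiv \rho_2^*$ by Proposition~\ref{prop-equivalence}~(c); since the positive coordinates of $\rho_1^*, \rho_2^*$ are bounded away from~$0$, the members of the sequence eventually dominate $\lambda \rho_1^*$ and $\lambda \rho_2^*$ coordinatewise for $\lambda$ arbitrarily close to~$1$, contradicting the assumed bound. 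Rescaling by $\mi(w)$, this gives the consequence I actually need: if $|\pi_1^w| > 0$, $|\pi_2^w| > 0$, and $\widetilde{d}(\pi_1^w/|\pi_1^w|,\, \pi_2^w/|\pi_2^w|) \le \delta$, then $\mi(w) \le (1+\varepsilon)\con(w)$. This step is the one I expect to be the main obstacle: turning the quantitative near-equivalence ``$\widetilde{d}$ is small'' into the existence of a common equivalent sub-distribution of mass close to~$1$ requires combining the bound $2|Q|$ from Proposition~\ref{prop-equivalence}~(c) with compactness of the simplex.

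With $\delta$ fixed this way, I would use Proposition~\ref{prop-lim-exists} (so that $\pi_1$-a.s.\ the limit $\bar L$ exists, and on $\{\bar L = 0\}$ we have $L_k \to 0$) together with~\eqref{eq-decrease2} (so that $\pi_1$-a.s.\ on $\{\bar L > 0\}$ we have $\widetilde{d}(\pi_1^{r_i}/|\pi_1^{r_i}|,\, \pi_2^{r_i}/|\pi_2^{r_i}|) \to 0$) to choose $k$ large enough that both $\pi_1(\bar L = 0 \,\land\, L_k > \varepsilon) < \varepsilon$ and $\pi_1(\bar L > 0 \,\land\, \widetilde{d}(\pi_1^{r_k}/|\pi_1^{r_k}|,\, \pi_2^{r_k}/|\pi_2^{r_k}|) > \delta) < \varepsilon$. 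For this~$k$ I would partition $\Sigma^k = W_1 \cup W_2 \cup W_3$, where $W_1$ collects the words~$w$ with $|\pi_1^w| = 0$ or $|\pi_2^w|/|\pi_1^w| \le \varepsilon$, $W_2$ the remaining words with $\widetilde{d}(\pi_1^w/|\pi_1^w|,\, \pi_2^w/|\pi_2^w|) > \delta$, and $W_3$ all the rest. Then $\sum_{w \in W_1}\mi(w) \le \varepsilon \sum_{w \in \Sigma^k}|\pi_1^w| = \varepsilon$; the deterministic lemma gives $\sum_{w \in W_3}\mi(w) \le (1+\varepsilon)\sum_{w \in W_3}\con(w) \le (1+\varepsilon)\con(k) \le (1+\varepsilon)\con(\infty)$; and $\sum_{w \in W_2}\mi(w) \le \sum_{w \in W_2}|\pi_1^w| = \pi_1(\{r : r_k \in W_2\}) < 2\varepsilon$, the last bound obtained by splitting the event $\{r_k \in W_2\}$ (up to a null set) according to whether $\bar L = 0$ (where $L_k > \varepsilon$ by membership in $W_2$) or $\bar L > 0$ (where $\widetilde{d} > \delta$). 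Adding the three bounds gives $\mi(k) \le (1+\varepsilon)\con(\infty) + 3\varepsilon$, as required; the probabilistic bookkeeping here is routine once the pointwise convergence $\widetilde{d} \to 0$ on $\{\bar L > 0\}$ is reused from the proof of Lemma~\ref{lem-decrease} rather than only its black-box statement.
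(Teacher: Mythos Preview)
Your argument is correct, and it follows the same overall skeleton as the paper: reuse the almost-sure convergence $\widetilde d(\rho_{1,i},\rho_{2,i})\to 0$ on $\{\bar L>0\}$ from~\eqref{eq-decrease2}, combine it with Proposition~\ref{prop-lim-exists} on $\{\bar L=0\}$, and bound $\mi(k)$ via a three-way split of~$\Sigma^k$. The genuine difference lies in how compactness is used. The paper applies Bolzano--Weierstrass \emph{per run}: for each trajectory with $\widetilde d\to 0$ it extracts a convergent subsequence and obtains a run-dependent index~$i$ with $\mi(r_i)\le(1+\varepsilon)\con(r_i)$; this is packaged as Lemma~\ref{lem-decrease}, and the run-dependence of~$i$ then forces the partition to be based on the first-hitting set $H_\delta$ (using the prefix monotonicity of Proposition~\ref{prop-basic}(b) to control the $W_3$-sum). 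You instead prove a \emph{uniform} compactness lemma on the simplex of distribution pairs --- small $\widetilde d$ implies a common equivalent sub-mass at least $1/(1+\varepsilon)$ --- which lets you partition directly by the value of $\widetilde d$ at the single time~$k$. This buys you a cleaner, more modular argument: no first-hitting times, no appeal to prefix monotonicity, and the compactness statement stands on its own. The price is that you must cite~\eqref{eq-decrease2} from inside the proof of Lemma~\ref{lem-decrease} rather than the lemma's black-box statement, as you note; since your uniform lemma subsumes the second half of that proof anyway, you have in effect replaced Lemma~\ref{lem-decrease} entirely.
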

\begin{proof}
Considering Proposition~\ref{prop-basic}~(c) it suffices to show $\mi(\infty) \le \con(\infty)$.
Towards a contradiction, suppose this does not hold.
Then we have $\mi(\infty) > \con(\infty)$.
So there exists $\delta \in (0,1]$ so that for all $k' \in \N$ we have
\begin{equation}
\mi(k') > \con(k') + 4 \delta. \label{eq-lem-limits-coincide0}
\end{equation}

In the following definition we write $v \prec w$ to denote that $v \in \Sigma^*$ is a proper prefix of~$w \in \Sigma^*$.
Define
 \[
  H_\delta := \{w \in \Sigma^* \mid \mi(w) \le (1+\delta) \con(w) \ \land \ \forall v \prec w : \mi(v) > (1+\delta) \con(v) \}\,.
 \]
By \eqref{eq-lem-limits-coincide0}, $H_\delta\neq \emptyset$.
By Lemma~\ref{lem-decrease} there is $k_0 \in \N$ so that for all $k \ge k_0$ we have
\begin{equation} \label{eq-lem-limits-coincide1}
 \pi_1(\bar{L} > 0 \ \land \ \forall i \le k: r_i \not\in H_\delta) \quad \le \quad \delta\,.
\end{equation}
Using Proposition~\ref{prop-lim-exists}, there is $k_1 \in \N$ such that for all $k \ge k_1$ we have
\begin{equation} \label{eq-lem-limits-coincide2}
 \pi_1(\bar{L} = 0 \ \land \ L_k > \delta) \quad \le \quad \delta\,.
\end{equation}
Choose $k \ge \max\{k_0,k_1\}$, so \eqref{eq-lem-limits-coincide1} and~\eqref{eq-lem-limits-coincide2} hold.
Define the partition of~$\Sigma^k$ in $\Sigma^k = W_1 \cup W_2 \cup W_3$ with
\begin{align*}
 W_1 &:= \{w \in \Sigma^k \mid |\pi_2^w| / |\pi_1^w| \le \delta\} \\
 W_2 &:= \{w \in \Sigma^k \mid |\pi_2^w| / |\pi_1^w| > \delta \ \land \ \forall i \le k: r_i \not\in H_\delta\} \\
 W_3 &:= \{w \in \Sigma^k \mid |\pi_2^w| / |\pi_1^w| > \delta \ \land \ \exists i \le k: r_i \in H_\delta\}
\end{align*}
We have:
\begin{align*}
 \sum_{w \in W_1} \mi(w)
 & = \sum_{w \in W_1} |\pi_2^w| \le \sum_{w \in W_1} \delta |\pi_1^w| \le \delta \\
 \sum_{w \in W_2} \mi(w)
 & \le \sum_{w \in W_2} |\pi_1^w| = \pi_1(W_2 \Sigma^\omega) \\
 & = \pi_1(\bar{L} > 0 \ \land \ W_2 \Sigma^\omega) \\ & \ + \pi_1(\bar{L} = 0 \ \land \ W_2 \Sigma^\omega) \\
 & \le \delta + \delta = 2\delta && \text{(by~\eqref{eq-lem-limits-coincide1} and~\eqref{eq-lem-limits-coincide2})} \\
 \sum_{w \in W_3} \mi(w)
 & \le \sum_{w \in H_\delta \text{ s.t.\ } |w| \le k} \mi(w) && \text{(Prop.~\ref{prop-basic}~(b))} \\
 & \le \sum_{w \in H_\delta \text{ s.t.\ } |w| \le k} (1+\delta) \con(w) && \text{(def.~of~$H_\delta$)} \\
 & \le (1+\delta) \con(k) \le \con(k) + \delta && \text{(Prop.~\ref{prop-basic}~(b))}
\end{align*}
Adding those inequalities yields
\[
 \mi(k) = \sum_{w \in W_1 \cup W_2 \cup W_3} \mi(w) \le \con(k) + 4\delta\,,
\]
thus contradicting~\eqref{eq-lem-limits-coincide0}, as desired.
\end{proof}

Now Theorem~\ref{thm-limits-coincide} from the main body of the paper follows:
\begin{qtheorem}{\ref{thm-limits-coincide}}
 \stmtthmlimitscoincide
\end{qtheorem}
\begin{proof}
Immediate by combining \eqref{eq-lower-upper-infinity} and Lemma~\ref{lem-limits-coincide}.
\end{proof}

\section{Proofs of Section~\ref{sec-irrational}} \label{app-irrational}

\subsection{Proof of Proposition~\ref{prop-NP-hardness}}

We prove Proposition~\ref{prop-NP-hardness} from the main body of the paper:

\begin{qproposition}{\ref{prop-NP-hardness}}
\stmtthmNPhardness
\end{qproposition}
\begin{proof}
We modify the proof of~\cite[Section~6]{LyngsoP02}.
Let us first sketch some important features of that proof.
It is a reduction from the \emph{clique} decision problem:
Given a graph~$G = (V,E)$ and a threshold $t \in \N$, decide whether $G$ has a clique of size at least~$t$.
The clique decision problem is known to be NP-complete.

The authors of~\cite{LyngsoP02} describe an LMC $\M_G$, computed from~$G$, such that $\M_G$ emits substrings of the word
 $a_1 a_2 \cdots a_{|V|}$, where $a_1, \ldots, a_{|V|} \in \Sigma$.
(In their model, an LMC stops emitting letters once a special end state is reached.
This can be simulated in our model by emitting an infinite sequence of a special ``end'' letter once the end state is reached.
In addition, rather than ``omitting'' letters from $a_1 a_2 \cdots a_{|V|}$ by means of $\varepsilon$-transitions,
 in our model we output a special ``blank'' symbol.
Those changes do not cause problems.)
Without loss of generality they assume that $V = \{1, 2, \ldots, |V|\}$.
\newcommand{\degr}{\mathit{deg}}%
Define $\gamma := \sum_{v \in V} 2^{\degr(v)}$, where $\degr(v)$ is the number of neighbours of~$v$ in~$G$.
The gadget~$\M_G$ has the following properties for all $\{i_1, \ldots, i_k\} \subseteq \{1, \ldots, |V|\}$:
\begin{itemize}
\item
 If $i_1 < \ldots < i_k$ and $\{i_1, \ldots, i_k\}$ is a clique in~$G$,
  then $a_{i_1} \cdots a_{i_k}$ is emitted with probability~$k/\gamma$;
\item
 otherwise $a_{i_1} \cdots a_{i_k}$ is emitted with probability~$0$.
\end{itemize}
For $j \in \{0, \ldots, |V|\}$, let $n_j$ denote the number of cliques of size~$j$.
Note that computing the maximal clique size amounts to computing the maximal~$j$ such that $n_j \ne 0$.

Based on the gadget~$\M_G$, the authors of~\cite{LyngsoP02} construct $|V|+1$ pairs of LMCs:
 $(\M_1^0, \M_2^0), \ldots, (\M_1^{|V|}, \M_2^{|V|})$.
For $i \in \{0, \ldots, |V|\}$, let $d_i$ denote the distance between $\M_1^i$ and~$\M_2^i$.
Further, for $i \le \gamma / 2^{|V|}$ define $b_i := 1 - \frac{i 2^{|V|}}{\gamma}$ and $c_i := 1/\gamma$;
for $i > \gamma / 2^{|V|}$ define $b_i := 1 - \frac{\gamma}{i 2^{|V|}}$ and $c_i := \frac{1}{i 2^{|V|}}$.
The pairs $(\M_1^i, \M_2^i)$ are constructed such that
\begin{equation}
 d_i = b_i + c_i \sum_{j=0}^{|V|} n_j |i-j| \label{eq-NP-hardness}
\end{equation}
holds for all $i \in \{0, \ldots, |V|\}$.
It is argued in~\cite{LyngsoP02} that once $d_0, \ldots, d_{|V|}$ are known, one can compute $n_0, \ldots, n_{|V|}$ by solving the
 linear equation system~\eqref{eq-NP-hardness}.
The largest clique size is then the maximal~$j$ such that $n_j \ne 0$.
This establishes a Turing reduction from the clique decision problem to computing the distance.
(Note that by~\eqref{eq-NP-hardness} the distances are rational in this reduction.)

For a Turing reduction from the clique decision problem to the threshold-distance problem,
 it now suffices to argue that one can compute all~$d_i$ with a polynomial number of threshold queries (``is the distance at least~$\tau$?'').
Indeed, consider again~\eqref{eq-NP-hardness}.
The term $\sum_{j=0}^{|V|} n_j |i-j|$ is an integer between $0$ and $(|V| + 1) \cdot 2^{|V|} \cdot |V|$.
Since $b_i$ and $c_i$ are fixed and known for each~$i$, each $d_i$ takes one of at most $(|V| + 1) \cdot 2^{|V|} \cdot |V| + 1$ values.
Using binary search, $\log_2 \left( (|V| + 1) \cdot 2^{|V|} \cdot |V| + 1 \right)$ distance queries suffice to determine~$d_i$.
Doing this for each~$i$ results in a polynomial number of threshold queries.
\end{proof}

\subsection{Proof of Theorem~\ref{thm-square-root-sum}}

We prove Theorem~\ref{thm-square-root-sum} from the main body of the paper.
\begin{qtheorem}{\ref{thm-square-root-sum}}
There is a polynomial-time many-one reduction from the square-root-sum problem to the threshold-distance problem.
\end{qtheorem}
\begin{proof}
The construction is by taking the LMC from Figure~\ref{fig-irrational-1} as a gadget, and joining $n$ instances of it in parallel.
This is sketched for $n=3$ in Figure~\ref{fig-irrational-2-app}.
\begin{figure}
\begin{center}
\begin{tikzpicture}[scale=2.5, LMC style]
\node[state] (p1) at (-2,0) {$p_1$};
\node[state] (p2) at (+2,0) {$p_2$};
\node[state] (q11) at (-1,1.3) {$q_1^1$};
\node[state] (q12) at (-1,0) {$q_1^2$};
\node[state] (q13) at (-1,-1.3) {$q_1^3$};
\node[state] (q21) at (+1,1.3) {$q_2^1$};
\node[state] (q22) at (+1,0) {$q_2^2$};
\node[state] (q23) at (+1,-1.3) {$q_2^3$};
\node[state] (r)  at ( 0,0) {$r$};
\path[->] (p1) edge node[left] {$\frac13 c_1$} (q11.south west);
\path[->] (p1) edge node[above] {$\frac13 c_2$} (q12);
\path[->] (p1) edge node[left] {$\frac13 c_3$} (q13.north west);
\path[->] (p2) edge node[right] {$\frac13 c_1$} (q21.south east);
\path[->] (p2) edge node[above] {$\frac13 c_2$} (q22);
\path[->] (p2) edge node[right] {$\frac13 c_3$} (q23.north east);
\path[->] (q11) edge node[right] {$x_1 \ c$} (r.north west);
\path[->] (q12) edge node[above] {$x_2 \ c$} (r);
\path[->] (q13) edge node[right] {$x_3 \ c$} (r.south west);
\path[->] (q21) edge node[left] {$x_1 \ c$} (r.north east);
\path[->] (q22) edge node[above] {$x_2 \ c$} (r);
\path[->] (q23) edge node[left] {$x_3 \ c$} (r.south east);
\path[->] (q11) edge [loop,out=70,in=110,looseness=10] node[above] {$\frac12 a$} (q11);
\path[->] (q12) edge [loop,out=70,in=110,looseness=10] node[above] {$\frac12 a$} (q12);
\path[->] (q13) edge [loop,out=70,in=110,looseness=10] node[above] {$\frac12 a$} (q13);
\path[->] (q11) edge [loop,out=290,in=250,looseness=10] node[below] {$(\frac12 - x_1) b$} (q11);
\path[->] (q12) edge [loop,out=290,in=250,looseness=10] node[below] {$(\frac12 - x_2) b$} (q12);
\path[->] (q13) edge [loop,out=290,in=250,looseness=10] node[below] {$(\frac12 - x_3) b$} (q13);
\path[->] (q21) edge [loop,out=110,in=70,looseness=10] node[above] {$(\frac12 - x_1) a$} (q21);
\path[->] (q22) edge [loop,out=110,in=70,looseness=10] node[above] {$(\frac12 - x_2) a$} (q22);
\path[->] (q23) edge [loop,out=110,in=70,looseness=10] node[above] {$(\frac12 - x_3) a$} (q23);
\path[->] (q21) edge [loop,out=250,in=290,looseness=10] node[below] {$\frac12 b$} (q21);
\path[->] (q22) edge [loop,out=250,in=290,looseness=10] node[below] {$\frac12 b$} (q22);
\path[->] (q23) edge [loop,out=250,in=290,looseness=10] node[below] {$\frac12 b$} (q23);
\path[->] (r) edge [loop,out=110,in=70,looseness=10] node[above] {$1 c$} (r);
\end{tikzpicture}
\end{center}
\caption{This LMC is obtained by combining the chain from Figure~\ref{fig-irrational-1} in parallel $n=3$ times.
We have $d(\delta_{p_1}, \delta_{p_2}) =
  \frac13 \left( d(\delta_{q_1^1}, \delta_{q_2^1}) + d(\delta_{q_1^2}, \delta_{q_2^2}) + d(\delta_{q_1^3}, \delta_{q_2^3}) \right)$.}
\label{fig-irrational-2-app}
\end{figure}
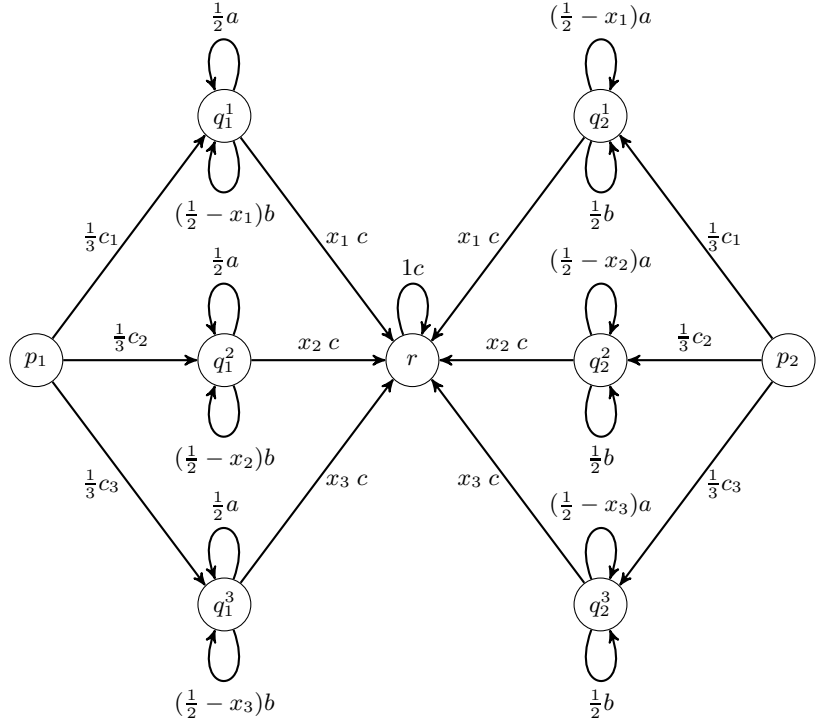
In general we have $\Sigma = \{c_1, \ldots, c_n, a, b, c\}$ and $Q = \{p_1, p_2, q_1^1, \ldots, q_1^n, q_2^1, \ldots, q_2^n, r\}$.
Using this construction we have
\begin{equation} \label{eq-square-root-1}
d(\delta_{p_1}, \delta_{p_2}) = \frac{1}{n} \sum_{i=1}^n d(\delta_{q_1^i}, \delta_{q_2^i})\,.
\end{equation}
To see this, consider the event
\[
  E_\ge := \{w c c c \ldots \mid w \in \{a,b\}^*, \ \#_a(w) \ge \#_b(w) \}
\]
from the proof of Proposition~\ref{prop-irrational-1}.
From that proof we know
\begin{equation} \label{eq-square-root-2}
 d(\delta_{q_1^i}, \delta_{q_2^i}) = \delta_{q_1^i}(E_\ge) - \delta_{q_2^i}(E_\ge)\,.
\end{equation}
It follows that we have
\begin{equation} \label{eq-square-root-3}
 d(\delta_{p_1}, \delta_{p_2}) = \delta_{p_1}(E) - \delta_{p_2}(E)
\end{equation}
for
\[
 E := \{\sigma w c c c \ldots \mid \sigma \in \{c_1, \ldots, c_n\}, \ w \in \{a,b\}^*, \ \#_a(w) \ge \#_b(w) \}\,.
\]
From the definition of~$E$ we have
\begin{equation} \label{eq-square-root-4}
 \delta_{p_1}(E) = \frac{1}{n} \sum_{i=1}^n \delta_{q_1^i}(E_\ge) \quad \text{and} \quad \delta_{p_2}(E) = \frac{1}{n} \sum_{i=1}^n \delta_{q_2^i}(E_\ge) \,.
\end{equation}
It follows:
\begin{align*}
d(\delta_{p_1}, \delta_{p_2})
& = \delta_{p_1}(E) - \delta_{p_2}(E) && \text{by~\eqref{eq-square-root-3}} \\
& = \frac{1}{n} \sum_{i=1}^n \left( \delta_{q_1^i}(E_\ge) - \delta_{q_2^i}(E_\ge) \right) && \text{by~\eqref{eq-square-root-4}} \\
& = \frac{1}{n} \sum_{i=1}^n d(\delta_{q_1^i}, \delta_{q_2^i})  && \text{by~\eqref{eq-square-root-2}\,,}
\end{align*}
hence \eqref{eq-square-root-1} is proved.

Recall that the input of the square-root-sum problem is a list of integers $s_1, \ldots, s_n \in \N$ and $t \in \N$.
Without loss of generality, we can assume that $s_1, \ldots, s_n, t \ge 1$.
The reduction is as follows.
Define $h := 3 \max_{i \in \{1, \ldots, n\}} s_i$.
Construct the LMC from Figure~\ref{fig-irrational-2-app} with $x_i := 2 s_i / h^2$.
Then we have $x_i \in (0, 1/2)$ and
\begin{equation}
 \frac12 \sqrt{2 x_i} = \frac{1}{h} \sqrt{s_i} \;. \label{eq-sqrt-reduction-1}
\end{equation}
Set the threshold $\tau := \frac{1}{n h} \cdot t$.
We have:
\begin{align*}
 d(\delta_{p_1}, \delta_{p_2})
 & = \frac{1}{n} \sum_{i=1}^n d(\delta_{q_1^i}, \delta_{q_2^i}) && \text{(by~\eqref{eq-square-root-1})} \\
 & = \frac{1}{n} \sum_{i=1}^n \frac12 \sqrt{2 x_i} && \text{(Proposition~\ref{prop-irrational-1})} \\
 & = \frac{1}{n h} \sum_{i=1}^n \sqrt{s_i} && \text{(by~\eqref{eq-sqrt-reduction-1})\,.}
\end{align*}
It follows that we have $d(\delta_{p_1}, \delta_{p_2}) \ge \tau$ if and only if $\sum_{i=1}^n \sqrt{s_i} \ge t$.
\end{proof}

\section{Proof of Proposition~\ref{prop-distance-function}} \label{app-function}

Recall that by Theorem~\ref{thm-limits-coincide} we have $d_\theta(x) = 1 - \con(\infty)$.
Let $\pi_1, \pi_2$ be the initial distributions concentrated on $p_1, p_2$, respectively.
It is easy to see that for any word $w \in \Sigma^*$ and any $\mu_1, \mu_2$ with $\mu_1 \le \pi_1^w$ and $\mu_2 \le \pi_2^w$ and $\mu_1 \equiv \mu_2$
 we have $\mu_1(q_1)  = \mu_2(q_1)$ and $\mu_1(q_2)  = \mu_2(q_2)$ and $\mu_1(s) = \mu_2(s) = 0$ for $s \in \{p_1,p_2,r_1,r_2\}$.
Therefore, writing $\mu_1 \land \mu_2$ for the componentwise minimum of row vectors $\mu_1, \mu_2$, we have
$\con(w) = \left| \pi_1^{w} \land \pi_2^{w} \right|$.
Hence $\con(b w) = 0$ holds for all $w \in \Sigma^*$.
So we have for all $k \ge 1$:
\begin{equation}
 \con(k) = \sum_{w \in \Sigma^k} \con(w) = \sum_{w \in \Sigma^{k-1}} \con(a w)
  = \sum_{w \in \Sigma^{k-1}} \left| \pi_1^{a w} \land \pi_2^{a w} \right| \label{eq-app-function1}
\end{equation}
Note that the states $q_1, q_2$ cannot be left after they have been entered.
This motivates the definition of transition matrices restricted to $q_1, q_2$:
\begin{align*}
 A = \begin{pmatrix} \frac12 \quad & \quad  \frac12 - \frac{1}{2 \theta} \\ 0 \quad & \quad \frac{1}{2 \theta} \end{pmatrix} &&&
 B = \begin{pmatrix} \frac{1}{2 \theta} \quad & \quad 0 \\ \frac12 - \frac{1}{2 \theta} \quad & \quad \frac12 \end{pmatrix}
\end{align*}

Define for each $k \ge 1$ a function $\co_k: [0,\infty)^2 \times [0,\infty)^2 \to [0, \infty)$ by
\begin{align*}
 \co_1(u,v)     &= | u \land v | \\
 \co_{k+1}(u,v) &= \co_k(u A, v A) + \co_k(u B, v B)
\end{align*}
By the definition of $\pi_1^{a w}, \pi_2^{a w}$ and by~\eqref{eq-app-function1} we have
\begin{equation}
 \con(k) = \co_k\left(\left(\frac12 - x,0\right),\left(0,\frac12 + x\right)\right) \qquad \text{for all $k \ge 1$.} \label{eq-app-function2}
\end{equation}
Define for each $k \ge 1$ a function $\li_k: \R^2 \to [0, \infty)$ by
\begin{align*}
 \li_1(z)       &= | z | \\
 \li_{k+1}(z)   &= \begin{cases}
                     |z|                     & \text{ if } z \le (0,0) \ \text{ or } \ z \ge (0,0) \\
                     \li_k(z A) + \li_k(z B) & \text{ otherwise,}
                   \end{cases}
\end{align*}
where, for $z_1, z_2 \in \R$, we write $|(z_1,z_2)| = |z_1| + |z_2|$ for the $L_1$-norm.
It follows immediately from the definition that the function $\li_k$ is ``almost linear''
 in the sense that we have
\begin{equation}
 \li_k(a z) = a \li_k(z) = \li_k(-a z)  \qquad \text{for $a \in [0, \infty)$ and $k \in \{1, 2, \ldots\}$.} \label{eq-li-linear}
\end{equation}
The following lemma connects $\co_k$ and~$\li_k$:
\begin{lemma} \label{lem-co-li}
 Let $u \ge (0,0)$ and $v \ge (0,0)$. Then for all $k \ge 1$:
  \[
   2 \co_k(u,v) + \li_k(u-v) = |u + v|\,.
  \]
\end{lemma}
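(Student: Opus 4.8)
The plan is to prove the identity by induction on~$k$, using only two elementary facts about the matrices $A$ and~$B$.

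First I would dispose of the base case $k=1$. Here both sides reduce, coordinatewise, to the real‑number identity $2\min(a,b)+|a-b|=a+b$ for $a,b\ge 0$; applying it to each of the two coordinates of $u$ and $v$ and summing yields $2\co_1(u,v)+\li_1(u-v)=2\,|u\land v|+|u-v|=|u+v|$.

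The inductive step rests on two observations about $A$ and~$B$. (1) Both matrices have nonnegative entries, so right‑multiplication by $A$ or $B$ preserves sign‑definiteness of a row vector: if $z\ge(0,0)$ then $zA\ge(0,0)$ and $zB\ge(0,0)$, and symmetrically for $\le(0,0)$. (2) The matrix $A+B$ is stochastic: its diagonal entries are both $\tfrac12+\tfrac1{2\theta}$ and its off‑diagonal entries are both $\tfrac12-\tfrac1{2\theta}$, so each row sums to~$1$; consequently $|y(A+B)|=|y|$ for every sign‑definite row vector~$y$ (if $y\ge(0,0)$, then $|y(A+B)|$ equals the sum of the entries of $y(A+B)$, which equals the sum of the entries of~$y$, namely $|y|$; the case $y\le(0,0)$ is symmetric). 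Both facts are immediate calculations.

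Now assume the identity for some $k\ge1$ and let $u,v\ge(0,0)$ and $z:=u-v$. Unfolding the recursion for $\co_{k+1}$ and applying the induction hypothesis to the nonnegative pairs $(uA,vA)$ and $(uB,vB)$ gives
\[
2\co_{k+1}(u,v) = \big(|(u+v)A| - \li_k(zA)\big) + \big(|(u+v)B| - \li_k(zB)\big) = |u+v| - \big(\li_k(zA)+\li_k(zB)\big),
\]
where the final step uses fact~(2) with $y=u+v\ge(0,0)$. It then remains to check that $\li_k(zA)+\li_k(zB)=\li_{k+1}(z)$. If $z$ is sign‑definite, then by fact~(1) so are $zA$ and $zB$, hence $\li_k(zA)+\li_k(zB)=|zA|+|zB|=|z(A+B)|=|z|=\li_{k+1}(z)$; if $z$ is not sign‑definite, this equality is precisely the defining recursion for $\li_{k+1}$. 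In either case $2\co_{k+1}(u,v)+\li_{k+1}(z)=|u+v|$, which closes the induction. I do not expect a genuine obstacle here: the only point needing care is the case split on whether $z=u-v$ is sign‑definite, together with the remark that $A$ and $B$ preserve sign‑definiteness, so that the two definitional branches of $\li_{k+1}$ can be treated uniformly.
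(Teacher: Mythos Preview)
Your proposal is correct and follows essentially the same approach as the paper: induction on~$k$, with the base case handled by the coordinatewise identity $2\min(a,b)+|a-b|=a+b$, and the inductive step hinging on the nonnegativity of $A,B$, the stochasticity of $A+B$, and the case split on whether $z=u-v$ is sign-definite to verify $\li_k(zA)+\li_k(zB)=\li_{k+1}(z)$. The only cosmetic difference is the order in which you unfold the recursions.
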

\begin{proof}
We write $u = (u_1, u_2)$ and $v = (v_1, v_2)$.
We proceed by induction on~$k$.
For the induction base let $k = 1$.
Let $u_1 \le v_1$ and $u_2 \le v_2$.
Then we have $\co_1(u,v) = u_1 + u_2$ and $\li_1(u-v) = v_1 - u_1 + v_2 - u_2$.
Hence $2 \co_1(u,v) + \li_1(u-v) = u_1 + v_1 + u_2 + v_2 = |u + v|$.
The case $u_1 \ge v_1$ and $u_2 \ge v_2$ is similar.

Now let $u_1 \le v_1$ and $u_2 \ge v_2$.
Then we have $\co_1(u,v) = u_1 + v_2$ and $\li_1(u-v) = v_1 - u_1 + u_2 - v_2$.
Hence $2 \co_1(u,v) + \li_1(u-v) = u_1 + v_1 + u_2 + v_2 = |u + v|$.
The case $u_1 \ge v_1$ and $u_2 \le v_2$ is similar.

For the induction step let $k \ge 1$.
Then we have:
\begin{equation}
 \li_{k+1}(u-v) = \li_k( (u-v) A ) + \li_k( (u-v) B )\;. \label{eq-co-li-li}
\end{equation}
Indeed, if $u \le v$, we have
\begin{align*}
 & \ \li_{k+1}(u-v) \\
 & = |u-v|                 && \text{(definition of $\li_{k+1}$)}\\
 & = |(u-v) A + (u-v) B|   && \text{($(v-u) \ge 0$ and $A + B$ is stochastic)} \\
 & = |(u-v) A| + |(u-v) B| && \text{($(v-u)$ and $A$ and $B$ are nonnegative)} \\
 & = \li_k( (u-v) A ) + \li_k( (u-v) B ) && \text{(by definition of $\li_k$)}
\end{align*}
The same equalities hold in the case $u \ge v$.
If neither $u \le v$ nor $u \ge v$ holds, then we have $\li_{k+1}(u-v) = \li_k( (u-v) A ) + \li_k( (u-v) B )$ by the definition of~$\li_{k+1}$.

So we have:
\begin{align*}
  & \ 2 \co_{k+1}(u,v) + \li_{k+1}(u-v) \\
  & = 2 \co_k(u A, v A) + 2 \co_k(u B, v B) + \li_{k+1}(u-v) && \text{(definition of $\co_{k+1}$)} \\
  & = 2 \co_k(u A, v A) + 2 \co_k(u B, v B) + \li_k( (u-v) A ) + \li_k( (u-v) B ) && \text{(by~\eqref{eq-co-li-li})} \\
  & = |u A + v A| + |u B + v B| && \text{(induction hypothesis)} \\
  & = |(u + v) (A + B)| && \text{($u,v,A,B$ are nonnegative)} \\
  & = |u + v| && \text{($A+B$ stochastic)}
\end{align*}
\end{proof}

Summarizing the previous development we obtain:
\begin{lemma} \label{lem-dist-li}
We have:
\[
 d_\theta(x) = \frac12 + \frac12 \lim_{k \to \infty} \li_k\left(\left(x - \frac12, x + \frac12 \right)\right)
\]
\end{lemma}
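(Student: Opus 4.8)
The plan is to reduce everything to the already-established identities, so that only a short sign computation remains. First I would recall, as noted at the beginning of this appendix section, that Theorem~\ref{thm-limits-coincide} gives $d_\theta(x) = 1 - \con(\infty) = 1 - \lim_{k\to\infty}\con(k)$, and that~\eqref{eq-app-function2} expresses $\con(k) = \co_k(u,v)$ for all $k \ge 1$, where $u := (\tfrac12 - x,\, 0)$ and $v := (0,\, \tfrac12 + x)$.

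Next I would invoke Lemma~\ref{lem-co-li}. Since the parameter range is $x \in [-\tfrac12,\tfrac12]$, we have $u \ge (0,0)$ and $v \ge (0,0)$, so the lemma applies and yields $2\co_k(u,v) + \li_k(u-v) = |u+v|$ for every $k \ge 1$. Here $u+v = (\tfrac12 - x,\, \tfrac12 + x)$ is componentwise nonnegative, whence $|u+v| = (\tfrac12 - x) + (\tfrac12 + x) = 1$; and $u - v = (\tfrac12 - x,\, -\tfrac12 - x) = -\big(x - \tfrac12,\, x + \tfrac12\big)$, so by the reflection invariance~\eqref{eq-li-linear} of $\li_k$ (with $a=1$) we get $\li_k(u-v) = \li_k\big((x-\tfrac12,\, x+\tfrac12)\big)$. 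Substituting, this shows
\[
 \con(k) \;=\; \tfrac12 - \tfrac12\, \li_k\big((x-\tfrac12,\, x+\tfrac12)\big) \qquad \text{for all } k \ge 1.
\]

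Finally I would take $k \to \infty$: the left-hand side converges to $\con(\infty)$ by Proposition~\ref{prop-basic}~(c), so $\lim_{k\to\infty}\li_k\big((x-\tfrac12,\, x+\tfrac12)\big)$ exists and equals $1 - 2\con(\infty)$, and therefore
\[
 d_\theta(x) \;=\; 1 - \con(\infty) \;=\; \tfrac12 + \tfrac12 \lim_{k\to\infty}\li_k\big((x-\tfrac12,\, x+\tfrac12)\big),
\]
as claimed. I do not expect any real obstacle here: all the work sits in Lemma~\ref{lem-co-li} and Theorem~\ref{thm-limits-coincide}, which are established beforehand. The only points requiring care are checking the hypothesis $u,v \ge (0,0)$ of Lemma~\ref{lem-co-li} (which is exactly where $x \in [-\tfrac12,\tfrac12]$ is used) and the sign bookkeeping $u - v = -(x-\tfrac12,\, x+\tfrac12)$ combined with $\li_k(-z) = \li_k(z)$.
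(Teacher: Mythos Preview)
Your proposal is correct and follows essentially the same route as the paper's proof: apply Theorem~\ref{thm-limits-coincide}, then~\eqref{eq-app-function2}, then Lemma~\ref{lem-co-li}, and finally the sign symmetry~\eqref{eq-li-linear}. If anything, you are slightly more careful than the paper in explicitly verifying $u,v \ge (0,0)$ and computing $|u+v|=1$.
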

\begin{proof}
\begin{align*}
d_\theta(x)
& = 1 - \con(\infty) && \text{(Theorem~\ref{thm-limits-coincide})} \\
& = 1 - \lim_{k \to \infty} \co_k\left(\left(\frac12 - x,0\right),\left(0,\frac12 + x\right)\right) && \text{(by~\eqref{eq-app-function2})} \\
& = \frac12 + \frac12 \lim_{k \to \infty} \li_k\left(\left(\frac12 - x, -\frac12 - x\right)\right) && \text{(Lemma~\ref{lem-co-li})} \\
& = \frac12 + \frac12 \lim_{k \to \infty} \li_k\left(\left(x - \frac12, x + \frac12\right)\right) && \text{(by~\eqref{eq-li-linear})}
\end{align*}
\end{proof}

Lemma~\ref{lem-dist-li} suggests the definition of a function $\fs{k}: \R \to [0, \infty)$, for each $k \ge 1$, such that
\[
 \fs{k}(x) = \li_k\left(x - \frac12, x + \frac12\right)\,.
\]

The following lemma characterizes $\fs{k}$ recursively:
\begin{lemma} \label{lem-fsk}
 We have $\fs{1}(x) = 2 |x|$.
 Further, for all $k \ge 1$:
 \[
 \fs{k+1}(x) = \begin{cases}
                 2 |x| & |x| \ge \frac12 \\
                 \frac{1}{2 \theta} \fs{k}( \theta x - (\frac12 \theta - \frac12)) +
                 \frac{1}{2 \theta} \fs{k}( \theta x + (\frac12 \theta - \frac12)) & |x| \le \frac12
               \end{cases}
 \]
\end{lemma}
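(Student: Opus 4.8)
The plan is to establish both identities directly from the definitions of $\li_k$, $\li_{k+1}$, and $\fs k(x) = \li_k(x-\tfrac12,\,x+\tfrac12)$; no induction on~$k$ is actually required, since the displayed recursion expresses $\fs{k+1}$ in terms of $\li_k$ (equivalently $\fs k$) in a single step.

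For the base case $k=1$ I would simply unfold: $\fs 1(x) = \li_1(x-\tfrac12,\,x+\tfrac12) = |x-\tfrac12| + |x+\tfrac12|$, which equals $2|x|$ for $|x|\ge\tfrac12$ (and evaluates to the constant $1$ for $|x|\le\tfrac12$, so the first identity is meant in the regime $|x|\ge\tfrac12$). For general~$k$ I would fix $x$, set $z := (x-\tfrac12,\,x+\tfrac12)\in\R^2$ so that $\fs{k+1}(x) = \li_{k+1}(z)$, and split according to the sign pattern of the coordinates of~$z$ — exactly the case distinction built into the definition of $\li_{k+1}$. If $x\ge\tfrac12$ then $z\ge(0,0)$, so $\li_{k+1}(z) = |z| = (x-\tfrac12)+(x+\tfrac12) = 2|x|$; if $x\le-\tfrac12$ then $z\le(0,0)$ and symmetrically $\li_{k+1}(z) = 2|x|$; the endpoints $x=\pm\tfrac12$ fall into this case, since then one coordinate of~$z$ vanishes. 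If $|x|<\tfrac12$ then $x-\tfrac12<0<x+\tfrac12$, so neither $z\ge(0,0)$ nor $z\le(0,0)$, hence $\li_{k+1}(z) = \li_k(zA) + \li_k(zB)$.

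The substantive step is to rewrite $\li_k(zA)$ and $\li_k(zB)$ in the form $\tfrac1{2\theta}\fs k(\cdot)$. I would first carry out the two $2\times2$ matrix multiplications, obtaining $zA = \bigl(\tfrac x2-\tfrac14,\ \tfrac x2-\tfrac14+\tfrac1{2\theta}\bigr)$ and $zB = \bigl(\tfrac x2+\tfrac14-\tfrac1{2\theta},\ \tfrac x2+\tfrac14\bigr)$, and then invoke the ``almost linearity'' \eqref{eq-li-linear} of~$\li_k$ to pull out a factor $\tfrac1{2\theta}$: since $2\theta\cdot zA = (y_A-\tfrac12,\ y_A+\tfrac12)$ with $y_A := \theta x-(\tfrac12\theta-\tfrac12)$, property \eqref{eq-li-linear} gives $\li_k(zA) = \tfrac1{2\theta}\li_k(y_A-\tfrac12,\ y_A+\tfrac12) = \tfrac1{2\theta}\fs k(y_A)$; analogously $2\theta\cdot zB = (y_B-\tfrac12,\ y_B+\tfrac12)$ with $y_B := \theta x+(\tfrac12\theta-\tfrac12)$, so $\li_k(zB) = \tfrac1{2\theta}\fs k(y_B)$. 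Adding these two identities yields the claimed recursion for $\fs{k+1}$.

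I expect no conceptual obstacle here; the work is the bookkeeping of the matrix arithmetic and, above all, recognizing that each transformed vector $2\theta\,zA$ and $2\theta\,zB$ has the canonical shape $(y-\tfrac12,\,y+\tfrac12)$, so that \eqref{eq-li-linear} applies and $\li_k$ can be replaced by $\fs k$ of a scalar. I would also double-check the boundary values $x=\pm\tfrac12$ to make sure they are consistently covered by the first branch of the stated formula.
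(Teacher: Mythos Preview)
Your proposal is correct and follows essentially the same route as the paper: compute $zA$ and $zB$ explicitly, pull out the factor $\tfrac{1}{2\theta}$ via the almost-linearity \eqref{eq-li-linear}, and recognise each result as having the shape $(y-\tfrac12,\,y+\tfrac12)$ so that $\li_k$ becomes $\fs{k}$ of a scalar. Your observation that $\fs{1}(x)=|x-\tfrac12|+|x+\tfrac12|$ equals $2|x|$ only for $|x|\ge\tfrac12$ (and equals~$1$ on $[-\tfrac12,\tfrac12]$) is in fact sharper than the paper, which states $\fs{1}(x)=2|x|$ without qualification; this imprecision is harmless for the subsequent Banach fixed-point argument, since only the boundary behaviour $|x|\ge\tfrac12$ is needed to place $\fs{1}$ in the function space~$F$.
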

\begin{proof}
The equalities $\fs{1}(x) = 2 |x|$ and $\fs{k+1}(x) = 2 |x|$ for $|x| \ge \frac12$ follow from the definitions.
Further we have:
\begin{align*}
\left(x - \frac12 \ , \ x + \frac12\right) A & \ = \ \frac{1}{2 \theta} \left( \theta x - \frac12 \theta \ , \ \theta x - \frac12 \theta + 1 \right)
 && \text{and} \\
\left(x - \frac12 \ , \ x + \frac12\right) B & \ = \ \frac{1}{2 \theta} \left( \theta x + \frac12 \theta - 1 \ , \ \theta x + \frac12 \theta + 1 \right)
\end{align*}
So it follows for $|x| \le \frac12$:
\begin{align*}
 \fs{k+1}(x) & = \li_{k+1}\left(x - \frac12, x + \frac12\right) && \text{(definition of~$\fs{k+1}$)} \\
             & = \li_k\left( \frac{1}{2 \theta} \left( \theta x - \frac12 \theta \ , \ \theta x - \frac12 \theta + 1 \right) \right)
                 && \text{(definition of $\li_{k+1}$)} \\
             & \quad + \li_k\left( \frac{1}{2 \theta} \left( \theta x + \frac12 \theta - 1 \ , \ \theta x + \frac12 \theta + 1 \right) \right) \\
             & = \frac{1}{2 \theta} \li_k\left( \theta x - \frac12 \theta \ , \ \theta x - \frac12 \theta + 1 \right)
                 && \text{(by~\eqref{eq-li-linear})} \\
             & \quad + \frac{1}{2 \theta} \li_k \left( \theta x + \frac12 \theta - 1 \ , \ \theta x + \frac12 \theta + 1 \right) \\
             & = \frac{1}{2 \theta} \fs{k}\left( \theta x - \left( \frac12 \theta - \frac12 \right) \right)
                 && \text{(definition of~$\fs{k}$)} \\
             & \quad + \frac{1}{2 \theta} \fs{k} \left( \theta x + \left( \frac12 \theta - \frac12 \right) \right)
\end{align*}
\end{proof}

Now we prove Proposition~\ref{prop-distance-function} from the main body of the paper.
\begin{qproposition}{\ref{prop-distance-function}}
\stmtpropdistancefunction
\end{qproposition}
\begin{proof}
We use the Banach fixed-point theorem.
Define a complete metric space $(F, \Delta)$ by
\[
 F := \{ f : \R \to [0, \infty) \mid f \text{ is continuous and } f(x) = 2 |x| \text{ for } |x| \ge {\textstyle \frac12} \}
\]
and the distance metric $\Delta: F \times F \to [0,\infty)$ with
\[
 \Delta(f_1, f_2) := \sup_{|x| \le {\textstyle\frac12}} |f_1(x) - f_2(x)|\;.
\]
Fix $\theta > 1$.
Define the function $S_\theta : F \to F$ with
\[
 S_\theta(f)(x) := \begin{cases} 2 |x| & |x| \ge \frac12 \\
                                \frac{1}{2 \theta} f( \theta x - (\frac12 \theta - \frac12)) +
                                \frac{1}{2 \theta} f( \theta x + (\frac12 \theta - \frac12)) & |x| \le \frac12\,.
                   \end{cases}
\]
We have for all $f_1, f_2 \in F$ and $|x| \le \frac12$:
\begin{align*}
 & |S_\theta(f_1)(x) - S_\theta(f_2)(x)| \\
 & \le \frac{1}{2 \theta} \left| f_1\left(\theta x - \left(\frac12 \theta - \frac12\right)\right)
                               - f_2\left(\theta x - \left(\frac12 \theta - \frac12\right)\right) \right| \\
 & \quad + \frac{1}{2 \theta} \left| f_1\left(\theta x + \left(\frac12 \theta - \frac12\right)\right)
                               - f_2\left(\theta x + \left(\frac12 \theta - \frac12\right)\right) \right| \\
 & \le \frac{1}{2 \theta} \Delta(f_1, f_2) + \frac{1}{2 \theta} \Delta(f_1, f_2) \\
 & = \frac{1}{\theta} \Delta(f_1, f_2)
\end{align*}
It follows that we have $\Delta(S_\theta(f_1), S_\theta(f_2)) \le \frac{1}{\theta} \Delta(f_1, f_2)$,
 so $S_\theta$ is contraction mapping.
Using the Banach fixed-point theorem and Lemma~\ref{lem-fsk} we obtain that
 the function sequence $\fs{1}, \fs{2}, \ldots$ converges to the (unique) fixed point of~$S_\theta$,
  i.e., to the function~$f_\theta$ from the statement of this proposition.
It follows:
\begin{align*}
d_\theta(x) & = \frac12 + \frac12 \lim_{k \to \infty} \li_k\left(\left(x - \frac12, x + \frac12 \right)\right) && \text{(Lemma~\ref{lem-dist-li})} \\
            & = \frac12 + \frac12 \lim_{k \to \infty} \fs{k}(x) && \text{(definition of~$\fs{k}$)} \\
            & = \frac12 + \frac12 f_\theta(x) && \text{(as argued above)}
\end{align*}
\end{proof}

\section{Proof of Lemma~\ref{lem-dist-1-R}} \label{app-distance-1}

We prove Lemma~\ref{lem-dist-1-R} from the main body of the paper.

\begin{qlemma}{\ref{lem-dist-1-R}}
\stmtlemdistoneR
\end{qlemma}
\begin{proof}
Define
 \begin{align*}
  S^{\pi_1, \pi_2} := \{(r_1,r_2) \in Q \times Q \mid \mbox{} & \exists q_1 \in \supp(\pi_1) \ \exists q_2 \in \supp(\pi_2):\\
                                                              & (r_1,r_2) \text{ is reachable from $(q_1, q_2)$ in $G$}\} \,.
 \end{align*}
We need to show $S^{\pi_1, \pi_2} = R^{\pi_1, \pi_2}$.

First we show $S^{\pi_1, \pi_2} \subseteq R^{\pi_1, \pi_2}$.
For $k \in \N$ let
\begin{align*}
 S^{\pi_1, \pi_2}_k := \{(r_1,r_2) \in Q \times Q \mid \mbox{} & \exists q_1 \in \supp(\pi_1) \ \exists q_2 \in \supp(\pi_2):\\
                                                               & (r_1,r_2) \text{ is reachable from $(q_1, q_2)$ in $G$ in $k$ steps.}\}
\end{align*}
We show by induction on~$k$ that
\begin{equation} \label{eq-lem-dist-1-R-1}
 S^{\pi_1, \pi_2}_k \subseteq R^{\pi_1, \pi_2} \text{\quad  for all $k \in \N$.}
\end{equation}
The case $k=0$ is trivial.
Let $k \ge 0$.
Let $(r_1', r_2') \in S^{\pi_1, \pi_2}_{k+1}$.
Then there are $q_1 \in \supp(\pi_1)$ and $q_2 \in \supp(\pi_2)$ and $r_1, r_2 \in Q$ so that there is a path
 \[
  (q_1,q_2) \to \cdots \to (r_1, r_2) \to (r_1', r_2')
 \]
of length~$k+1$ in~$G$.
By the induction hypothesis there is $w \in \Sigma^*$ such that $r_1 \in \supp(\pi_1^w)$ and $r_2 \in \supp(\pi_2^w)$.
By the definition of~$G$ the presence of an edge from $(r_1, r_2)$ to $(r_1', r_2')$ implies that there is
 $a \in \Sigma$ with $M(a)(r_1,r_1') > 0$ and $M(a)(r_2,r_2') > 0$.
It follows that we have $r_1' \in \supp(\pi_1^{w a})$ and $r_2' \in \supp(\pi_2^{w a})$.
Hence $(r_1', r_2') \in R^{\pi_1, \pi_2}$.
This proves~\eqref{eq-lem-dist-1-R-1}.
Since $\bigcup_{k \ge 0} S^{\pi_1, \pi_2}_k = S^{\pi_1, \pi_2}$, we have also shown $S^{\pi_1, \pi_2} \subseteq R^{\pi_1, \pi_2}$.

Now we show $R^{\pi_1, \pi_2} \subseteq S^{\pi_1, \pi_2}$.
For $w \in \Sigma^*$ let
 \[
  R^{\pi_1, \pi_2}_w := \{(r_1,r_2) \in Q \times Q \mid r_1 \in \supp(\pi_1^w) \text{ and } r_2 \in \supp(\pi_2^w)\}\,.
 \]
We show by induction on the length of~$w$ that
\begin{equation} \label{eq-lem-dist-1-R-2}
 R^{\pi_1, \pi_2}_w \subseteq S^{\pi_1, \pi_2} \text{\quad  for all $w \in \Sigma^*$.}
\end{equation}
The case $|w| = 0$ is trivial.
Let $w \in \Sigma^*$ and $a \in \Sigma$.
Let $(r_1', r_2') \in R^{\pi_1, \pi_2}_{w a}$, i.e., $r_i' \in \supp(\pi_i^w M(a))$ for $i \in \{1,2\}$.
Hence there are $r_1, r_2 \in Q$ with $r_i \in \supp(\pi_i^w)$ and $M(a)(r_i, r_i') > 0$ for $i \in \{1,2\}$.
By the induction hypotheses we have that $(r_1,r_2)$ is reachable from $(q_1, q_2)$ in $G$.
By the definition of~$G$ there is an edge from $(r_1,r_2)$ to $(r_1',r_2')$ in~$G$.
Hence $(r_1',r_2')$ is reachable from $(q_1, q_2)$ in $G$, so $(r_1',r_2') \in S^{\pi_1, \pi_2}$ and \eqref{eq-lem-dist-1-R-2} is proved.
Since $\bigcup_{w \in \Sigma^*} R^{\pi_1, \pi_2}_w = R^{\pi_1, \pi_2}$, we have also shown $R^{\pi_1, \pi_2} \subseteq S^{\pi_1, \pi_2}$.
\end{proof}

}{}

%
%
%


\end{document}